\newcommand\orcidicon[1]{\href{https://orcid.org/#1}{\includegraphics[scale=0.04]{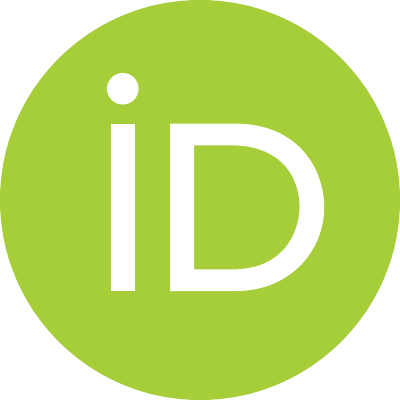}}}
\newcommand{\myParagraph}[1]{{\boldmath \bf \titlecap{#1}.}}
\newcommand{\extended}[2]{#2}
\newcommand{\tradeoff}{competition-collaboration trade-off\xspace}
\newcommand{\node}{agent\xspace}
\newcommand{\nodes}{agents\xspace}
\newcommand{\eg}{\emph{e.g.,}\xspace}
\newcommand{\ie}{\emph{i.e.,}\xspace}
\newcommand{\vs}{\emph{vs.}\xspace}
\newcommand{\Real}[1]{ { {\mathbb R}^{#1} } }
\DeclareMathOperator*{\argmax}{arg\,max}
\DeclareMathOperator*{\argmin}{arg\,min}
\newcommand{\tr}[1]{\mathrm{Tr}\left(#1\right)}
\newcommand{\gauss}{\mathcal{N}}
\newcommand{\E}[2][]{\mathbb{E}_{#1}\left[#2\right]}
\newcommand{\norm}[1]{\left\|#1\right\|^2}
\newcommand{\Enorm}[2][]{\mathbb{E}_{#1}\left[\norm{#2}\right]}
\newcommand{\inv}{^{-1}}
\newcommand{\summat}[1]{\mathcal{B}\left(#1\right)}
\renewcommand{\d}[1]{\mathrm{d}#1}
\newcommand{\one}{\mathds{1}}
\newcommand{\onereg}{\mathds{1}_R}
\newcommand{\ereg}{e_\regSet}
\newcommand{\eregn}{e_{\regSet,n}}
\newcommand{\eregv}{e_{\regSet,v}}
\theoremstyle{plain}
\newtheorem{thm}{Theorem}
\newtheorem{cor}{Corollary}
\newtheorem{prop}{Proposition}
\newtheorem{lemma}{Lemma}
\theoremstyle{definition}
\newtheorem{ass}{Assumption}
\theoremstyle{remark}
\newtheorem{rem}{Remark}
\Crefname{ass}{Assumption}{Assumptions}
\Crefname{prop}{Proposition}{Propositions}
\newcommand{\sensSet}{\mathcal{V}}
\newcommand{\malSet}{\mathcal{M}}
\newcommand{\regSet}{\mathcal{R}}
\newcommand{\xnode}[2]{x_{#1}(#2)}
\newcommand{\xreg}{x_\regSet}
\newcommand{\xmal}{x_\malSet}
\newcommand{\neigh}[1]{\mathcal{N}_{#1}}
\newcommand{\priornode}[1]{\theta_{#1}}
\newcommand{\priorall}{\theta}
\newcommand{\priorallout}{\tilde{\theta}}
\newcommand{\thbar}{\bar{\theta}}
\newcommand{\thbarreg}{\bar{\theta}_\regSet}
\newcommand{\thbarmal}{\bar{\tilde{\theta}}_\malSet}
\newcommand{\Varn}{\Sigma}
\newcommand{\Var}{\widetilde{\Sigma}}
\newcommand{\varnode}[2][\@empty]{\sigma_{#1#2}%
	\ifx\@empty#1 ^2 \else \fi}
\newcommand{\Creg}{C_R}
\newcommand{\selreg}{S_\regSet}
\newcommand{\selmal}{S_\malSet}
\newcommand{\Wreg}{W_\regSet}
\newcommand{\Wmal}{W_\malSet}
\newcommand{\Gram}[1]{\mathcal{W}_{#1}}
\newcommand{\lami}{\lambda_i}
\newcommand{\lam}{\lambda}
\newcommand{\noise}[2][]{n_{#1}(#2)}
\newcommand{\xregbar}{\bar{x}_\regSet}
\newcommand{\sumall}[1]{\sum_{#1\in\sensSet}}
\newcommand{\sumneigh}[2]{\sum_{#1\in\neigh{#2}}}
\newcommand{\sumreg}[1]{\sum_{#1\in\regSet}}
\newcommand{\summal}[1]{\sum_{#1\in\malSet}}
\newcommand{\LB}[1]{{\color{blue}LB: #1}} 
\newcommand{\marginnote}[1]{\marginpar{\scriptsize\vspace{-\baselineskip}\singlespacing\color{red}#1}}
\newcommand{\blue}[1]{{\color{blue}#1}}
\newcommand{\red}[1]{{\color{red}#1}}
\newcommand{\revision}[1]{{#1}}
\newcommand{\review}[1]{#1}
\newcommand{\linkToPdf}[1]{\href{#1}{\blue{(pdf)}}}
\newcommand{\linkToPpt}[1]{\href{#1}{\blue{(ppt)}}}
\newcommand{\linkToCode}[1]{\href{#1}{\blue{(code)}}}
\newcommand{\linkToWeb}[1]{\href{#1}{\blue{(web)}}}
\newcommand{\linkToVideo}[1]{\href{#1}{\blue{(video)}}}
\newcommand{\linkToMedia}[1]{\href{#1}{\blue{(media)}}}
\newcommand{\award}[1]{\xspace} 
\title{Can Competition Outperform Collaboration? \\ The Role of \review{Misbehaving} Agents}
\author{Luca~Ballotta\textsuperscript{\orcidicon{0000-0002-6521-7142}},~\IEEEmembership{Member,~IEEE}, %
	Giacomo~Como\textsuperscript{\orcidicon{0000-0002-2176-9586}},~\IEEEmembership{Member,~IEEE}, %
	Jeff~S.~Shamma\textsuperscript{\orcidicon{0000-0001-5638-9551}},~\IEEEmembership{Fellow,~IEEE}, %
	and~Luca~Schenato\textsuperscript{\orcidicon{0000-0003-2544-2553}},~\IEEEmembership{Fellow,~IEEE}%
	\thanks{This work has been partially supported
		by the Italian Ministry of Education, University and Research (MIUR) through
		the PRIN project no. 2017NS9FEY entitled ``Realtime Control of 5G Wireless Networks'', 
		the PRIN project 2017 ``Advanced Network Control of Future Smart Grids’’,
		and through the initiative ``Departments of Excellence" (Law 232/2016),
		and by the Compagnia di San Paolo.
		The views and opinions expressed in this work are those of the authors and do not necessarily 
		reflect those of the funding institutions.
	}%
	\thanks{Luca Ballotta and Luca Schenato are with the Department of Information Engineering, University of Padova, 35131 Padova, Italy
		(e-mail: ballotta@dei.unipd.it; schenato@dei.unipd.it).}%
	\thanks{Giacomo Como is with the Department of Mathematical Sciences, Politecnico di Torino, Corso Duca degli Abruzzi 24, 10129, Torino, Italy
		(e-mail: giacomo.como@polito.i).}%
	\thanks{Jeff S. Shamma is the Department Head \& Dobrovolny Chair, Industrial and Enterprise Systems Engineering, University of Illinois Urbana-Champaign,
		USA
		(e-mail: jshamma@illinois.edu).}
}
\begin{document}
	\bstctlcite{bib-options}
	\extended{}{\numberwithin{equation}{section}}
	
	\extended{}{
		\begin{textblock}{20}(-2,0.05)
			\footnotesize
			\centering
			\setstretch{1}
			This article has been accepted for publication in the IEEE Transactions on Automatic Control.\\
			Please cite the article as: L. Ballotta, G. Como, J. S. Shamma, and L. Schenato,\\
			``Can Competition Outperform Collaboration? The Role of \review{Misbehaving} Agents,''\\
			IEEE Transactions on Automatic Control, 2023.\\
		\end{textblock}
	}
	
	\maketitle

\begin{abstract}
	We investigate a novel approach to resilient distributed optimization with quadratic costs
	in a \review{multi-agent system} 
	prone to \review{unexpected events} 
	that make \review{some} \nodes misbehave.
	In contrast to commonly adopted filtering strategies,
	we draw inspiration from
	\review{phenomena modeled through the Friedkin-Johnsen dynamics}
	and argue that adding competition to the mix can improve resilience
	in the presence of \review{misbehaving} \nodes.
	Our intuition is corroborated by analytical and numerical results showing that
	\textit{(i)} \review{there exists} a nontrivial trade-off between full collaboration and full competition and
	\textit{(ii)} our competition-based approach can outperform state-of-the-art algorithms based on \review{Weighted} Mean Subsequence Reduced.
	We also study impact of communication topology and connectivity on \review{resilience},
	pointing out insights to robust network design.
	
	\begin{IEEEkeywords}
		\review{Multi-Agent} Systems,
		Resilient consensus,
		\review{Misbehaving} agents,
		Friedkin-Johnsen model.
	\end{IEEEkeywords}
\end{abstract}

\section{Introduction}\label{sec:intro}

\IEEEPARstart{W}{ith} great power comes great responsibility,
and networked systems are powerful indeed.
From smart grids managing energy consumption~\cite{5357331,7063267}
to sensors monitoring vast areas~\cite{6951347},
to autonomous cars for intelligent mobility~\cite{9310743,9233967},
everyday life relies evermore on
control of connected devices. 

While this brings numerous benefits, 
a major drawback is that malicious \nodes can locally 
intrude from any point in the system,
and cause serious damage at global scale. 
Recently, Department of Energy secretary
stated that enemies of the United States can shut down the U.S. power grid,
and it is known that hacking groups around the world have high technological sophistication~\cite{USpowergridattack}.
Cyberattacks hit Italian health care infrastructures during the COVID-19,
disrupting services for weeks~\cite{healthcareItahacking}.
Another concern is accidental failures spreading
from single source nodes. 
Cascading failure damages have notable examples, 
from city-wide electricity blackouts
to denial of service of web applications.
Furthermore,
as new frontiers through massively connected devices in Networked Control Systems are breached,
thanks to powerful communication protocols such as 5G,
this problem will only gain in importance.

\subsection{Related Literature}\label{sec:intro-related-literature}
The problems above have been extensively studied in literature. 
A body of work investigates control techniques to overcome
fragility of specific applications.
Examples are power outage in smart grids~\cite{6733531,8613860},
cascading failures in cyber-physical systems~\cite{6915846,8070359,7438924,6911943},
denial of service~\cite{7289347,8629941},
robot gathering~\cite{doi:10.1137/050645221},
and distributed estimation~\cite{4787093},
to name a few.
From a methodological perspective, 
control and optimization literature mostly focuses on robustness
of distributed algorithms and control protocols
to a fraction of misbehaving \nodes.
This approach can tailor either intentionally malicious \nodes,
such as cyber-attackers,
or accidental faults due to,
\eg hardware damage.
A fundamental subclass of such approaches is \emph{resilient consensus},
aimed to enforcing consensus of normally behaving (or \textit{regular})
\nodes in the face of unknown adversaries.
The consensus problem
has been deeply studied in the past decades~\cite{XIAO200733} and underlies
a plethora of application domains.
In particular, \textit{average consensus} is a cornerstone
in distributed estimation~\cite{4787093}
and optimization~\cite{9241497,8015179,7134753},
management of power grids~\cite{8716798}, 
distributed Federated Learning~\cite{8917592,MLSYS2019_bd686fd6},
among others.
Unfortunately,
the standard consensus protocol is fragile 
and misbehaving \nodes can arbitrarily deviate the system trajectory. 
To tame this issue,
the most common approaches rely on the
filtering strategy referred to as \textquotedblleft Mean Subsequence Reduced" (MSR),
whereby \nodes discard suspicious messages (largest and smallest values) from updates~\cite{262588}.
The pioneering paper~\cite{6481629} 
introduced a weighted version (W-MSR) and defined $ r $-\textit{robustness} of graphs,
a suitable index that enables theoretical guarantees for resilient consensus based on W-MSR.
Among the many variants and adaptations of W-MSR,
~\cite{DIBAJI201523} studies resilient control for double integrators,
~\cite{WANG20203409} tackles mobile adversaries,
~\cite{8795564} focuses on leader-follower framework,
~\cite{SHANG2020109288} targets nonlinear systems with state constraints,
\review{~\cite{10018183} extends the notion of $r$-robustness to time-varying graphs,}
and~\cite{7447011,9137648,5605238} consider generic cost functions
to achieve resilience in general distributed optimization.

Other approaches in literature do not filter information from neighbors,
but explore enhanced capabilities of regular \nodes.
For example,
~\cite{8814959} uses a buffer to store all values received from other \nodes and
replaces the thresholding mechanism with a voting strategy
followed by dynamical updates,
~\cite{8186231} studies algorithmic robustness enabled by trusted \nodes,
~\cite{9252111} proposes dynamically switching update rule for continuous-time double integrators,
and~\cite{8798516,10.1117/12.918927,9468419} use stochastic or heuristic trust scores to filter out potentially malicious transmissions,
providing probabilistic bounds on detection, convergence, or deviation from average consensus.
While such approaches may overcome limitations of MSR-based strategies,
they usually require either stronger assumptions on the network (\eg trusted \nodes)
or burdening local computation or storage resources.

\subsection{\titlecap{novel contribution}}

Despite the success of MSR-based strategies, 
a critical point is dependence of theoretical guarantees on $ r $-robustness of the underlying graph,
which allows regular \nodes to reach resilient consensus if such an index is large enough.
In fact,
it is difficult to characterize the steady-state behavior of agents
if some minimal robustness is not met. 
Even though algorithms might practically work,
comprehensive theoretical guarantees are still lacking,
and also,
some applications require more conservative but safer approaches.
In particular, 
while in some cases \nodes may just agree on a common value,
other tasks require \emph{average consensus} to succeed.
Thus,
we depart from classical filtering strategies and
seek a framework for resilience that can offer theoretical guarantees in a broader sense.

\begin{figure}
	\centering
	\includegraphics[width=0.7\linewidth]{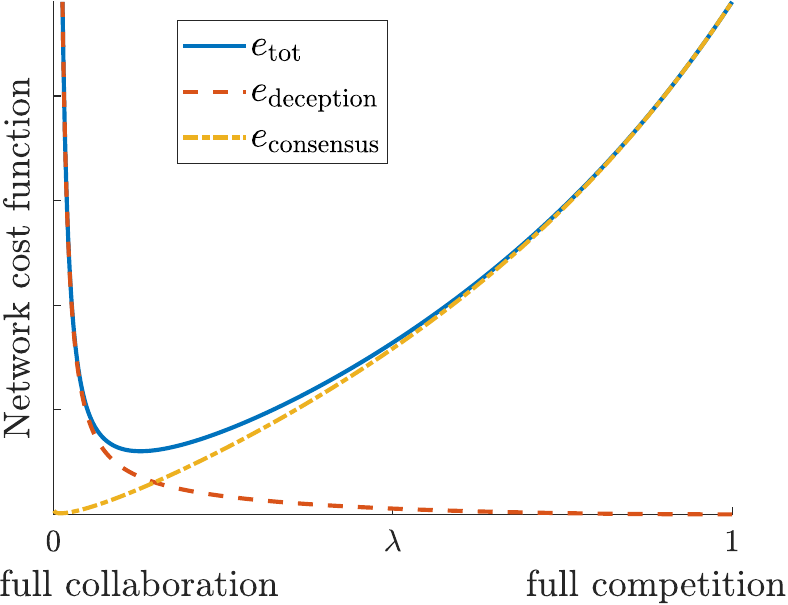}
	\caption{Competition \emph{vs.} collaboration in distributed quadratic optimization.
		The \revision{global cost} $ e_{\text{tot}} $ is the sum of two contributions
		\revision{that reflect two contrasting attitudes of regular \nodes}:
		\revision{$ e_{\text{deception}} $ is caused by (erroneously) trusting misbehaving \nodes,
			which \review{makes them drift away from the nominal average},
			while $ e_{\text{consensus}} $ is due to the \review{competition among regular \nodes},
			\review{which mitigates misbehaviors
			but also prevents regular \nodes from reaching a consensus.}
			The tunable parameter $ \lam \in [0,1] $ allows regular \nodes to smoothly transition from 
			\textit{full collaboration} \review{($\lam=0$)},
			where they fully trust all \nodes in the network,
			to \textit{full competition} \review{($\lam=1$)},
			where they trust \review{no other \node},
			producing a rich range of behaviors at local and global scale.
		}
	}
	\label{fig:trade-off-coll-comp}
\end{figure}

Towards this goal, we set the stage with two key moves.
Firstly,
\review{rather than finding conditions that enforce a consensus among regular \nodes,
	which only indicates if a system \textit{is resilient},
	we aim to measure the \textit{level of resilience},
	which we evaluate through}
the cost of a distributed optimization problem with quadratic costs.
Secondly,
\review{we aim to modify the original problem to make it robust to misbehaving \nodes
rather than adapting a consensus protocol.}
Stepping forward,
we propose an update rule based on the celebrated Friedkin-Johnsen (FJ) dynamics~\cite{FJdynamics}
to enhance resilience \review{of the addressed distributed optimization problem.}
The key feature of the FJ dynamics is a tunable parameter $ \lam \in\ [0,1] $
that allows to smoothly transition from \review{the regime of} \emph{full collaboration} ($\review{\lambda=0}$),
where each regular \node equally trusts all \nodes,
to \review{the regime of} \emph{full competition} ($\review{\lambda=1}$),
where each regular \node regards all others as adversaries.
\review{We refer to the regime with $\lambda\in(0,1]$ as \textit{competition-based}
	because regular \nodes are forced to (partially) mistrust the others.}
This approach allows us to study \review{resilience variations} that arise
from different choices of \nodes that can trust their neighbors or not,
a choice that turns out to be crucial if adversaries are present.
In fact, we observe a fundamental \review{performance trade-off that we name}
\emph{\tradeoff}:
in general,
\textit{\review{the optimal resilient strategy is hybrid,
	namely each regular \node should partially compete with its neighbors}},
as depicted in~\autoref{fig:trade-off-coll-comp}.
The global cost (solid blue)
is the sum of two conflicting contributions
that represent deception due to collaboration with misbehaving \nodes (dashed red)
and inefficiency caused by competition against regular \nodes (dashed-dotted yellow).
\review{To achieve analytical intuition about such a \tradeoff,
we leverage the \textit{social power},
a tool drawn from opinion dynamics
that sheds light on the twofold effect of the parameter $\lambda$ used to instantiate the FJ dynamics.}


After \review{analytically} characterizing the proposed competition-based protocol,
we fix the update rule and shift attention to the network
in order to assess how it impacts \review{resilience} of regular \nodes. 
\review{In particular, we numerically show how network connectivity can mitigate misbehavior
and how the performance varies as the network gets sparser or less balanced.}
In fact, we heuristically observe that not only high connectivity,
but also degree balance across \nodes is useful to tame unknown adversaries,
that intuitively can exploit highly connected areas to quickly spread damage at global level.

Besides new results,
this article extends the preliminary conference version~\cite{ballottaCDC22compColl} in two ways.
Firstly,
we consider a more general prior distribution of the observations of \nodes.
Secondly,
we compare our proposed strategy with both standard W-MSR~\cite{6481629}
and recently proposed SABA~\cite{8814959}.

\subsection{\titlecap{organization of the article}}

We motivate average consensus for distributed optimization in~\autoref{sec:setup},
\review{model} a class of adversaries in~\autoref{sec:mal-agents}, 
\review{and introduce the performance metric used to quantify resilience in~\autoref{sec:performance-metric}.}
In~\autoref{sec:resilient-strategy},
we propose our competition-based protocol:
\review{we introduce the FJ dynamics in~\autoref{sec:FJ-dynamics}},
\review{compute the cost function in~\autoref{sec:computation-consensus-error},}
and formally characterize the cost function and its minimizer
in~\cref{sec:opt-lam-nontrivial,sec:opt-lam-vs-d}.
\review{In~\autoref{sec:numerical-tests},
we report numerical tests that support our analytical intuition}.
Then,
in~\autoref{sec:trade-off},
we offer analytical insight on the \tradeoff using the notion of social power.
In~\autoref{sec:network-optimization},
we \review{numerically explore} the impact of the communication network on \review{resilience}.
To \review{evaluate} our approach, 
we perform simulations in~\autoref{sec:literature-comparison}
and show that it can outperform MSR-based methods.
We conclude by addressing potential avenues for future research in~\autoref{sec:conclusion}.

\section{\titlecap{setup and problem formulation}}\label{sec:setup}

We consider a \review{multi-agent} system composed of $ N $ {\nodes}
\review{labeled as} the set $ \sensSet = \{1,\dots,N\} $.
Each \node $ i\in\sensSet $ carries local information encoded by an {\emph{observation}} $ \priornode{i} \in\Real{}$
and a variable \textit{state} $ x_i\in\Real{} $.
For notation convenience,
we stack all states and observations in the vectors $ x\in\Real{N} $ and $\priorall\in\Real{N}$,
respectively.

Within the network,
some \nodes behave according to a control task at hand,
while others cannot be controlled and may \review{deviate from the task}.
We call the former \nodes \textit{regular} and the latter \nodes \textit{\review{misbehaving}}.
Because the \review{misbehaving \nodes} cannot be involved in cooperative tasks though their uncontrolled nature,
\review{we consider a distributed optimization problem involving only the regular \nodes.
We assume that each {regular \node} 
wishes to adjust its state so as to minimize a quadratic mismatch among all observations,}
%
\begin{equation}\label{eq:agent-cost-regular}
	f_{\text{local}}(x_i) \doteq \sumreg{j} \left(x_i - \priornode{j}\right)^2, \quad i\in\regSet,
\end{equation}
where $ \regSet \subseteq \sensSet $ gathers all {regular \nodes}.
By straightforward calculations,
\eqref{eq:agent-cost-regular} can be rewritten as
\begin{equation}\label{eq:agent-cost-regular-rewritten}
	\begin{aligned}
		f_{\text{local}}(x_i) &= R\left(x_i - \thbarreg\right)^2 - R\review{\thbarreg^2} + \sumreg{j}\priornode{j}^2 \\
	\end{aligned}
\end{equation}
where $ R\doteq|\regSet| $ and
$ \thbarreg $ is the average of observations $ \{\priornode{i}\}_{i\in\regSet} $.


The distributed optimization task is then given by
\begin{equation}\label{eq:network-cost-regular}
	\review{\argmin_x \dfrac{1}{R}\sumreg{i}f_\text{local}(x_i) 
		 					= \argmin_x\sumreg{i}\left(x_i - \thbarreg\right)^2,}
\end{equation}
%
which is solved if and only if all regular \nodes reach \emph{average consensus} among them,
\ie $ x_i = \thbarreg $ for all $ i\in\regSet $.

\review{In the nominal scenario where all \nodes are regular ($\sensSet=\regSet$),
	the cost~\eqref{eq:network-cost-regular} can be minimized
	via the \textit{consensus dynamics} (or \textit{consensus protocol})
	$x(k+1) = W^ox(k)$
	where $x(0)\stackrel{}{=}\theta$ and
	$W^o$ is a doubly stochastic irreducible matrix that leads \nodes to average consensus.
	\review{Interpreting $W^o$ as a (weighted) communication matrix,
	the consensus dynamics allows \node $j$ to communicate its state to \node $i$ if and only if $W_{ij} > 0$.}
	
	
	However,
	the standard consensus protocol easily fails
	in the presence of misbehaving \nodes~\cite{6481629}. 
	We next
	introduce a misbehavior model
	that disrupts the nominal protocol. 
}

\subsection{\review{Misbehaving} Agents}\label{sec:mal-agents}

Misbehaving \nodes follow state trajectories with no relation to optimization task~\eqref{eq:network-cost-regular}
and broadcast potentially misleading information to neighbors.
We denote the subset of misbehaving \nodes by $ \malSet $
with $ M\doteq\lvert\malSet\rvert $,
$ \sensSet = \malSet \cup \regSet $,
and $ \malSet \cap \regSet = \emptyset $. 
\review{Also,
without loss of generality,
we label the \nodes as $ \regSet = \{1,\dots,R\} $ and $ \malSet = \{R+1,\dots,N\} $.
The vectors $\xreg\in\Real{R}$ and $\xmal\in\Real{M}$ stack the states of regular and misbehaving \nodes,
respectively,
with $x^\top = [\xreg^\top \; \xmal^\top]$.}

\review{To address a general scenario and remove dependence on the specific values of observations,
	we assume that these are drawn from a prior distribution.}

\review{\begin{ass}[Distribution of observations]\label{ass:prior-distribution}
		Observations $\{\priornode{i}\}_{i\in\sensSet}$ are distributed as random variables with mean $\E{\priorall}=0$
		and covariance matrix $ \Varn\doteq\E{\priorall\priorall^\top}\succ0 $. 
		\revision{We denote $ \Sigma_{ii} = \varnode{i} $ and $ \Sigma_{ij} = \varnode[i]{j} \, \forall i\neq j $.}
\end{ass}

While the standard consensus and~\cref{ass:prior-distribution} are suited to an ideal scenario,
misbehaving \nodes may disrupt the task~\eqref{eq:network-cost-regular}.
In the following,
we assume that
misbehaving \nodes constantly transmit noisy versions of their observations:
\begin{equation}\label{eq:malicious-dynamics}
	\xnode{m}{k} = \priornode{m} + v_m + \noise[m]{k}, \quad \forall m\in\malSet.
\end{equation}
We refer to the constant input $v_m$ as \textit{(deception) bias} and to the varying input $\noise[m]{k}$ as \textit{(deception) noise}.
In words,
the deception bias $v_m$ makes
the observation $\priornode{m}$ of the misbehaving \node $m$ an outlier w.r.t. the expected range of values of observations as per~\cref{ass:prior-distribution}.
Conversely,
the deception noise $\noise[m]{k}$
hides the true state of the misbehaving \node from its neighbors,
akin purposely injected measurement noise.

\begin{ass}[Misbehavior model]\label{ass:mal-node-dynamics}
	We stack 
	biases in the vector $ v\in\Real{M} $ 
	and noises in the vector $\noise{k}\in\Real{M}$.
	Further,
	we set their statics as 
	$\E{v} = 0$,
	$\E{vv^\top} = V\succeq0$,
	$ \E{v\priorall^\top} = 0$, 
	$\E{\noise{k}} = 0$, 
	$\E{\noise{k}n^\top(h)} = \delta_{kh}Q\succeq0$, 
	and $\E{\noise{k}\priorall^\top} = 0 \ \forall k,h\ge0$,
	where $\delta_{kh}=1$ if $k=h$ and $\delta_{kh}=0$ otherwise.
\end{ass}
}

\begin{rem}[Misbehavior \vs intelligent attacks]\label{rem:misbehavior}
	\cref{ass:mal-node-dynamics} is consistent with a portion of the literature on resilient consensus,
	where algorithms are tested against constant or drifting \review{misbehaving} \nodes
	that steer their neighbors far off the nominal consensus~\cite{8798516,8814959,WANG20203409,9468419}.
	\review{In our case,
	misbehaving \nodes are stubborn on average
	but behave in a less trivial (noisy) way.}
	On the other hand,
	\review{smart (malicious) adversaries}
	may need to be contrasted by \review{sophisticated} strategies~\cite{8567999,8788623}.
	This case is outside the scope of this article,
	where we \review{explore competition as a tool to enhance resilience},
	and we defer \review{a comprehensive study with intelligent attacks} to future work.
\end{rem}
		\review{

\subsection{Performance Metric}\label{sec:performance-metric}

In light of problem~\eqref{eq:network-cost-regular} 
and assuming that the states of regular \nodes are updated by a control protocol overtime,
we use the following performance metric to measure the resilience of the system,
which we refer to as \emph{(average) consensus error}:

\begin{equation}\label{eq:cons-error-regular}
	\ereg \doteq \lim_{k\rightarrow+\infty}\E{\sumreg{i}\left(x_i(k) - \thbarreg\right)^2}.\tag{CE} 
\end{equation}

The error~\eqref{eq:cons-error-regular}
coincides with the objective cost of the optimization problem~\eqref{eq:network-cost-regular}
(up to additive constants that depend only on the observations)
averaged over the stochastic elements within the system dynamics,
such as observations $\priorall$ of all \nodes
and deception biases $v$
and noises $\noise{k}$ of misbehaving \nodes.

While the standard consensus protocol achieves $\ereg=0$ in the nominal scenario where all \nodes reach average consensus,
the presence of unknown misbehaving \nodes makes $\ereg$ grow,
degrading the collaborative task~\eqref{eq:network-cost-regular}.
In the following section,
we propose an update protocol that makes regular \nodes more resilient
by decreasing the consensus error $\ereg$ and hence improving the performance associated with the task~\eqref{eq:network-cost-regular}.
	\extended{\newpage}{}

\section{\titlecap{resilient average consensus}}\label{sec:resilient-strategy}

\subsection{\review{The Friedkin-Johnsen Dynamics}}\label{sec:FJ-dynamics}

	Because the classical consensus is fragile to misbehaving \nodes,
	we look for alternative strategies to minimize~\eqref{eq:cons-error-regular}.

	To this aim,
	\review{we step back to the optimization problem~\eqref{eq:network-cost-regular}
		and search for a way to make it more robust to unexpected behaviors.}
	In particular,
	\review{we modify the local problems associated with each regular \nodes $i\in\regSet$ 
		by integrating the nominal weight matrix $W^o$
		and adding a regularization term that penalizes deviations from the local observation:
	}
	%
	\begin{equation}\label{eq:utility-FJ}
		\review{\tilde{f}_{\text{local}}(x_i)} = \review{\lam}\left(x_i-\priornode{i}\right)^2 + (1-\review{\lam})\sumneigh{j}{i}W_{ij}^o\left(x_i-x_j\right)^2.
	\end{equation}
	\begin{ass}[\review{Nominal weights}]
		\review{The matrix $W^o$ is irreducible,
		row stochastic,
		and $W_{ii}^o=0, i\in\sensSet$ (no self-loops).}
	\end{ass}
	The parameter $ \review{\lam}\in[0,1] $ in~\eqref{eq:utility-FJ}
	makes
	the $ i $th \node anchor to its observation $\priornode{i}$,
	\review{so that large deviations of its state $x_i$ from $\priornode{i}$ are discouraged.}
	\review{We then let each agent greedily minimize the modified cost~\eqref{eq:utility-FJ}} at step $ k + 1 $, 
	which yields the
	celebrated Friedkin-Johnsen (FJ) dynamics~\cite{FJdynamics}:
	\begin{equation}\label{eq:FJ-dynamics}
		\xnode{i}{k+1} = \review{\lam}\priornode{i} + (1-\review{\lam})\sumneigh{j}{i}W_{ij}^o\xnode{j}{k}.\tag{FJ}
	\end{equation}
	\review{with $ \xnode{i}{0} = \priornode{i}$.
		We interpret the rule above as a modified consensus protocol
		where the \nodes do not fully align with neighbors
		but also \textit{compete} by tracking their own observation.
		In particular,
		we call the parameter $\lam$ as \textit{competition},
		referring to the case $ \lam = 0 $ (equivalent to the consensus protocol) as \emph{full collaboration}
		and to the case $ \lam = 1 $ as \emph{full competition}.
		}
	
	\review{While the dynamics~\eqref{eq:FJ-dynamics} is suboptimal if all agents are collaborative,
		because it prevents them from reaching a consensus if $\lam>0$,
		we use it to make regular \nodes resilient
		to unknown misbehaving \nodes. 
		Intuitively,
		anchoring a regular \node $i\in\regSet$ to its observation $\priornode{i}$ 
		prevents the \node from being arbitrarily dragged away by
		misleading values coming from misbehaving \nodes. 
		In particular,
		the latter \nodes obey~\eqref{eq:malicious-dynamics}
		with no relation to the protocol~\eqref{eq:FJ-dynamics} or nominal weights $W^o$.}
		
	In the following,
	we study \review{how} the \review{protocol}~\eqref{eq:FJ-dynamics} \review{improves system resilience.}
	In fact,
	tuning $ \lam $ within the interval $ [0,1] $
	originates a nontrivial \emph{\tradeoff}: 
	\review{
		what is the optimal competition $\lam$
		that makes regular \nodes most resilient with respect to task~\eqref{eq:network-cost-regular}?}
	\review{Exploring} this trade-off \review{under} misbehaving \nodes
	is the main matter of investigation of this article.
	\review{To this aim,
	we regard the consensus error 
	as function of the competition:
	this allows us to perform analysis and achieve insight about minimization of $\ereg(\lam)$.}

	\review{\begin{rem}[Connections with game theory and opinion dynamics]
			The FJ dynamics can be given the following game-theoretic interpretation.
		The cost~\eqref{eq:utility-FJ} is interpreted in games as \textit{cognitive dissonance},
		whereby a rational decision-maker gets incentive both in aligning with the neighbors and in following a local rule.
		Also,
		the function~\eqref{eq:utility-FJ} with $ \lam = 0 $ reduces to the utility used in~\cite{4814554}
		where the authors analyze the consensus protocol from a game-theoretic perspective.
		In opinion dynamics,
		the FJ dynamics is typically used to model \textit{prejudice},
		whereby the opinion of an agent is biased towards a personal belief despite interactions with others.
	\end{rem}	
	\begin{rem}[Competition for resilience]
		While most works in the literature regard $\lam$ as a model parameter,
		\textit{we purposely design $\lam$} in~\eqref{eq:FJ-dynamics}.
		The intuition behind this choice,
		seemingly counterintuitive for a collaborative task,
		is that introducing some competition among \nodes
		can mitigate behaviors that are unpredictable at design stage:
		rather than addressing the \textit{binary property} ``consensus is (not) achieved'' like typical works on resilient consensus,
		we take a broader viewpoint and interpret resilience as a \textit{real quantity}
		measured through the cost~\eqref{eq:cons-error-regular}.
	\end{rem}
}

	\begin{rem}[Heterogeneous \review{competition}]
		While we focus on a single parameter $ \lam $
		\review{for the sake of analysis,
			the general FJ model with a different parameter $\lami$ for each agent $i\in\sensSet$
		makes the analysis challenging	but does not affect the fundamental system behavior.}
		Designing a parameter $ \lami $ for each regular agent $i\in\regSet$
		to improve performance even further is an important topic,
		whose investigation is left to future work.
	\end{rem}

		\review{

\subsection{\review{Computation of the Consensus Error}}\label{sec:computation-consensus-error}

We now compute the error~\eqref{eq:cons-error-regular} with the steady state induced by
the dynamics~\eqref{eq:FJ-dynamics}. 
To this aim,
it is convenient to 
write the network dynamics associated with all regular \nodes.

First,
we highlight the interactions of regular and misbehaving \nodes
by partitioning the nominal weight matrix as
\begin{equation}\label{eq:nominal-W-partition}
	W^o = \left[\begin{array}{ c | c }
		\Wreg & \Wmal \\ 
		\hline
		* & *
	\end{array}\right]
	\ \Wreg \in\Real{R\times R}, \Wmal \in\Real{R\times M}.
\end{equation}
Then,
the dynamics of regular \nodes can be written as follows:
\begin{equation}\label{eq:system-dynamics}
	\begin{gathered}
		\xreg(k+1) = A\xreg(k) + B\review{\xnode{\malSet}{k}} + \lam\theta_\regSet \\
		A \doteq (1-\lam)\Wreg, \quad B \doteq (1-\lam)\review{\Wmal}.
	\end{gathered}
\end{equation}
If $ (1-\lam) \Wreg $ is Schur stable,
which happens if the graph is connected,
at steady state the dynamics~\eqref{eq:system-dynamics} induce the following distribution
w.r.t. the deception noises $\noise{k}$:
\begin{equation}\label{eq:FJ-dynamics-steady-state-components}
	\begin{gathered}
		\xregbar \doteq \lim_{k\rightarrow+\infty} \E[n]{\xreg(k)}, \quad
		P \doteq \lim_{k\rightarrow+\infty} \mathrm{Var}_n\left(\xreg(k)\right). 
	\end{gathered}
\end{equation}
Defining $ \selreg \doteq \left[ I_R \; | \; 0 \right] $,
the quantities above amount to
\begin{gather}
	\xregbar = \selreg L\left(\priorall+\selmal^\top v\right), \quad L \doteq \left(I-(1-\lam)W\right)^{-1}\lam\label{eq:L}\\
	P = (1-\lam)^2\Wreg P\Wreg^\top + (1-\lam)^2\Wreg Q\Wreg^\top\label{eq:P-lyapunov},
\end{gather}
where the matrix $W$ encodes the actual interactions (weights) followed within the network
and is defined as follows:
\begin{equation}\label{eq:actual-W-partition}
	W = \left[\begin{array}{ c | c }
		\Wreg & \Wmal \\ 
		\hline
		0 & I_M
	\end{array}\right].
\end{equation}
In particular,
$W$ means that the average state of a misbehaving \node is affected by no other \node,
according to~\eqref{eq:malicious-dynamics}.
The matrix $ L $ is row-stochastic with algebraic multiplicity of the eigenvalue $1$ equal to $M+1$
and does not induce a consensus.

Let 
$\E[x|y]{z} \doteq \E[x]{z|y}$
and let $\one\in\Real{R}$ denote the vector of all ones in $\Real{R}$.
From~\eqref{eq:cons-error-regular} and~\eqref{eq:system-dynamics}--\eqref{eq:P-lyapunov},
it follows:
\begin{equation}\label{eq:cons-error-explicit}
	\begin{aligned}
		\ereg 	&= \lim_{k\rightarrow+\infty}\Enorm[\priorall,v,n]{\xreg(k) - \one\thbarreg} \\
				&= \lim_{k\rightarrow+\infty}\E[\priorall,v]{\E[n]{\norm{\xreg(k) - \one\thbarreg}\big|\priorall,v}}\\
				&= \Enorm[\priorall,v]{\xregbar - \one\thbarreg} + \lim_{k\rightarrow+\infty}\mathrm{var}_n\left(\xreg(k)-\one\thbarreg\right).
	\end{aligned}
\end{equation}
Then,
standard calculations allow us to rewrite the consensus error as follows:
\begin{equation}\label{eq:cons-error-explicit-1}
	\ereg(\lam) = \eregv(\lam) + \eregn(\lam) + \kappa,
\end{equation}
where $\kappa$ does not depend on $\lambda$
and
\begin{equation}\label{eq:cons-error-explicit-2}
	\eregv(\lam) \doteq \tr{\Var E^\top E}, \quad 
	\eregn(\lam) \doteq \tr{P},
\end{equation}
where we define
$ \Var = \Varn + \selmal\selmal^\top V$,
$E=\selreg L - \Creg\selreg$,
and
$ \Creg\doteq\frac{1}{R}\one\one^\top $.
The expression~\eqref{eq:cons-error-explicit-1}
highlights that the two features of the misbehavior modeled in~\cref{ass:mal-node-dynamics}
generate two different contributions to the consensus error.
The error term $\eregv$ is caused by the biased observations of misbehaving \nodes $\priornode{} + v$
that are constantly injected into the dynamics~\eqref{eq:system-dynamics}. 
Instead,
the error term $\eregn$ is produced by the deception noises $\noise{k}$ that 
make the steady state drift away.

\begin{lemma}[Drift \vs competition]\label{lem:P-monotonic}
	The error term $\eregn(\lam)$ is strictly decreasing with $\lam$
	and $\eregn(1)=0$.
\end{lemma}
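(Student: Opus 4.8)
The plan is to solve the discrete Lyapunov equation~\eqref{eq:P-lyapunov} explicitly as a power series in $\lam$ and then read off monotonicity term by term. Since the graph is connected, $(1-\lam)\Wreg$ is Schur stable for every $\lam\in[0,1]$, so~\eqref{eq:P-lyapunov} has the unique solution
\begin{equation*}
	P(\lam) = \sum_{t=0}^{+\infty}(1-\lam)^{2(t+1)}\,\Wreg^{\,t}\bigl(\Wreg Q\Wreg^\top\bigr)(\Wreg^\top)^{t},
\end{equation*}
the series converging geometrically (uniformly in $\lam\in[0,1]$, since $(1-\lam)^2\le1$ and $\|\Wreg^{\,t}\|$ decays geometrically). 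Taking the trace and using its cyclic invariance gives $\eregn(\lam)=\tr{P(\lam)}=\sum_{t\ge0}(1-\lam)^{2(t+1)}c_t$, where $c_t\doteq\tr{Q\,(\Wreg^{\,t+1})^\top\Wreg^{\,t+1}}$ is the trace of the product of the two positive semidefinite matrices $Q\succeq0$ (\cref{ass:mal-node-dynamics}) and $(\Wreg^{\,t+1})^\top\Wreg^{\,t+1}\succeq0$, hence $c_t\ge0$.

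Next I would argue monotonicity termwise: for each fixed $t$ the map $\lam\mapsto(1-\lam)^{2(t+1)}$ is nonincreasing on $[0,1]$, strictly so on any subinterval where $c_t>0$, and a pointwise-convergent sum of nonincreasing functions is nonincreasing (for $\lam_1<\lam_2$ one has $\eregn(\lam_1)-\eregn(\lam_2)=\sum_t\bigl[(1-\lam_1)^{2(t+1)}-(1-\lam_2)^{2(t+1)}\bigr]c_t\ge0$, and the sum is strictly positive as soon as one $c_t$ is). The endpoint value is immediate: at $\lam=1$ every factor $(1-\lam)^{2(t+1)}$ vanishes, so $\eregn(1)=\tr{P(1)}=0$ (equivalently, $P=0$ solves~\eqref{eq:P-lyapunov} at $\lam=1$).

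The only delicate point — the one I expect to single out — is the \emph{strict} decrease, since by the above $\eregn$ is strictly decreasing iff at least one $c_t>0$, i.e.\ iff $\Wreg Q\Wreg^\top\neq0$; I would record that this holds under nondegeneracy of the deception noise ($Q\succ0$) together with $\Wreg\neq0$ (at least two regular \nodes communicate), in which case already $c_0=\tr{Q\,\Wreg^\top\Wreg}>0$, and outside this regime $\eregn$ is identically zero. A differential argument gives the same conclusion without the series: differentiating~\eqref{eq:P-lyapunov} with respect to $\alpha\doteq(1-\lam)^2$ yields $\partial_\alpha P=(I-\alpha\mathcal{L})^{-1}\bigl(\Wreg P\Wreg^\top+\Wreg Q\Wreg^\top\bigr)$ with $\mathcal{L}(X)\doteq\Wreg X\Wreg^\top$ a positive linear map of spectral radius $<1/\alpha$; since $(I-\alpha\mathcal{L})^{-1}=\sum_{k\ge0}\alpha^k\mathcal{L}^k$ preserves the positive semidefinite cone, $\partial_\alpha P\succeq\Wreg Q\Wreg^\top\succeq0$, whence $\partial_\lam\eregn=-2(1-\lam)\tr{\partial_\alpha P}\le0$, with strict inequality for $\lam\in[0,1)$ whenever $\Wreg Q\Wreg^\top\neq0$.
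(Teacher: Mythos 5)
Your proof is correct and reaches the same conclusion by a genuinely different route. The paper proves \cref{lem:P-monotonic} by applying the implicit function theorem to the scalar equations $g_i(\lam,P_{ii})=0$ for each diagonal entry, computing $\partial g_i/\partial P_{ii}=1-(1-\lam)^2W_{ii}^2$ and $\partial g_i/\partial\lam$ directly and concluding $\d{P_{ii}}/\d{\lam}<0$; you instead solve the Lyapunov equation explicitly as a Neumann series in $\alpha=(1-\lam)^2$ and read off monotonicity term by term. Your version buys two things. First, it sidesteps a weak point of the entrywise IFT argument, which treats the off-diagonal entries of $P$ appearing inside $[\Wreg P\Wreg^\top]_{ii}$ as if they did not also depend on $\lam$ (the coupled system really calls for the IFT on the full matrix $P$, or for exactly the kind of operator-cone argument you give with $(I-\alpha\mathcal{L})^{-1}$). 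Second, the series form makes explicit the nondegeneracy needed for \emph{strict} decrease --- at least one $c_t>0$, i.e.\ a nontrivial deception noise actually reaching the regular agents --- which the paper's proof also needs (its claimed strict inequality in the derivative fails if the numerator $[\Wreg P\Wreg^\top]_{ii}+\tilde{Q}_{ii}$ vanishes) but leaves implicit. One cosmetic remark: you copied the source term $\Wreg Q\Wreg^\top$ from~\eqref{eq:P-lyapunov}, but since $Q\in\Real{M\times M}$ and $\Wreg\in\Real{R\times R}$ that expression is dimensionally inconsistent; the intended source is $\tilde{Q}=\Wmal Q\Wmal^\top$ as in the paper's own appendix, and substituting it (so $c_t=\tr{Q\,\Wmal^\top(\Wreg^{\,t})^\top\Wreg^{\,t}\Wmal}\ge0$) leaves your argument unchanged.
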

\begin{proof}
	See~\cref{app:proof-P}.
\end{proof}

In words,
\cref{lem:P-monotonic} implies that
setting $\lambda>0$ makes regular \nodes more resilient to the deception noise 
as opposed to the standard consensus protocol. 
This observation relates to~\cite{comoRobustness2015tnse}
where the authors observe that even small perturbations of a row-stochastic matrix $W$
can result in large norm of the matrix difference and
change of the Perron-Frobenius eigenvector.

\subsection{\review{The Competition-Collaboration Trade-off}}\label{sec:opt-lam-nontrivial}

\review{To study how our proposed approach performs in the presence of misbehaving \nodes,
	we first confront the two extreme cases of full collaboration and full competition
	to see when the former approach should be ruled out by default.}

\begin{prop}[Full competition \vs full collaboration]\label{prop:cons-err-mal}
	In the presence of misbehaving \nodes,
	the dynamics~\eqref{eq:FJ-dynamics} with $ \lam = 1 $
	yields a smaller error than with $ \lam=0 $
	if and only if
	\begin{multline}\label{eq:full-comp-better-full-coll-condition}
		\review{\dfrac{M^2}{R}\eregn(0) + \tr{V}} \ge  \dfrac{M^2}{R}\tr{\Varn_{11}} - \dfrac{2M^2}{R^2}\summat{\Varn_{11}} \\
		+ \dfrac{2M}{R}\summat{\Varn_{12}} -\summat{\Varn_{22}},
	\end{multline}
	where 
	$ \summat{A} = \sum_{i,j} A_{ij} $.
\end{prop}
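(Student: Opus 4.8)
The plan is to evaluate the consensus error $\ereg(\lam)$ at the two extreme values $\lam=0$ and $\lam=1$ using the decomposition $\ereg(\lam) = \eregv(\lam) + \eregn(\lam) + \kappa$ from~\eqref{eq:cons-error-explicit-1}, and then compare. Since $\kappa$ is independent of $\lam$, the inequality $\ereg(1) \le \ereg(0)$ is equivalent to $\eregv(1) + \eregn(1) \le \eregv(0) + \eregn(0)$. By~\cref{lem:P-monotonic} we already know $\eregn(1) = 0$, so the right-hand side of this comparison collapses and the claim reduces to showing $\eregv(1) \le \eregv(0) + \eregn(0)$, which I expect to match~\eqref{eq:full-comp-better-full-coll-condition} after expanding both sides.

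**Computing the two extreme cases.**

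First I would compute $L$ at $\lam=1$ and $\lam=0$ using~\eqref{eq:L}. At $\lam=1$, $L = I_N$, so $E = \selreg L - \Creg\selreg = \selreg - \Creg\selreg = [\,I_R - \Creg \;\big|\; 0\,]$; hence $E^\top E$ is the block-diagonal matrix with top-left block $(I_R-\Creg)^2 = I_R - \Creg$ (using $\Creg^2=\Creg$) and zeros elsewhere. Then $\eregv(1) = \tr{\Var E^\top E} = \tr{(\Varn_{11} + V\cdot 0)(I_R - \Creg)}$—wait, I must be careful: $\Var = \Varn + \selmal^\top\selmal V$ lives in $\Real{N\times N}$ with the $V$-term in the misbehaving block, so $\tr{\Var E^\top E}$ picks out only the $\Varn_{11}$ block, giving $\eregv(1) = \tr{\Varn_{11}(I_R-\Creg)} = \tr{\Varn_{11}} - \frac{1}{R}\summat{\Varn_{11}}$. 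Next, at $\lam=0$ the limit $L = \lim_{\lam\to 0}(I-(1-\lam)W)^{-1}\lam$ is the projection onto the invariant subspace of $W$; since $W$ has the block structure~\eqref{eq:actual-W-partition} with $\Wreg$ sub-stochastic and the misbehaving block an identity, $L$ sends each regular node to a convex combination determined by the harmonic/absorption probabilities into $\malSet$, effectively $\xregbar = (I-\Wreg)^{-1}\Wmal\,(\theta_\malSet + v)$ on the regular block. I would then plug this into $\eregv(0) = \tr{\Var E^\top E}$ and $\eregn(0) = \tr{P}$ with $P$ solving~\eqref{eq:P-lyapunov} at $\lam=0$, and carry out the block-wise trace expansion, splitting $\Varn$ into its $\Varn_{11},\Varn_{12},\Varn_{22}$ blocks; the cross terms between $\theta_\regSet$ and $\theta_\malSet + v$ produce the $\summat{\Varn_{12}}$ and $\summat{\Varn_{22}}$ contributions, while the $\frac{M^2}{R}$ and $\frac{M^2}{R^2}$ coefficients should emerge from the row sums of $(I-\Wreg)^{-1}\Wmal$, which sum to $1$ row-wise (since $W^o$ is row-stochastic on the full network and the regular block alone is strictly sub-stochastic, the deficit flows to $\malSet$). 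Finally I would rearrange the resulting inequality $\eregv(1) \le \eregv(0) + \eregn(0)$ into the stated form~\eqref{eq:full-comp-better-full-coll-condition}.

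**Anticipated obstacle.**

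The main difficulty is the $\lam=0$ computation: unlike $\lam=1$ where $L=I$, here $L$ is a nontrivial projection and one must argue carefully why the row sums of the regular-to-misbehaving transfer matrix $(I-\Wreg)^{-1}\Wmal$ all equal $1$, and then track how this collapses the quadratic forms $\tr{\Var E^\top E}$ and $\tr{P}$ down to the scalar coefficients $M^2/R$, $M^2/R^2$, $M/R$, $1$ appearing in~\eqref{eq:full-comp-better-full-coll-condition}. A clean way to handle this is to note that at $\lam=0$ every regular node's steady state is a convex combination whose total mass lands entirely on $\{\theta_m + v_m\}_{m\in\malSet}$, so $E\theta$ and the relevant projections can be rewritten in terms of averages over $\malSet$; the bookkeeping of which $\summat{\cdot}$ term each block contributes is routine but error-prone, and that is where I would spend the most care. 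I would defer the detailed algebra to an appendix and present only the reduction and the key structural facts here.
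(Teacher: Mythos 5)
Your overall strategy coincides with the paper's: evaluate $\ereg$ at the two extremes via the decomposition $\eregv+\eregn+\kappa$, use $\eregn(1)=0$ from~\cref{lem:P-monotonic}, and compare. Your $\lam=1$ computation is complete and matches the paper exactly, including $\eregv(1)=\tr{\Varn_{11}(I_R-\Creg)}$.

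The gap is in the $\lam=0$ half, and it is exactly at the spot you flag as the ``anticipated obstacle.'' You correctly identify the limit steady state of the regular block as $(I-\Wreg)^{-1}\Wmal(\theta_\malSet+v)$, but you then claim the coefficients $\nicefrac{M^2}{R}$, $\nicefrac{2M}{R}$, and the clean $\summat{\Varn_{22}}$, $\summat{\Varn_{12}}$ terms ``emerge from the row sums of $(I-\Wreg)^{-1}\Wmal$, which sum to $1$ row-wise.'' Row-stochasticity is not enough: for $M\ge 2$ the rows of $(I-\Wreg)^{-1}\Wmal$ are absorption-probability distributions over $\malSet$ that are in general non-uniform, and the quadratic form $\Enorm{(I-\Wreg)^{-1}\Wmal(\theta_\malSet+v)-\one\thbarreg}$ then depends on those individual probabilities, not only on the aggregate sums $\summat{\Varn_{22}}$ and $\summat{\Varn_{12}}$. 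The paper's proof closes this by asserting the stronger fact that every regular node converges to the \emph{uniform} average $\thbarmal=\frac{1}{M}\sum_{m\in\malSet}(\priornode{m}+v_m)$, equivalently that the upper-right block of $\lim_{\lam\to0^+}L$ is $C_{RM}=\frac{1}{M}\one_R\one_M^\top$ (cf.~\eqref{eq:Wbar-partition}); only with that identification does the expansion collapse to~\eqref{eq:full-comp-better-full-coll-condition}. To complete your argument you would need to either justify or assume this uniformity; as written, your reduction does not reach the stated condition. A further minor point: expanding $\E{\thbarmal^2}$ produces $\frac{1}{M^2}\summat{V}$, which equals the $\tr{V}$ appearing in~\eqref{eq:full-comp-better-full-coll-condition} only when the off-diagonal entries of $V$ are ignored (the paper writes $\sum_m d_m$); worth noting if you carry $V$ in full generality.
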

\begin{proof}\extended{}{[Sketch of proof]}
	The statements follow from manipulations of the consensus errors
	induced by the two considered instantiations of~\eqref{eq:FJ-dynamics}.
	The full derivation is reported in~\cref{app:proof-full-comp-vs-full-coll}.
\end{proof}

In words,
\cref{prop:cons-err-mal} implies that
the fully competitive approach 
outperforms
the consensus protocol 
as soon as the misbehavior disturbances
are sufficiently intense
compared to the prior
correlations between regular and misbehaving \nodes. 

After acknowledging that the proposed competition-based approach can be
more resilient than the standard consensus protocol \review{in the presence of misbehaving \nodes},
we now turn to \review{study} the optimal resilient strategy.
\review{In other words, we are interested in choosing $ \lam $ so as to reduce the consensus error.}
In particular, we address the optimal \review{competition} $\lam^*$:
\begin{equation}\label{eq:optimal-lam}
	\lam^* \in \argmin\ereg(\lam).
\end{equation}
\review{Such an optimal parameter exists
by Weierstrass theorem because $ \ereg(\lam)$ is continuous in $(0,1]$ and
has a continuous extension at $ \lam = 0 $
through the extended continuity of $L$~\cite{7577815}.
}

The next \review{result describes when the optimal competition is nontrivial},
meaning that the regular \nodes should \review{compete against their neighbors} in order to minimize the error~\eqref{eq:cons-error-regular}.

\review{
	\begin{thm}[Competition-collaboration trade-off.]\label{prop:opt-lam-nontrivial}
		Let
		$\Gamma\doteq\lim_{\lam\rightarrow0^+}\frac{dL}{d\lam}$ with block partition
		\begin{equation}\label{eq:Gamma-partition}
			\Gamma = \left[\begin{array}{ c | c }
				\Gamma_{1} & \Gamma_{2} \\ 
				\hline
				0 & 0
			\end{array}\right], \quad \Gamma_{1} \in \Real{R\times R}, \ \Gamma_{2} \in \Real{R\times M}
		\end{equation}
		and $ C_{RM} \doteq \frac{\one_R\one_M^\top}{M} $.
		If either of the following conditions holds:
		\begin{enumerate}[label=C\arabic*.]
			\item $\Varn$ is diagonal;
			\item $W^o$ is symmetric
				and
				\begin{multline}\label{eq:opt-lam-greater-zero-condition}
					\review{-\dfrac{\d{\eregn}(0)}{\d{\lam}}} - \tr{V\Gamma_{2}^\top C_{RM}} > \tr{-\Varn_{11}\Gamma_{1}^\top\Creg-\right.\\
						\left.\Varn_{12}\Gamma_{2}^\top\Creg+\Varn_{12}^\top\Gamma_{1}^\top C_{RM}+\Varn_{22}\Gamma_{2}^\top C_{RM}};
				\end{multline}
		\end{enumerate}
		then $ \lam^*\in (0,1) $.
	\end{thm}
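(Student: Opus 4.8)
The plan is to pin down $\lam^*$ by excluding both endpoints of $[0,1]$. A global minimizer exists on $[0,1]$ by the Weierstrass argument already given; moreover $\lam\mapsto\ereg(\lam)$ is $C^1$ on $(0,1)$ --- it is rational in $\lam$ through $L(\lam)$ in \eqref{eq:L} and through $P(\lam)$, the solution of the Stein equation \eqref{eq:P-lyapunov} --- and it has one-sided derivatives at the endpoints, the one at $0$ through the extended continuity of $L$ and through $\Gamma=\lim_{\lam\to0^+}\frac{dL}{d\lam}$ of \eqref{eq:Gamma-partition}. Using the decomposition $\ereg=\eregv+\eregn+\kappa$ of \eqref{eq:cons-error-explicit-1}, it then suffices to prove $\ereg'(0^+)<0$ and $\ereg'(1^-)>0$, which forces $\lam^*\in(0,1)$. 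I treat the two endpoints separately.

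\textbf{Excluding $\lam=1$.} Iterating \eqref{eq:P-lyapunov} near $\lam=1$ gives $P(\lam)=(1-\lam)^2\Wreg Q\Wreg^\top+O((1-\lam)^4)$, so $\eregn$ is quadratically flat there and $\eregn'(1)=0$; hence the sign of $\ereg'(1^-)$ equals the sign of $\eregv'(1^-)$. From $L=\lam(I-(1-\lam)W)^{-1}$ one has $L(1)=I$ and $\frac{dL}{d\lam}(1)=I-W$, so with $E(1)=\selreg-\Creg\selreg$ we get $\eregv'(1^-)=2\tr{\Var\,E(1)^\top\selreg(I-W)}$, which I expand into the blocks of $\Varn$, $V$, $\Wreg$, $\Wmal$. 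Under C1 the cross-covariance blocks of $\Varn$ vanish and, because $W^o$ has no self-loops (the diagonal of $\Wreg$ is zero), the expression collapses to $\frac{2}{R}\sumreg{i}\varnode{i}\,(R-1+c_i)>0$, where $c_i\ge0$ is the $i$-th column sum of $\Wreg$. Under C2 the same derivative is rearranged using the symmetry of $W^o$ to reach a sign-definite form; this is the more delicate of the two sub-cases.

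\textbf{Excluding $\lam=0$.} By \cref{lem:P-monotonic}, $\eregn$ is strictly decreasing; if the deception noise is nontrivial then $\eregn(\lam)\to+\infty$ as $\lam\to0^+$ and $\lam=0$ is ruled out at once, so the remaining case is $Q=0$, $\eregn\equiv0$, $\ereg=\eregv+\kappa$. Here I use $L(0^+)=\overline{W}=\lim_{k}W^k$ --- block-upper-triangular with zero regular block and $(1,2)$ block $(I-\Wreg)^{-1}\Wmal$ --- together with $\Gamma$ of \eqref{eq:Gamma-partition} and $E(0)=\selreg\overline{W}-\Creg\selreg$, to obtain $\eregv'(0^+)=2\tr{\Var\,E(0)^\top\selreg\Gamma}$; expanding in the blocks $\Varn_{11},\Varn_{12},\Varn_{22},V,\Gamma_{1},\Gamma_{2}$ and invoking the row-stochasticity identity $\Gamma\one=0$ (which is where $C_{RM}$ enters) brings this to the form of the two sides of \eqref{eq:opt-lam-greater-zero-condition}, so under C2 that inequality reads precisely $\eregv'(0^+)+\eregn'(0^+)<0$. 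Under C1, diagonality of $\Varn$ cancels the cross-covariance contributions and the inequality is then verified directly. Combining the two exclusions gives $\lam^*\in(0,1)$ under C1 or C2.

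\textbf{Main obstacle.} Apart from the sign arguments, the endpoint derivatives are bilinear-trace bookkeeping; the genuinely delicate step is differentiating $L$ at $\lam=0$, where $L=\lam(I-(1-\lam)W)^{-1}$ has a removable singularity because $W$ --- reducible, with eigenvalue $1$ of algebraic multiplicity $M$ --- makes $(I-(1-\lam)W)^{-1}$ blow up, so one must Laurent-expand this resolvent to isolate $\overline{W}$ and $\Gamma$ and then match the resulting trace, carefully tracking the $R$-versus-$N$ block partitions, the projector $\Creg$ and the matrix $C_{RM}$, against \eqref{eq:opt-lam-greater-zero-condition}. The C1 branch is comparatively painless, since diagonality of $\Varn$ kills all cross-covariance terms at both endpoints; the C2 branch leans on the symmetry of $W^o$ to keep the endpoint derivatives sign-definite.
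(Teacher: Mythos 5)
Your overall strategy is the same as the paper's: existence of a minimizer by Weierstrass, then showing $\ereg'(1^-)>0$ and $\ereg'(0^+)<0$ via the decomposition $\ereg=\eregv+\eregn+\kappa$, the endpoint values $L(1)=I$ and $\tfrac{dL}{d\lam}(1)=I-W$, the limits $L(0^+)=\overline{W}$ and $\Gamma$, and the reading of \eqref{eq:opt-lam-greater-zero-condition} as precisely the negativity of the right derivative at $0$; your C1 computation at $\lam=1$ reproduces the paper's $A_{ii}=\varnode{i}(1-\tfrac1R+\tfrac1R\sum_{j\in\regSet}W_{ji})$. However, one step is wrong as stated: the claim that $\eregn(\lam)\to+\infty$ as $\lam\to0^+$ when the deception noise is nontrivial is false. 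At $\lam=0$ the Stein equation \eqref{eq:P-lyapunov} reads $P=\Wreg P\Wreg^\top+\tilde{Q}$, which has a finite solution because $\Wreg$ is Schur for a connected graph; the paper itself relies on the finite quantities $\eregn(0)=\tr{P(0)}$ (in \cref{prop:cons-err-mal}) and $\tfrac{\d{\eregn}(0)}{\d{\lam}}$ (on the left of \eqref{eq:opt-lam-greater-zero-condition}). So your dichotomy ``$Q\neq0$ is ruled out at once, only $Q=0$ needs the derivative'' collapses: in every case one must establish $\eregv'(0^+)+\eregn'(0^+)<0$, with $\eregn'(0^+)$ the finite nonpositive number supplied by \cref{lem:P-monotonic} --- exactly what the $-\tfrac{\d{\eregn}(0)}{\d{\lam}}$ term in C2 encodes. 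Relatedly, the ``direct verification'' under C1 at $\lam=0$ still needs the elementwise sign facts $\Gamma_1\ge0$ and $\Gamma_2\le0$, which the paper derives separately from stochasticity of $L$ and the identity $\tfrac{dL}{d\lam}=\tfrac1\lam L(I-WL)$.

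The more serious gap is that the C2 branch at $\lam=1$ is asserted rather than proved. Showing $\tr{\Var\left(I-W^\top\right)\selreg^\top\left(\selreg-\Creg\selreg\right)}>0$ when $\Varn$ is not diagonal is the substantive step of the theorem: the paper writes out each diagonal entry $A_{ii}$ of the relevant $R\times R$ block, bounds the off-diagonal covariance contributions using $\Varn\succ0$ together with the Gershgorin circle theorem (which yields $\varnode{i}>\varnode[i]{j}$), and then splits into two cases according to whether $\tfrac1R\varnode{i}+\varnode[i]{m^*}$ exceeds $\varnode[i]{j^*}$ or not. ``Rearranged using the symmetry of $W^o$ to reach a sign-definite form'' names the goal but supplies none of this machinery, and without it the theorem under C2 is not established. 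A final, smaller point: you correctly identify the upper-right block of $\overline{W}$ as $(I-\Wreg)^{-1}\Wmal$, but the identity $\Gamma\one=0$ constrains row sums of $\Gamma$ while the $C_{RM}$-terms in \eqref{eq:opt-lam-greater-zero-condition} involve $\Gamma^\top$ acting on $\one_R$, i.e., column sums; the mechanism by which you claim to land on the stated inequality therefore needs to be made explicit rather than invoked.
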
}
\begin{proof}[Sketch of proof]
	The result is proven in two phases.
	Firstly, 
	we show that $ \lam^* < 1 $:
	we compute the first derivative of $ \ereg(\lam) $ at $ \lam = 1 $
	and show that it is positive,
	hence $\ereg(\lam)$ is strictly increasing in a left neighborhood of $ \lam = 1 $.
	Secondly,
	we show that $ \lam^* > 0 $:
	we compute the right derivative of $\ereg(\lam)$ as $ \lam \rightarrow 0^+$
	and show that it is negative,
	hence the error function is strictly decreasing in a right neighborhood of $ \lam = 0 $.
	The detailed calculations are reported in~\cref{app:proof-opt-lam-nontrivial}.
\end{proof}

Intuitively,
\review{any optimal parameter $\lam^*$
is strictly between $0$ and $1$ if the misbehavior is sufficiently
disruptive so that the consensus protocol yields poor performance,
similarly to what remarked below~\cref{prop:cons-err-mal},
while full competition is never optimal under our standing assumptions.}

\begin{rem}[Optimal \review{competition} with general matrices]
	\review{Even though we assume regular agents have no self-loops, 
		\cref{prop:opt-lam-nontrivial} holds also if this is relaxed.
		Further, 
		we numerically show that $\lam^*\in(0,1)$
		if $ W^o $ is row stochastic and $\Varn$ is not diagonal.
	}

\begin{rem}[Optimal \review{competition} with zero noise]
	\cref{prop:opt-lam-nontrivial} implies that $ \lam^* $
	may be positive 
	\review{even if $V$ and $Q$ are zero.}
	\review{This is indeed consistent with the misbehavior model:
	not only misbehaving \nodes corrupt the consensus value through deception bias and deception noise 
	but mostly they behave against the prescribed protocol,
	so that full collaboration is in general a poor strategy even if $v$ and $\noise{k}$ are trivial.}
\end{rem}
\end{rem}

\subsection{Performance \vs \xspace\review{Misbehavior}}\label{sec:opt-lam-vs-d}

We now study how the performance of the dynamics~\eqref{eq:FJ-dynamics} varies
\review{with deception biases $ v $ and deception noises $\noise{k}$.}

We first show an intuitive result:
more disruptive misbehavior induce larger consensus errors
for every $ \lam $.

\begin{prop}[\revision{Performance \vs misbehavior}]\label{prop:error-increases-with-d}
	\revision{The error $ \ereg $} is strictly increasing with \review{$ V $} 
	\review{and with $Q$ w.r.t. the partial order of semi-definite matrices.}
\end{prop}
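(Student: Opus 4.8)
The plan is to use the explicit decomposition of the consensus error obtained in~\cref{sec:computation-consensus-error}, namely $\ereg(\lam) = \eregv(\lam) + \eregn(\lam) + \kappa$ with $\eregv(\lam) = \tr{\Var E^\top E}$, $\Var = \Varn + \selmal\selmal^\top V$, and $\eregn(\lam) = \tr{P}$ where $P$ solves the Lyapunov equation~\eqref{eq:P-lyapunov}. The constant $\kappa$ depends only on the observations $\priorall$, hence on neither $V$ nor $Q$. So it suffices to treat the two terms separately: show $\eregv$ is strictly increasing in $V$ and $\eregn$ is strictly increasing in $Q$, both with respect to the positive semidefinite order. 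I would also note at the outset that $\eregv$ does not depend on $Q$ and $\eregn$ does not depend on $V$, so monotonicity in the pair $(V,Q)$ follows from the two one-variable statements.

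For the bias term, I would write $\eregv(\lam) = \tr{(\Varn + \selmal\selmal^\top V)E^\top E} = \tr{\Varn E^\top E} + \tr{V \selmal E^\top E \selmal^\top}$, using $\tr{\selmal\selmal^\top V E^\top E} = \tr{V \selmal E^\top E \selmal^\top}$ by cyclicity of the trace. Only the second summand depends on $V$. The matrix $\selmal E^\top E \selmal^\top$ is symmetric positive semidefinite, so if $V' \succeq V$ then $V' - V \succeq 0$ and $\tr{(V'-V)\selmal E^\top E\selmal^\top}\ge 0$, giving monotonicity. For \emph{strict} monotonicity when $V' \succ V$ (or more generally $V'-V\succ 0$), I would argue that $\selmal E^\top E\selmal^\top$ is in fact positive definite: it equals $M\selmal L^\top(\selreg^\top\selreg - \Creg)L\selmal^\top$-type expression, and one uses that the $M$ columns of $L$ indexed by $\malSet$, projected off the all-ones direction, are linearly independent — this follows because $L$ is row-stochastic with the eigenvalue $1$ of multiplicity $M+1$ and the structure~\eqref{eq:actual-W-partition} forces the misbehaving-indexed columns of $L$, together with $\one$, to span an $(M+1)$-dimensional space, so their images under $E=\selreg L-\Creg\selreg$ span an $M$-dimensional space. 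Then $\tr{(V'-V)\selmal E^\top E\selmal^\top}>0$ whenever $V'-V\ne 0$ is PSD.

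For the drift term, the cleaner route is monotone dependence of the Lyapunov solution on its driving term. Write~\eqref{eq:P-lyapunov} as $P = \mathcal{L}(P) + G(Q)$ with $\mathcal{L}(X)=(1-\lam)^2\Wreg X\Wreg^\top$ and $G(Q)=(1-\lam)^2\Wreg Q\Wreg^\top$; since $(1-\lam)\Wreg$ is Schur, $P = \sum_{t\ge 0}\mathcal{L}^t(G(Q)) = \sum_{t\ge 0}(1-\lam)^{2(t+1)}\Wreg^{t+1}Q(\Wreg^\top)^{t+1}$. If $Q'\succeq Q$ then each term is monotone, so $P'\succeq P$ and $\tr{P'}\ge\tr{P'}$... i.e.\ $\tr{P'}\ge\tr{P}$; for strictness when $Q'-Q\ne 0$ is PSD, the $t=0$ term already gives $(1-\lam)^2\Wreg(Q'-Q)\Wreg^\top$, whose trace is $(1-\lam)^2\tr{(Q'-Q)\Wreg^\top\Wreg}$, strictly positive because $\Wreg^\top\Wreg\succ 0$ ($\Wreg$ is invertible, being a principal submatrix of the irreducible row-stochastic $W^o$ with $(1-\lam)\Wreg$ Schur). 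Combining, $\ereg$ is strictly increasing in both $V$ and $Q$. The main obstacle, and the only step requiring care rather than routine bookkeeping, is establishing the \emph{definiteness} (not just semidefiniteness) of $\selmal E^\top E\selmal^\top$ needed for strictness in the bias term — this is where one must invoke the rank/eigenvalue-multiplicity structure of $L$ established around~\eqref{eq:L}–\eqref{eq:actual-W-partition}; everything else is trace algebra and the Neumann-series form of the Lyapunov solution.
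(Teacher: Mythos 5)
Your overall strategy coincides with the paper's: split $\ereg=\eregv+\eregn+\kappa$, reduce the $V$-dependence to $\tr{V\,\selmal E^\top E\selmal^\top}=\tr{L_2VL_2^\top}$ with $L_2$ the upper-right block of $L$, and handle the $Q$-dependence through monotonicity of the Lyapunov solution in its driving term. The gap sits exactly in the step you yourself flag as the delicate one. Since $E\selmal^\top=L_2=(1-\lam)\left(I-(1-\lam)\Wreg\right)^{-1}\Wmal$, the matrix $\selmal E^\top E\selmal^\top=L_2^\top L_2$ has rank equal to the rank of $\Wmal$, which can be strictly smaller than $M$ (two misbehaving agents with identical regular neighborhoods and weights give two identical columns of $\Wmal$). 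The eigenvalue-multiplicity argument you invoke concerns the columns of the full matrix $L$, which are trivially independent because of the $I_M$ block in their last $M$ rows; it says nothing about the columns of $L_2$ alone, and $E$ annihilates precisely that $I_M$ block. Hence your strong claim that $\tr{(V'-V)\selmal E^\top E\selmal^\top}>0$ for every nonzero PSD increment is false in general. What does hold --- and what the paper uses --- is that $L_2$ is entrywise positive (hence nonzero) by irreducibility of $W^o$, so the trace strictly increases against any increment $V'-V\succ0$; that is the sense in which the proposition is stated.

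The same over-claim reappears in the $\eregn$ part: Schur stability of $(1-\lam)\Wreg$ only bounds the spectral radius and does not exclude the eigenvalue $0$ (a star or path induced subgraph on $\regSet$ makes $\Wreg$ singular), so $\Wreg^\top\Wreg\succ0$ is unjustified. In addition, the driving term of the Lyapunov recursion is $(1-\lam)^2\Wmal Q\Wmal^\top$, not $(1-\lam)^2\Wreg Q\Wreg^\top$: the noise enters through $B=(1-\lam)\Wmal$ and the appendix proof of the drift lemma uses $\tilde Q=\Wmal Q\Wmal^\top$ (the display of the Lyapunov equation in the body is a typo, which you copied). The relevant Gram matrix is therefore $\Wmal^\top\Wmal$, again only guaranteed to be a nonzero PSD matrix. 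As before, nonzero-ness suffices for strict increase of the trace against $Q'-Q\succ0$, which is essentially how the paper argues (monotone dependence of $P$ on $\tilde Q$ via the Neumann series, with a nonzero first term). So your decomposition and Neumann-series route are sound; you only need to replace the two definiteness claims by ``nonzero PSD'' and restrict strictness to positive-definite increments.
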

\begin{proof}
	See~\cref{app:proof-error-vs-d}.
\end{proof}

\revision{We next \review{study} what happens to the optimal \review{competition} $ \lam^* $.}
Intuitively, the more the nominal system behavior is disrupted,
the more regular agents should benefit from competing rather than
collaborating with (potential) misbehaving neighbors.
Formally speaking, 
this requires $ \lam^* $ to increase with
the intensities of \review{deception biases and noises}. 
Such a claim is hard to prove analytically because of the \review{involved} structure of the cost function. 
In particular, studying the second derivative of $\ereg(\lam)$ is complicated by the
\review{asymmetric matrix inside the trace of $\eregv(\lam)$},
and similarly,
a unique root of the first derivative of $\ereg(\lam)$ cannot be proved, in general.

\review{Nonetheless,
	the next results contribute towards our intuition
	by describing how the minimum points vary with the misbehavior.}
\review{For convenience,
	we denote the diagonal elements of the covariance matrices by $d_m\doteq V_{mm}$
	and $q_m\doteq Q_{mm}$.}

\begin{prop}[\revision{Optimal \review{competition} \vs misbehavior}]\label{prop:opt-lam-vs-noise}
	\review{Let $ \lam_{\text{min}} $ \review{be a minimum point of $\ereg(\lam)$},
		then $ \lam_{\text{min}} $ is strictly increasing with $ d_m, m\in\malSet $,}
	\review{and with $Q$ w.r.t. the partial order of semi-definite matrices.}
\end{prop}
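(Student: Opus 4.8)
The plan is to exploit the fact that, for every fixed $\lam$, the consensus error is \emph{affine} in the misbehavior covariances $V$ and $Q$, so that making the misbehavior more intense amounts to adding to $\ereg(\lam)$ a nonnegative term that is monotonically \emph{decreasing} in $\lam$, after which a Topkis-type exchange argument forces the minimizer to move to the right. First I would record the decomposition already implicit in \eqref{eq:cons-error-explicit-1}--\eqref{eq:cons-error-explicit-2}: since $\Var$ differs from $\Varn$ only by the addition of $V$ in the misbehaving block, and since $P$ solves the Lyapunov equation \eqref{eq:P-lyapunov}, which is linear in $Q$, unrolling the latter yields
\[
	\ereg(\lam)=\Psi(\lam)+\tr{V\,G(\lam)}+\tr{Q\,H(\lam)},
\]
where $\Psi(\lam)$ does not depend on $(V,Q)$, $G(\lam)\doteq\selmal E^\top E\selmal^\top=L_{\regSet\malSet}^\top L_{\regSet\malSet}\succeq0$ with $L_{\regSet\malSet}\doteq\selreg L\selmal^\top$ the regular-by-misbehaving block of $L$ (using $\selreg\selmal^\top=0$), and $H(\lam)\doteq\sum_{k\ge1}\bigl((1-\lam)\Wreg^\top\bigr)^{k}\bigl((1-\lam)\Wreg\bigr)^{k}\succeq0$. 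The crucial point is that $G$ and $H$ depend on $\lam$ and on the topology, but not on $V$ or $Q$; this decomposition moreover recovers \cref{prop:error-increases-with-d} at once, since $G,H\succeq0$.

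Second, I would establish strict monotonicity in $\lam$ of the scalar maps that are added when the misbehavior grows. Block-inverting $I-(1-\lam)W$ in \eqref{eq:L} and expanding by Neumann series (valid on $(0,1)$, where $\rho\bigl((1-\lam)\Wreg\bigr)<1$) gives $L_{\regSet\malSet}=\sum_{k\ge0}(1-\lam)^{k+1}\Wreg^{k}\Wmal$, a matrix power series in $(1-\lam)$ with entrywise nonnegative coefficients; hence for each $m\in\malSet$ the scalar $[G(\lam)]_{mm}=\norm{L_{\regSet\malSet}e_m}$ is a power series in $(1-\lam)$ with nonnegative coefficients, so $\frac{\d{}}{\d{\lam}}[G(\lam)]_{mm}$ is a power series in $(1-\lam)$ with nonpositive coefficients and is therefore $<0$ on $(0,1)$ whenever some coefficient is positive, i.e.\ whenever the misbehaving \node $R+m$ is reachable from some regular \node through regular \nodes (a natural standing connectivity requirement; otherwise $d_m$ does not enter $\ereg$ and the statement is void). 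Replaying the argument behind \cref{lem:P-monotonic} for $H$ shows that, for any $\Delta\succeq0$ with $\Delta\neq0$, $\lam\mapsto\tr{\Delta\,H(\lam)}=\sum_{k\ge1}(1-\lam)^{2k}\tr{\Delta(\Wreg^\top)^{k}\Wreg^{k}}$ is strictly decreasing on $[0,1]$ with strictly negative derivative on $(0,1)$ (under the mild nondegeneracy that $Q$ effectively enters the cost).

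Third, the exchange argument. For the $d_m$ claim, keep $Q$ fixed and replace $V$ by $V+t\,e_me_m^\top$ with $t>0$, a positive semidefinite increment, so that $\ereg(\cdot;V+t\,e_me_m^\top,Q)=\ereg(\cdot;V,Q)+t\,[G(\cdot)]_{mm}$. If $\lam_0$ minimizes the unperturbed error and $\lam_1$ the perturbed one, adding the two optimality inequalities cancels the common part and leaves $t\,[G(\lam_1)]_{mm}\le t\,[G(\lam_0)]_{mm}$, hence $[G(\lam_1)]_{mm}\le[G(\lam_0)]_{mm}$; since $[G(\cdot)]_{mm}$ is strictly decreasing this forces $\lam_1\ge\lam_0$. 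If in addition $\lam_0\in(0,1)$, then $\lam_0$ being a minimizer of both $\ereg(\cdot;V,Q)$ and $\ereg(\cdot;V+t\,e_me_m^\top,Q)$ gives $\frac{\d{}}{\d{\lam}}\ereg(\lam_0;V,Q)=0$ and $\frac{\d{}}{\d{\lam}}\ereg(\lam_0;V,Q)+t\,\frac{\d{}}{\d{\lam}}[G(\lam_0)]_{mm}=0$, whence $\frac{\d{}}{\d{\lam}}[G(\lam_0)]_{mm}=0$, contradicting the previous step; hence $\lam_1>\lam_0$. The $Q$ claim is verbatim the same, perturbing $Q\mapsto Q+t\Delta$ with $\Delta\doteq Q_2-Q_1\succeq0$, $\Delta\neq0$, $t>0$, and using $\tr{\Delta\,H(\cdot)}$ in place of $[G(\cdot)]_{mm}$.

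The step I expect to be the obstacle is strictness at a \emph{boundary} minimizer: when $\lam_{\text{min}}\in\{0,1\}$ the exchange step delivers only the weak inequality, and an infinitesimal increase of the misbehavior may leave $\lam_{\text{min}}$ pinned at the corner, so strictness is recovered only once the minimizer becomes interior — which is precisely the regime ensured by \cref{prop:opt-lam-nontrivial} ($\lam^*\in(0,1)$), under whose standing hypotheses I would phrase the result. Notably, this route deliberately sidesteps the sign of $\ereg''(\lam)$ — the difficulty flagged just before the statement — because the exchange argument uses only the monotonicity of the \emph{difference} $t\,[G(\cdot)]_{mm}$ (resp.\ $t\,\tr{\Delta H(\cdot)}$), not convexity of $\ereg$.
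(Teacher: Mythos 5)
Your proof is correct and rests on the same underlying mechanism as the paper's: both arguments establish that $\ereg$ has strictly decreasing differences in $(\lam,d_m)$ and in $(\lam,Q)$, i.e.\ that the cross-variation $\tfrac{\partial^2\ereg}{\partial d_m\partial\lam}$ is negative on $(0,1)$, and then conclude that minimizers move rightward. The technical realization differs, though. The paper obtains the sign of the cross-partial by differentiating the trace formula \eqref{eq:cons-error-regular-derivative} in $d_m$ and inspecting the block structure \eqref{eq:cons-error-derivative-lam-d-argument}, whose upper-left block is entrywise negative; for $Q$ it works through the implicit-function expression \eqref{eq:der-eregn}. You instead isolate the affine dependence $\ereg(\lam)=\Psi(\lam)+\tr{V\,G(\lam)}+\tr{Q\,H(\lam)}$ and get the monotonicity of $[G(\lam)]_{mm}$ from the Neumann series $L_2=\sum_{k\ge0}(1-\lam)^{k+1}\Wreg^k\Wmal$, which is arguably more transparent. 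Two remarks. First, your formula for $H$ omits the $\Wmal$ conjugation: unrolling \eqref{eq:P-lyapunov} gives $H(\lam)=\sum_{k\ge0}(1-\lam)^{2(k+1)}\Wmal^\top(\Wreg^\top)^k\Wreg^k\Wmal$; this does not damage the argument, since the corrected $H$ is still a power series in $(1-\lam)^2$ with positive-semidefinite coefficients, but the formula as written is wrong. Second, your explicit Topkis exchange step is a genuine improvement over the paper, which passes from ``the derivative of $\ereg$ w.r.t.\ $\lam$ is strictly decreasing with $d_m$'' to ``the minimum points are strictly increasing'' with only an appeal to continuity; your argument makes that inference rigorous and, in doing so, correctly exposes the one place where the claimed \emph{strict} increase can fail, namely minimizers pinned at the boundary $\lam\in\{0,1\}$ — a caveat the paper silently ignores and which is resolved only under the interiority guaranteed by \cref{prop:opt-lam-nontrivial}. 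The nondegeneracy conditions you flag (misbehaving \node $m$ must actually influence some regular \node, and $\Wmal\Delta\Wmal^\top\neq0$) are likewise implicit in the paper's assertion that the relevant block is ``a negative matrix,'' so the two proofs share the same standing hypotheses.
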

\begin{proof}
	See~\cref{app:proof-opt-lam-vs-noise}.
\end{proof}


An immediate consequence of~\cref{prop:opt-lam-vs-noise} is that,
if there is a unique minimum point
for some values of $ V $ and $Q$,
then there is a unique minimum point for any ``larger'' $ V $ and $Q$,
which corresponds to $ \lam^* $.
\revision{In words, a more \review{disruptive misbehavior}
force regular \nodes to progressively become more competitive,
in order not to be deceived by misbehaving \nodes that can draw them away from the nominal average consensus.}
The next proposition refines to this result by describing \review{the optimal parameter $\lam^*$}
\revision{with ``extreme'' \review{misbehavior}}.

\begin{prop}[\revision{Optimal \review{competition} with extreme \review{misbehavior}}]\label{prop:opt-lam-noise-inf}
	\review{Let $ \lam_{\text{min}} $ be a minimum point of $\ereg(\lam)$,
		then $ \lim_{d_m\rightarrow\infty}\lam_{\text{cr}}(d_{m}) = 1 $
		and $ \lim_{q_{m}\rightarrow\infty}\lam_{\text{cr}}(q_{m}) = 1, m\in\malSet $.}
\end{prop}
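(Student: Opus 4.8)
The plan is to reduce the statement to showing that the selected minimum point $\lam_{\text{min}}$ tends to $1$, and then to get that by a blow‑up/contradiction argument: at any competition level bounded away from $1$ the consensus error diverges as the misbehavior intensity grows, while the error at $\lam=1$ stays bounded, so a minimizer cannot remain bounded away from $1$. Fix a misbehaving \node $m^*\in\malSet$ and let $e_{m^*}\in\Real{M}$ be its canonical vector. By \cref{prop:opt-lam-vs-noise}, the map $d_{m^*}\mapsto\lam_{\text{min}}(d_{m^*})$ (with the other entries of $V$, and $Q$, held fixed) is strictly increasing and $[0,1]$‑valued, so $\lam^\infty\doteq\lim_{d_{m^*}\to\infty}\lam_{\text{min}}(d_{m^*})$ exists, equals $\sup_{d_{m^*}}\lam_{\text{min}}(d_{m^*})\le1$, and is positive (for $d_{m^*}>0$, $\lam_{\text{min}}(d_{m^*})>\lam_{\text{min}}(0)\ge0$). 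It then remains only to exclude $\lam^\infty<1$; the $q_{m^*}\to\infty$ case is handled identically with $d_{m^*}$ replaced by $q_{m^*}$.

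The key structural fact is that $\ereg$ depends \emph{affinely} on $d_{m^*}$, with a nonnegative $\lam$‑dependent slope that vanishes only at $\lam=1$. From the block‑triangular form of $W$ in~\eqref{eq:actual-W-partition} one computes the $\regSet\times\malSet$ block of $L$ in~\eqref{eq:L} as $L_{\regSet\malSet}(\lam)=(1-\lam)\bigl(I_R-(1-\lam)\Wreg\bigr)^{-1}\Wmal$, so that $E\selmal^\top=\selreg L\selmal^\top=L_{\regSet\malSet}(\lam)$; writing $\Var=\Varn+\selmal^\top V\selmal$ (only the $\malSet$‑block carries $V$), the term $\eregv$ in~\eqref{eq:cons-error-explicit-2} splits as $\eregv(\lam)=\tr{\Varn E^\top E}+\tr{V\,L_{\regSet\malSet}(\lam)^\top L_{\regSet\malSet}(\lam)}$. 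The last trace is linear in $V$, hence affine in $d_{m^*}$ with slope $g(\lam)\doteq\bigl\|L_{\regSet\malSet}(\lam)e_{m^*}\bigr\|^2\ge0$, while $\tr{\Varn E^\top E}$, $\eregn$ and $\kappa$ do not involve $d_{m^*}$; thus $\ereg(\lam)=d_{m^*}\,g(\lam)+h(\lam)$ with $h$ independent of $d_{m^*}$ and $h(\lam)\ge0$, since $h$ is itself the consensus error~\eqref{eq:cons-error-regular} of the system obtained by setting $d_{m^*}=0$. Moreover $g$ is continuous on $(0,1]$, $g(1)=0$ (because $L_{\regSet\malSet}(1)=0$), and $g(\lam)>0$ for $\lam\in(0,1)$ — the latter being exactly the non‑degeneracy $\Wmal e_{m^*}\neq0$ ($\node$ $m^*$ has a regular neighbor) already underlying \cref{prop:opt-lam-vs-noise}. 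For the noise limit, solving the Lyapunov equation~\eqref{eq:P-lyapunov} (with the noise entering through $\Wmal$) gives $\eregn(\lam)=\tr{Q\,\Psi(\lam)}$ with $\Psi(\lam)=\sum_{k\ge0}(1-\lam)^{2k+2}\Wmal^\top(\Wreg^\top)^k\Wreg^k\Wmal\succeq0$, which is linear in $Q$; hence $\ereg(\lam)=q_{m^*}\tilde g(\lam)+\tilde h(\lam)$ with $\tilde g(\lam)=\Psi(\lam)_{m^*m^*}=\sum_{k\ge0}(1-\lam)^{2k+2}\|\Wreg^k\Wmal e_{m^*}\|^2$ sharing the same properties, and $\tilde h\ge0$ independent of $q_{m^*}$.

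With this in hand the contradiction is immediate. Suppose $\lam^\infty<1$. Since $\lam_{\text{min}}(d_{m^*})\uparrow\lam^\infty$ with $\lam_{\text{min}}(d_{m^*})\le\lam^\infty$, for all $d_{m^*}$ large enough $\lam_{\text{min}}(d_{m^*})\in[\lam^\infty/2,\lam^\infty]$, a compact subinterval of $(0,1)$; there $g$ is continuous and strictly positive, so $g\ge g_*>0$, and therefore $\ereg\bigl(\lam_{\text{min}}(d_{m^*})\bigr)\ge d_{m^*}\,g_*+h\bigl(\lam_{\text{min}}(d_{m^*})\bigr)\ge d_{m^*}\,g_*\to+\infty$. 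On the other hand $\ereg(1)=h(1)$ is a finite constant independent of $d_{m^*}$: $\eregn(1)=0$ by \cref{lem:P-monotonic}, $L(1)=I$ gives $E(1)=\selreg-\Creg\selreg$, and $\kappa<\infty$. Hence for $d_{m^*}$ large $\ereg(1)<\ereg\bigl(\lam_{\text{min}}(d_{m^*})\bigr)$, contradicting that $\lam_{\text{min}}(d_{m^*})$ minimizes $\ereg$ (cf.~\eqref{eq:optimal-lam}); so $\lam^\infty=1$. Replacing $g,h,d_{m^*}$ by $\tilde g,\tilde h,q_{m^*}$ reproduces the same argument and yields $\lim_{q_{m^*}\to\infty}\lam_{\text{min}}(q_{m^*})=1$.

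The routine parts — the block inversion producing $L_{\regSet\malSet}$, the Neumann/Lyapunov series for $\Psi$, and the bookkeeping of which traces carry $d_{m^*}$ or $q_{m^*}$ — I would push to the appendix. The one point needing genuine care is the \emph{non‑degeneracy}: if \node $m^*$ had no regular neighbor, then $g\equiv0$ (resp. $\tilde g\equiv0$) and the claimed limit would be vacuous, since a misbehaving \node invisible to the regular subnetwork cannot influence $\lam^\ast$. This is exactly the hypothesis implicit in \cref{prop:opt-lam-vs-noise} (without it $\lam_{\text{min}}$ would not be \emph{strictly} increasing in $d_{m^*}$), so I would simply inherit it rather than re‑derive a connectivity condition. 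A second, milder point, also inherited from \cref{prop:opt-lam-vs-noise}, is that $\lam_{\text{min}}(\cdot)$ should be read as a fixed monotone selection of minimizers (e.g. the smallest) when the minimizer is not unique — monotonicity is what guarantees the limit $\lam^\infty$ exists, which is the hinge of the blow‑up argument.
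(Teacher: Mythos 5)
Your proof is correct, but it takes a genuinely different route from the paper's. The paper argues on the \emph{sign of the derivative}: it isolates $d_m$ in $\frac{\d{\ereg}}{\d{\lam}}$ as $\frac{N_{mm}}{\lam}d_m+\kappa$ with $N_{mm}<0$, concludes that for every fixed $\lam<1$ the derivative becomes negative once $d_m$ exceeds an explicit threshold (so that $\lam$ cannot be a minimum point), and then invokes \cref{prop:opt-lam-vs-noise}; the $Q$ case is handled analogously through~\eqref{eq:P-partial-lam}. You instead argue on the \emph{value of the error}: $\ereg(\lam)=d_{m^*}g(\lam)+h(\lam)$ with $g\ge0$, $g(1)=0$ and $g>0$ on $[0,1)$, so any minimizer trapped in a compact subset of $[0,1)$ would force $\ereg(\lam_{\min})\ge d_{m^*}g_*\to\infty$ while $\ereg(1)$ stays bounded --- a contradiction. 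Both arguments hinge on the same two ingredients (the monotone selection of minimizers from \cref{prop:opt-lam-vs-noise} and the non-degeneracy $\Wmal e_{m^*}\neq 0$, which is exactly the paper's condition $N_{mm}\neq0$), and you correctly flag both. Your route buys robustness: it needs neither differentiability of $\ereg$ nor the explicit derivative formula~\eqref{eq:cons-error-regular-derivative}, and in fact, since $g$ is continuous and strictly positive on all of $[0,1)$, your blow-up comparison over $[0,1-\epsilon]$ already forces \emph{every} minimizer toward $1$ without appealing to monotonicity at all --- a small simplification you could exploit. What the paper's derivative computation buys in exchange is the explicit threshold~\eqref{eq:cons-error-regular-derivative-dm-negative} quantifying how large $d_m$ must be for a given $\lam$ to be excluded. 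Incidentally, your corrected readings of $\Var=\Varn+\selmal^\top V\selmal$ and of the Lyapunov forcing term $\Wmal Q\Wmal^\top$ (rather than $\Wreg Q\Wreg^\top$ as typed in~\eqref{eq:P-lyapunov}) are the intended ones.
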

\begin{proof}
	See~\cref{app:proof-opt-lam-noise-inf}.
\end{proof}

According to intuition,
the (trivial) optimal strategy for regular \nodes
is to fully compete when
\revision{the \review{misbehavior} is too \review{disruptive}.}
However, numerical tests in the next section
show that $ \lam^* $ is significantly smaller than $ 1 $ in several cases.


\section{Numerical Experiments}\label{sec:numerical-tests}

\begin{figure}
	\centering
	\begin{subfigure}{0.5\linewidth}
		\centering
		\includegraphics[width=.99\linewidth]{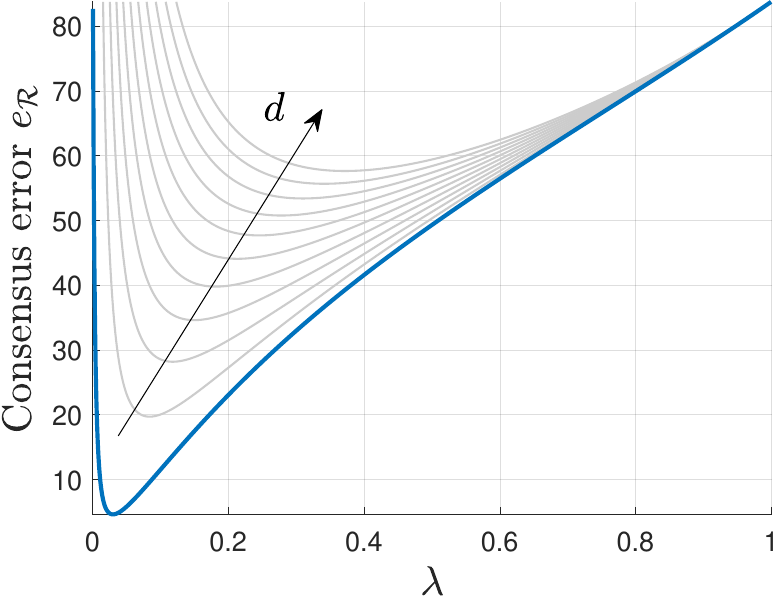}
		\caption{Average consensus error~\eqref{eq:cons-error-regular}}
		\label{fig:err_FJ_reg(3)_d10-100_1mal_Varn_10^(-.2)_curve}
	\end{subfigure}%
	\begin{subfigure}{0.5\linewidth}
		\centering
		\includegraphics[width=.99\linewidth]{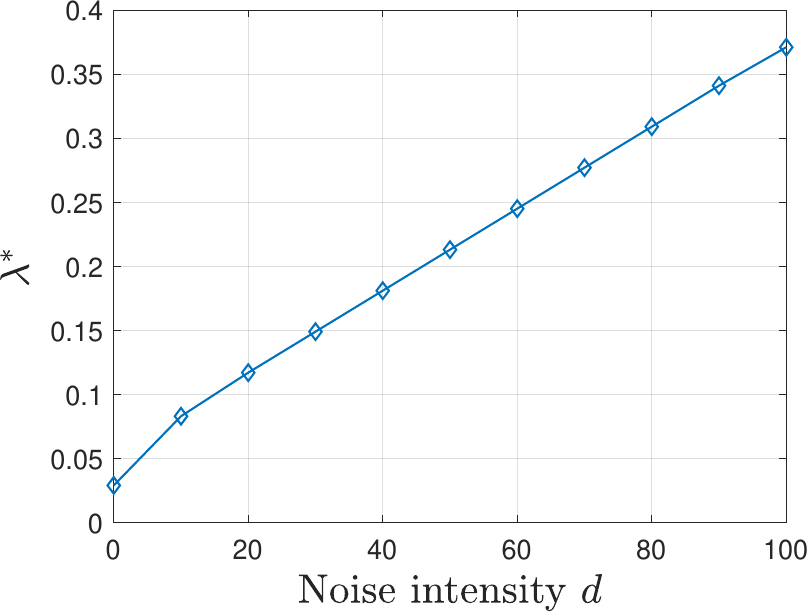}
		\caption{Optimal $ \lam $ as a function of $ d $.}
		\label{fig:err_FJ_reg(3)_d10-100_1mal_Varn_10^(-.2)_opt_lam}
	\end{subfigure}
	\caption{FJ dynamics consensus error with $ 3 $-regular graph, exponential decay of observation covariances, and one misbehaving \node.
		The arrow shows how the error curve varies as the intensity $ d $ of the \review{deception bias} increases.
	}
	\label{fig:err_FJ_reg(3)_d10-100_1mal_Varn_10^(-.2)}
\end{figure}

In this section,
we perform numerical experiments
on the consensus error $ \ereg $ 
to achieve intuition about the behavior of FJ dynamics under different topologies and \review{misbehavior}, 
and draw insight about effective choices of the parameter $ \lam $.

In~\autoref{fig:err_FJ_reg(3)_d10-100_1mal_Varn_10^(-.2)}, 
we considered a $ 3 $-regular communication graph with $ 100 $ \nodes
and uniform weights\footnote{
	\review{In a $k$-regular graph,
	each node has exactly $k$ neighbors.}
}. 
The prior covariance $ \Varn $ was chosen such that,
for each \node $ i $,
the cross-covariances obeyed an exponential decay,
$ \varnode[i]{j} = 10^{-0.2\ell(i,j)} $,
$ \mathrm\ell(i,j) $ being the length of a shortest path between $ i $ and $ j $, 
with $ \varnode{i} \equiv 1 $.
Further, we randomly selected one misbehaving \node and
varied the intensity of its deception bias $ d $ within the range $ [0, 100] $,
\review{with constant intensity of deception noise $q$.}

Figure~\ref{fig:err_FJ_reg(3)_d10-100_1mal_Varn_10^(-.2)_curve} shows the error curve
as $ d $ increases.
All curves exhibit a unique minimum point $ \lam^* $, plotted in~\autoref{fig:err_FJ_reg(3)_d10-100_1mal_Varn_10^(-.2)_opt_lam}.
Further, both error curve and minimum point 
increase with $ d $,
according to~\cref{prop:error-increases-with-d,prop:opt-lam-vs-noise}, 
showing that the competition level needs to grow with the intensity of deception biases.
\review{The same qualitative behavior was observed by varying $q$.}

\begin{figure}
	\centering
	\begin{subfigure}{0.5\linewidth}
		\centering
		\includegraphics[width=.99\linewidth]{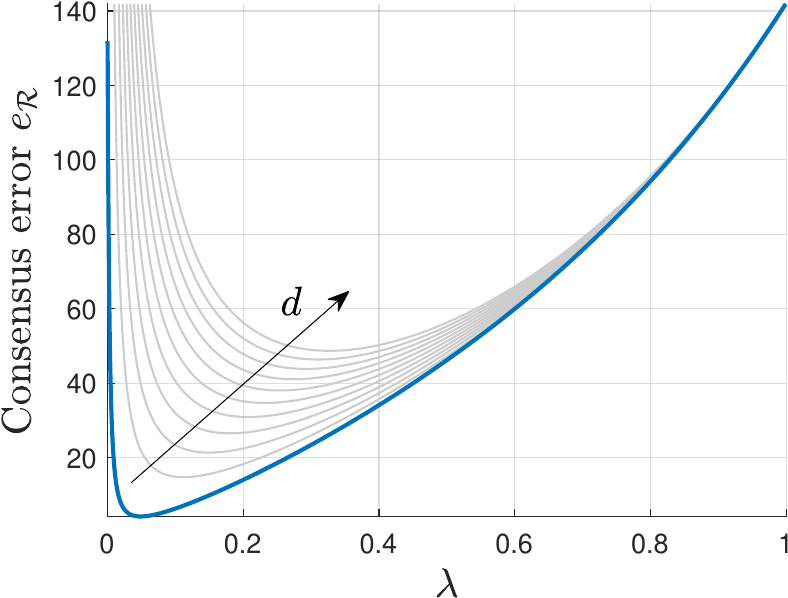}
		\caption{Average consensus error~\eqref{eq:cons-error-regular}}
		\label{fig:err_FJ_reg(3)_d10-100_1mal_Varn_diag_rand[1,2]_curve}
	\end{subfigure}%
	\begin{subfigure}{0.5\linewidth}
		\centering
		\includegraphics[width=.99\linewidth]{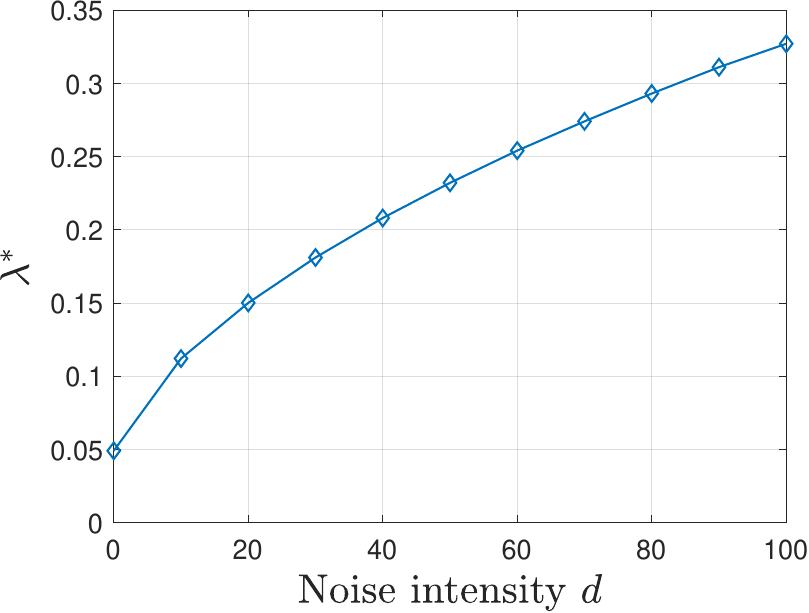}
		\caption{Optimal $ \lam $ as a function of $ d $.}
		\label{fig:err_FJ_reg(3)_d10-100_1mal_Varn_diag_rand[1,2]_opt_lam}
	\end{subfigure}
	\caption{FJ dynamics consensus error with $ 3 $-regular graph, diagonal prior covariance matrix $\Varn$, and one misbehaving \node.
	}
	\label{fig:err_FJ_reg(3)_d10-100_1mal_Varn_diag_rand[1,2]}
\end{figure}

Figure~\ref{fig:err_FJ_reg(3)_d10-100_1mal_Varn_diag_rand[1,2]} shows the same experiment
but with a diagonal covariance matrix $ \Varn $.
We observe the same monotonic behavior of $ \ereg $ and $ \lam^* $.
Further, we note that the error curve has a convex shape.
In fact, even though it was not possible to prove it formally,
all tests performed with diagonal covariance matrices resulted in strictly convex error functions from numerical tests.%


\begin{figure}
	\centering
	\begin{subfigure}{0.5\linewidth}
		\centering
		\includegraphics[width=.99\linewidth]{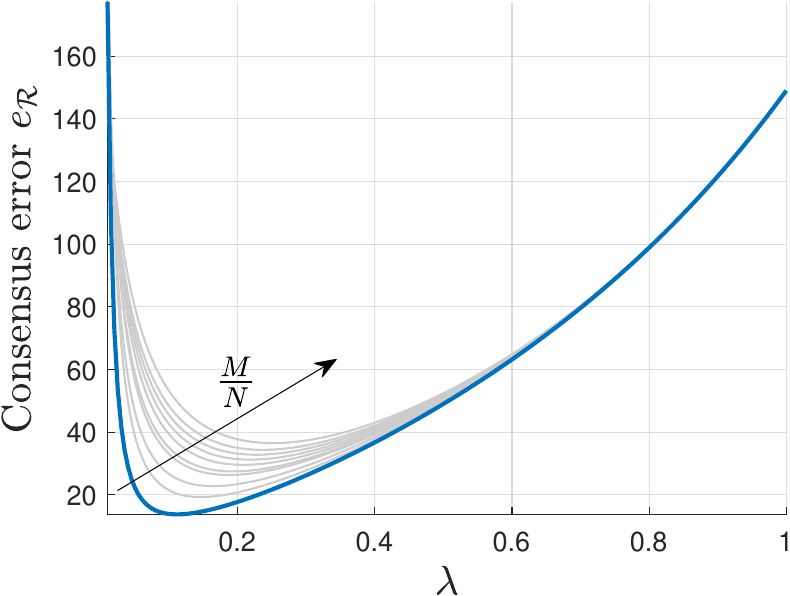}
		\caption{Average consensus error~\eqref{eq:cons-error-regular}}
		\label{fig:err_FJ_reg(3)_d10_mal1-10_reg100_Varn_diag_rand[1,2]_curve}
	\end{subfigure}%
	\begin{subfigure}{0.5\linewidth}
		\centering
		\includegraphics[width=.99\linewidth]{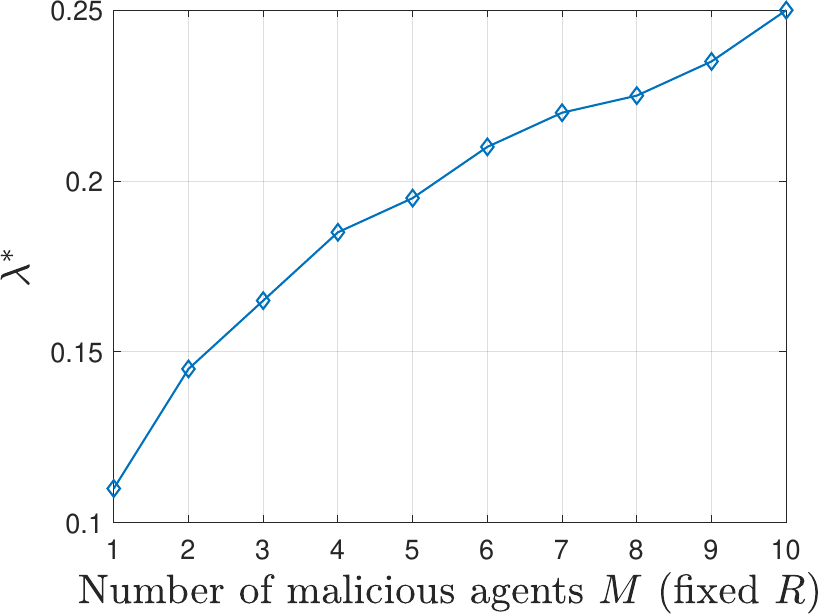}
		\caption{Optimal $ \lam $ as a function of $ M $.}
		\label{fig:err_FJ_reg(3)_d10_mal1-10_reg100_Varn_diag_rand[1,2]_opt_lam}
	\end{subfigure}
	\caption{FJ dynamics consensus error with $ 3 $-regular graph and diagonal prior covariance matrix $\Varn$.
		The arrow on the left box shows how the error varies as the number of misbehaving nodes $ M $ increases (with $ R = 100 $).
	}
	\label{fig:err_FJ_reg(3)_d10_mal1-10_reg100_Varn_diag_rand[1,2]}
\end{figure}

\begin{figure}
	\centering
	\begin{subfigure}{0.5\linewidth}
		\centering
		\includegraphics[width=.99\linewidth]{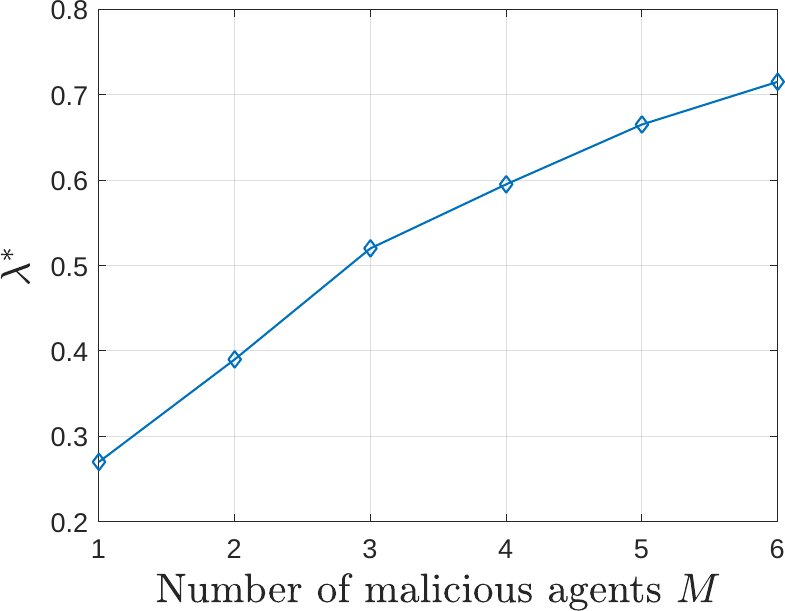}
		\caption{Exponential observation covariances.}
		\label{fig:err_FJ_reg(3)_d10_mal1-6_Varn_10^(-.2)_lam_opt}
	\end{subfigure}%
	\begin{subfigure}{0.5\linewidth}
		\centering
		\includegraphics[width=.99\linewidth]{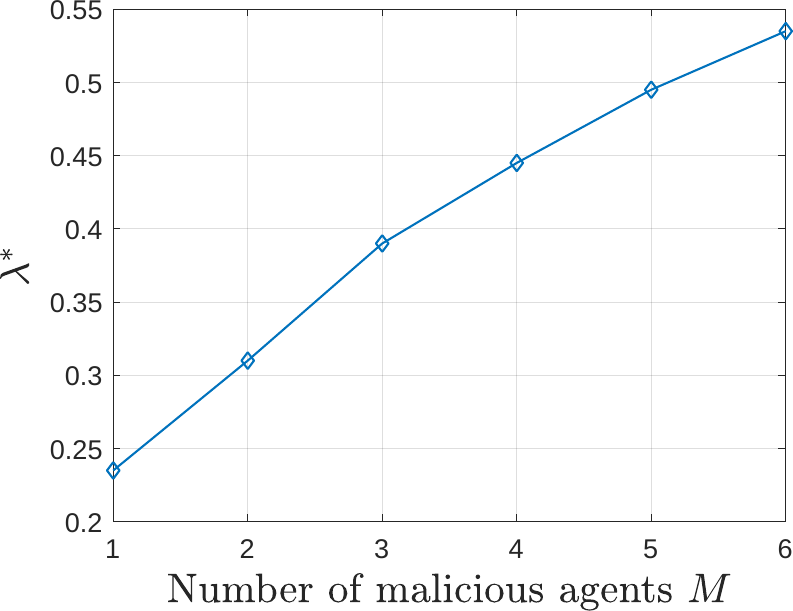}
		\caption{Diagonal observation covariance matrix.}
		\label{fig:err_FJ_reg(3)_d10_mal1-6_Varn_diag_rand[1,2]_lam_opt}		
	\end{subfigure}
	\caption{Optimal $ \lam $ as a function of $ M $ with $ d = 10 $.
		Each pair of misbehaving \nodes affects the same regular \node (\eg the first two belong to $ \neigh{1} $). 
	}
	\label{fig:err_FJ_reg(3)_d10_mal1-6}
\end{figure}

We next studied what happens when increasing the number of misbehaving \nodes $ M $.
To better visualize changes in the behavior of the system,
we fixed the set $ \regSet $ to be a network composed of $ R = 100 $ regular \nodes,
and added misbehaving \nodes across the network.
Figure~\ref{fig:err_FJ_reg(3)_d10_mal1-10_reg100_Varn_diag_rand[1,2]} shows
the error curve when $ 10 $ such \nodes are progressively introduced.
In particular, 
in this example,
all misbehaving \nodes are selected so as to affect different portions of the network,
which allows $ \lam^* $ to have relatively low values, see~\autoref{fig:err_FJ_reg(3)_d10_mal1-10_reg100_Varn_diag_rand[1,2]_opt_lam}.
Conversely, we note that, in the opposite scenario, 
some regular \nodes may be forced to almost freeze their observations (large $ \lam $) to not drive the error too large.
Figure~\ref{fig:err_FJ_reg(3)_d10_mal1-6} shows two cases where the misbehaving \nodes
are connected to the same regular \nodes. 
In particular, each couple is added to the neighborhood of one regular \node
(\eg the first two misbehaving \nodes added to the network are neighbors of \node $ 1 \in\regSet $).
In this case, $ \lam^* $ increases faster than~\autoref{fig:err_FJ_reg(3)_d10_mal1-10_reg100_Varn_diag_rand[1,2]_opt_lam},
because the regular \nodes affected by multiple misbehaving need to keep their error small:
in other words,
they can hardly collaborate because of their misbehaving neighbors.
We note that
$ \lam^* $ grows faster when observations of regular \nodes are correlated (\autoref{fig:err_FJ_reg(3)_d10_mal1-6_Varn_10^(-.2)_lam_opt}),
because such \nodes can trust that their states may be similar even before starting dynamical updates,
and competing is less risky than collaborating.

\begin{figure}
	\centering
	\begin{subfigure}{0.5\linewidth}
		\centering
		\includegraphics[width=.99\linewidth]{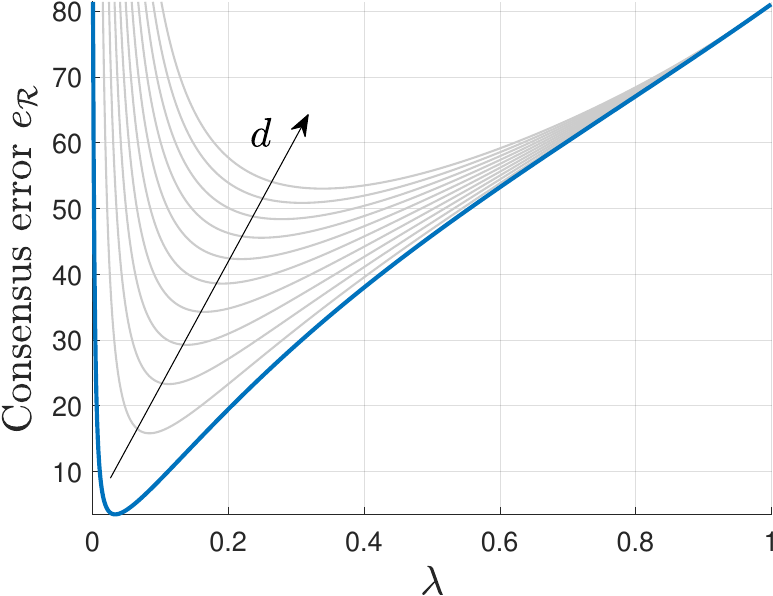}
		\caption{Average consensus error~\eqref{eq:cons-error-regular}}
		\label{fig:err_FJ_reg(3,4)_d10_mal1-4_Varn_10^(-.2)_curve}
	\end{subfigure}%
	\begin{subfigure}{0.5\linewidth}
		\centering
		\includegraphics[width=.99\linewidth]{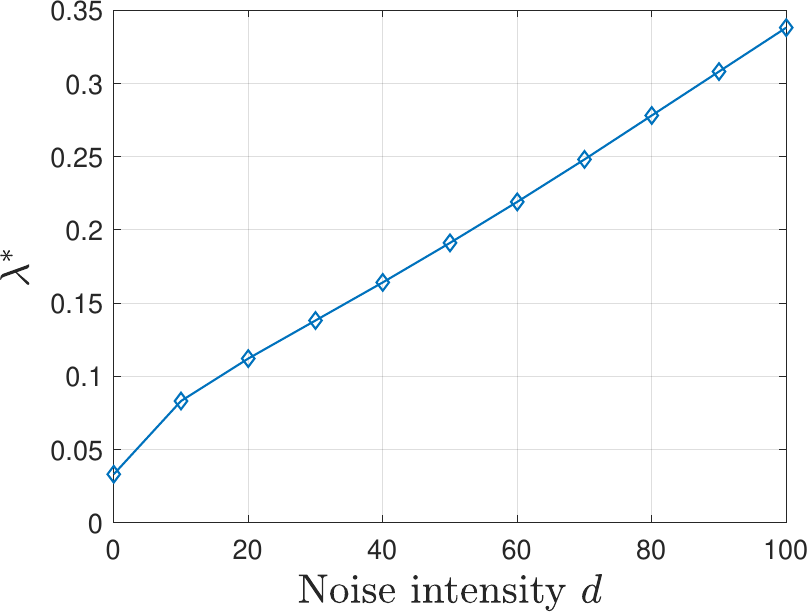}
		\caption{Optimal $ \lam $ as a function of $ d $.}
		\label{fig:err_FJ_reg(3,4)_d10_mal1-4_Varn_10^(-.2)_lam_opt}
	\end{subfigure}
	\caption{FJ dynamics consensus error with $ (3,4) $-degree communication graph,
		exponential decay of observation covariances, and one misbehaving \node.
		}
	\label{fig:err_FJ_reg(3,4)_d10_mal1-4_Varn_10^(-.2)}
\end{figure}

Finally, it is interesting to see that the error behavior observed above 
is consistent also if $ \review{W^o} $ is only row-stochastic,
thus yielding nonzero consensus error even \review{in the nominal scenario}. 
Figure~\ref{fig:err_FJ_reg(3,4)_d10_mal1-4_Varn_10^(-.2)} shows consensus error and $ \lam^* $
when each node in the graph has degree $ 3 $ or $ 4 $
and $ \review{W^o} $ has uniform weights.

Other numerical tests performed with different graphs,
observation distributions, and choice of the misbehaving \nodes
show the same quasi-convex behavior of the error function
and are omitted in the interest of space.
This reinforces and extends the scope of our formal analysis,
showing that indeed the \tradeoff
emerges as a natural resilient mechanism for multi-agent systems.


\begin{rem}[Value of optimal $ \lam $]
	A remarkable feature of the FJ dynamics
	that emerges from the tests above
	is that $ \lam^* $ is usually small
	(within the interval $ [0.1,0.2] $ in many cases).
	This translates into the practical advantage that
	adding a little competition may be sufficient to 
	get 
	a good level of resilience
	without forcing too conservative updates by regular \nodes.
\end{rem}

\subsection{\titlecap{competition-collaboration trade}-off: \titlecap{analytical insight}}\label{sec:trade-off}

As mentioned earlier, 
the consensus error function $ \ereg(\lam) $ 
is hard to study and an exhaustive analysis seems not possible.


Some intuition can be achieved from a decomposition 
that we study next.
To keep notation light, we assume a single misbehaving \node (with label $ m $)
and a diagonal covariance matrix $ \Varn $.
Then,
we can expand the consensus error as follows:
\begin{equation}\label{eq:cons-error-regular-split}
	\ereg = \underbrace{\sumreg{i}\varnode{i}\left\| L_i^{-m} - \dfrac{\one}{R}\right\|^2}_{\doteq e_{\regSet,\text{consensus}}} \: + \:
	\underbrace{\left(\varnode{m}+d\right)\left\| L_m^{-m}\right\|^2 \review{+ \eregn}}_{\doteq e_{\regSet,\text{deception}}}.
\end{equation}
In~\eqref{eq:cons-error-regular-split},
$ L_i\in\Real{N} $ is the $ i $th column of $ L $
and $ L_i^{-m}\in\Real{N-1} $ is obtained from $ L_i $ by removing its $ m $th row
(\revision{corresponding to the misbehaving \node}).
The error curves 
are shown in~\autoref{fig:trade-off}.
Equation~\eqref{eq:cons-error-regular-split} allows for an intuitive interpretation of the error,
which leverages the notion of \textit{social power}~\cite{cartwright1959studies,9327497}.

In opinion dynamics, the social power is used to quantify how much
the opinion of an agent affects the opinions of all agents.
In particular, when opinions evolve according to the FJ dynamics,
the element $ L_{ij} $ quantifies the influence of
agent $ j $ on agent $ i $:
as $ L_{ij} $ increases,
agent $ i $ is more affected by the initial opinion of agent $ j $.
The total social power of agent $ j $ is a symmetric and increasing function of all elements $ \{L_{ij}\}_{i\in\sensSet} $.%
\footnote{References~\cite{cartwright1959studies,9327497} use the arithmetic mean of $ \{L_{ij}\}_{i\in\sensSet} $.}

\begin{figure}
	\centering
	\includegraphics[width=.8\linewidth]{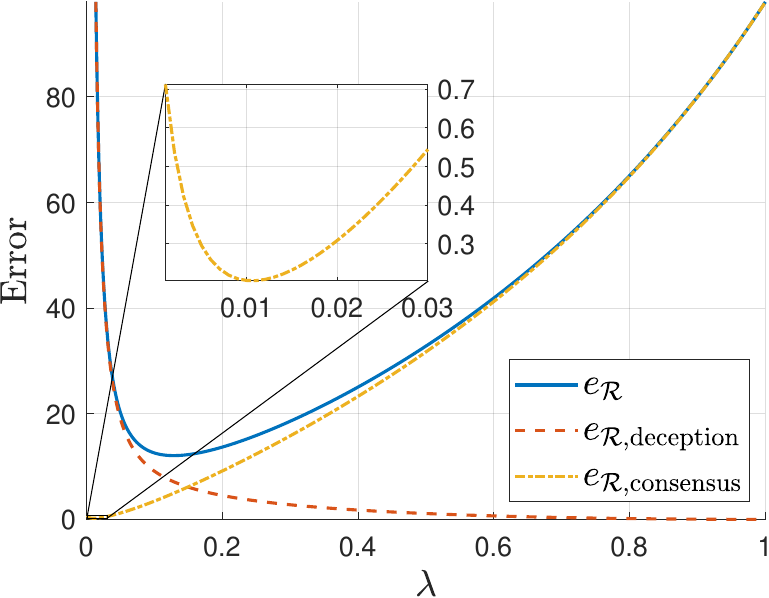}
	\caption{Consensus error and its two contributions in~\eqref{eq:cons-error-regular-split}.}
	\label{fig:trade-off}
\end{figure}

Borrowing such concepts from opinion dynamics
allows us to interpret the two contributions separated in~\eqref{eq:cons-error-regular-split}.
The first, $ e_{\mathcal{R},\text{deception}} $,
quantifies the impact of the misbehaving \node $m$ on regular \nodes.
The \textquotedblleft social power" of $ m $, 
as quantified through the vector $ L_{m}^{-m} $,
depends on the communication matrix $ W $ and on the parameter $ \lam $.
\revision{Each coordinate of $ L_{m}^{-m} $ decreases with $ \lam $,
meaning that the influence of the misbehaving \node weakens as regular \nodes anchor more tightly to their observations,
and becomes zero when $ \lam = 1 $,
namely, in the full-competition regime.
\review{We formalize this discussion as the following lemma.
	\begin{lemma}
		The component $e_{\regSet,\text{deception}}$ is decreasing with $\lam$.
	\end{lemma}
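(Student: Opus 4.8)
The plan is to split $e_{\regSet,\text{deception}}$ into the two terms displayed in~\eqref{eq:cons-error-regular-split} and treat them separately. The summand $\eregn(\lam)$ is already handled: by~\cref{lem:P-monotonic} it is (strictly) decreasing in $\lam$. Since the scalar $\varnode{m}+d$ is a fixed nonnegative number, the whole statement reduces to showing that $\norm{L_m^{-m}}$ is nonincreasing in $\lam$, where $L_m^{-m}$ is the subvector of the $m$-th column of the matrix $L$ from~\eqref{eq:L} indexed by the regular \nodes. I would in fact prove the slightly stronger fact that each coordinate of $L_m^{-m}$ is nonnegative and nonincreasing in $\lam$, from which the claim on the squared norm follows at once.

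First I would make $L_m^{-m}$ explicit using the block-triangular form of $W$ in~\eqref{eq:actual-W-partition}. With a single misbehaving \node the misbehaving block of $W$ is the scalar $1$, so $I-(1-\lam)W$ is block upper triangular with diagonal blocks $I_R-(1-\lam)\Wreg$ and $\lam$; inverting it and multiplying by $\lam$ gives, for the regular part of the $m$-th column of $L$, the closed form $L_m^{-m}=(1-\lam)\bigl(I_R-(1-\lam)\Wreg\bigr)^{-1}\Wmal$, with $\Wmal\in\Realp{R}$. Setting $\mu\doteq 1-\lam\in[0,1)$ this becomes $L_m^{-m}=\mu(I_R-\mu\Wreg)^{-1}\Wmal$, so the whole dependence on $\lam$ is routed through $\mu$.

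Then I would expand via a Neumann series. Because the graph is connected, $(1-\lam)\Wreg$ is Schur for every $\lam\in[0,1]$ — indeed $\Wreg$ is a proper substochastic principal block of an irreducible row-stochastic matrix, hence $\rho(\Wreg)<1$ — so $L_m^{-m}=\sum_{k\ge0}\mu^{k+1}\Wreg^{\,k}\Wmal$ converges. Each summand $\mu^{k+1}\Wreg^{\,k}\Wmal$ is coordinatewise nonnegative (as $\Wreg,\Wmal\ge0$) and coordinatewise nondecreasing in $\mu\in[0,1)$; hence every coordinate of $L_m^{-m}$ is nonnegative and nondecreasing in $\mu$, i.e.\ nonincreasing in $\lam$, and therefore so is $\norm{L_m^{-m}}=\sum_{i\in\regSet}(L_m^{-m})_i^2$. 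Adding back the (strictly decreasing) term $\eregn(\lam)$ yields the claim. The one delicate point is the nonnegativity and monotonicity of the resolvent $(I_R-\mu\Wreg)^{-1}$; the Neumann-series representation disposes of it cleanly, and it is also what secures the continuous extension at $\lam=0$ (equivalently $\mu=1$), where the bound $\rho(\Wreg)<1$ still holds.
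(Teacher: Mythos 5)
Your proposal is correct and establishes exactly the right intermediate fact --- that each coordinate of $L_m^{-m}$ is nonnegative and nonincreasing in $\lam$, so that $\|L_m^{-m}\|^2$ is nonincreasing and adding the strictly decreasing term $\eregn$ from \cref{lem:P-monotonic} finishes the job --- but it gets there by a different route than the paper. The paper works with the derivative identity $\frac{dL}{d\lam}=\frac{1}{\lam}L(I-WL)$ and reads off that the regular-to-misbehaving block of $L(I-WL)$ is entrywise nonpositive, invoking separately the nonnegativity of the stochastic matrix $L$ (see \cref{app:L-derivative-limit}). You instead exploit the block-triangular structure of $W$ in~\eqref{eq:actual-W-partition} to obtain the closed form $L_m^{-m}=(1-\lam)\left(I_R-(1-\lam)\Wreg\right)^{-1}\Wmal$ and expand it as the Neumann series $\sum_{k\ge0}(1-\lam)^{k+1}\Wreg^{k}\Wmal$, whose terms are entrywise nonnegative and monotone in $\mu=1-\lam$; your justification of convergence on all of $[0,1]$ (a proper principal submatrix of an irreducible row-stochastic matrix has spectral radius strictly below one) is sound and matches the paper's standing connectivity assumption. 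Your version is more elementary and self-contained, delivering nonnegativity, monotonicity, and the continuous extension at $\lam=0$ in one stroke; its only limitation is that the closed form is written for $M=1$, which is precisely the setting in which the decomposition~\eqref{eq:cons-error-regular-split} is stated, whereas the paper's derivative computation applies verbatim to any number of misbehaving agents.
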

	\begin{proof}
		By computing the derivative of $L$ w.r.t. $\lam$,
		we see that each element of $L_{m}^{-m}$ is nonincreasing with $\lam$.
		Because $L$ is a nonnegative matrix,
		this and~\cref{lem:P-monotonic} yield the claim.
		See~\cref{app:L-derivative-limit} for the detailed calculations.
	\end{proof}
}}

The second contribution  $ e_{\mathcal{R},\text{consensus}} $
measures \textquotedblleft democracy" among regular \nodes,
\ie it is proportional to the mismatch between how much each regular \node affects the others
and the ideal value $ \nicefrac{1}{R} $,
which means that each \node affects all others equally.
This cost is zero if and only if the submatrix of $ L $
corresponding to interactions among regular \nodes is the consensus matrix: 
this can happen only if 
they do not interact with the misbehaving \node~\cite{7577815},
in which case
the vector $ L_m^{-m} $ is zero
(the misbehavior has no effect).
In this special case,
$ e_{\mathcal{R},\text{consensus}} $ is zero at $ \lam = 0 $
and increases monotonically as the network shifts
from a democratic system
where \nodes fully collaborate ($ \lam = 0 $)
to a disconnected system where \nodes fully compete ($ \lam = 1 $).
Conversely,
with misbehaving \nodes, 
$ e_{\mathcal{R},\text{consensus}} $ has a nontrivial minimizer (zoomed box in~\autoref{fig:trade-off}).
For small $ \lam $,
the misbehaving \node overrules all interactions
and regular \nodes hardly affect each other.
As $ \lam $ increases,
the interactions among regular \nodes become more relevant, 
making $ e_{\mathcal{R},\text{consensus}} $ decrease.
However, 
as $ \lam $ grows further,
the competition among regular \nodes becomes too aggressive
and makes them shift away from an ideal democratic system.

Overall, 
the error~\eqref{eq:cons-error-regular} has two concurrent causes that yield two regimes:
collaboration with misbehaving \nodes is most misleading for small $ \lam $,
while for large $ \lam $ the error is mainly due to regular \nodes that compete against each other 
and reject useful information shared by neighbors.
This matches intuition from~\eqref{eq:FJ-dynamics}
where $ \lam $ measures conservatism in \node updates.

\section{\titlecap{the role of the communication network}}\label{sec:network-optimization}

In the previous sections,
we discussed the benefits of using a competition-based approach (FJ dynamics)
to tame misbehaving \nodes. 
We now shift attention to the communication network,
in order to achieve intuition about resilient topologies. 
\revision{In~\autoref{sec:performance-metrics}, we introduce a second performance metric
which we use to evaluate resilience to attacks.
In~\autoref{sec:studying-network-topology},
we observe how performance varies with connectivity.}

\subsection{\revision{\titlecap{performance metrics}}}\label{sec:performance-metrics}
Besides consensus error,
we also aim to 
assess energy \review{spent to misbehave}. 
To this aim,
we interpret~\eqref{eq:system-dynamics} as a controlled system
where 
the misbehaving \nodes command the input $ \review{\xnode{\malSet}{\cdot}} $.
The controllability Gramian in $ K $ steps,
denoted by $ \Gram{K} $,
is defined for system~\eqref{eq:system-dynamics} as
\begin{equation}\label{eq:gramian}
	\Gram{K} = \sum_{k=0}^{K-1}A^kBB^\top (A^\top)^k.
\end{equation}
The controllability Gramian can be used to quantify the control effort:
the trace of $ \Gram{K} $,
called \emph{controllability index},
is inversely related to the control energy spent in $ K $ steps  (averaged over the reachable subspace),
as shown in literature~\cite{SRIGHAKOLLAPU2021105061,gu2015controllability,10.1093/comnet/cnz001}.
In words,
a small controllability index means that the misbehaviors
consume a lot of energy to steer $ \xreg $ across the reachable space, 
which may be desired to possibly drain out adversarial resources and hamper an external attack.

\review{If $M=1$,}
the controllability index can be written as
\begin{equation}\label{eq:trace-gramian}
	\tr{\Gram{K}} = (1-\lam)^2\sum_{k=0}^{K-1}\Big\|(1-\lam)^{k}\Wreg^k\review{\Wmal} \Big\|^2,
\end{equation}
resembling the consensus error component $ e_{\regSet,\text{deception}} $ in~\eqref{eq:cons-error-regular-split},
\begin{equation}\label{eq:cons-error-rewritten}
	e_{\regSet,\text{deception}} 
	\propto \Bigg\|\sum_{k=0}^{\infty}(1-\lam)^k\sum_{j=0}^{k-1}\Wreg^j\review{\Wmal} \Bigg\|^2.
\end{equation}
Both $ \tr{\Gram{K}} $ and $ e_{\regSet,\text{deception}} $ are decreasing with $ \lam $
(\ie the more competition, the better)
and depend on the vectors $ \Wreg^k\review{\Wmal} $
that describe how attacks spread in $ k $ steps.
The discount factor $ (1-\lam)^k $ makes the tail of the series in~\eqref{eq:cons-error-rewritten} negligible,
enhancing similarity between those two metrics.

\review{\begin{rem}[Controllability index]
	While we use~\cref{ass:mal-node-dynamics} to compute $\ereg$,
		the controllability Gramian in~\eqref{eq:gramian} is independent of the trajectory of the system
		and hence the controllability index evaluates an ``average trajectory'' of misbehaving \nodes.
\end{rem}}

\subsection{Network Connectivity \vs Resilience}\label{sec:studying-network-topology}


\review{We now explore how connectivity of the communication network
affects performance and resilience of the dynamics~\eqref{eq:FJ-dynamics}.
While in this sections we attempt to achieve heuristic intuition,
an analytical investigation is deferred to future work.
To this aim,
we fix the parameter $\lam = 0.1$
and numerically evaluate the theoretical performance as the density of the communication network increases.
Specifically,
for each evaluated network,
we assign uniform weights to the links
and compute consensus error $\ereg$ and controllability index $\tr{\Gram{K}}$
(where $K$ is the reachability index)
\review{selecting} \review{some \nodes as misbehaving} according to either of the following two cases:
\begin{itemize}
	\item the worst-case misbehaving \node,
	\ie $\malSet=\{m^*\}$ with
	\begin{align}
		&m^*=\argmax_{m\in\sensSet} \;\; \ereg, \label{eq:optimization-network-err}\\
		&m^*=\argmax_{m\in\sensSet} \;\; \tr{\Gram{K}}; \label{eq:optimization-network-contr}
	\end{align}
	\item five misbehaving \nodes randomly drawn from $\sensSet$.
\end{itemize}
We consider three common classes of graphs:
regular graphs with degree $\Delta$, 
Erd\"os-R\'enyi random graphs,
where a link between any two nodes exists with probability $p$,
and random geometric graphs,
where nodes are randomly placed in $[0,1]^2$
and any two nodes are linked if their distance is not greater than a radius $\rho$.
While regular graphs induce a doubly stochastic matrix even with simple uniform weights,
this is generally not true for the other graphs.
Hence,
to evaluate the consensus error $\ereg$,
we considered both the deviation from the nominal average defined in~\eqref{eq:cons-error-regular}
and the deviation from the consensus value computed from the left Perron eigenvector of the nominal weight matrix $W^o$.
Given that the results were qualitatively equal,
we report only the first case in the interest of space.

We consider networks with $N = 100$ \nodes
and compute the performance for each network
(\ie a combination of class of graph and density parameter)
by averaging over $1000$ random graphs for the worst-case misbehaving \node
and over $5000$ random graphs for the random selection of misbehaving \nodes.
The results are shown in~\cref{fig:graph-regular,fig:graph-erdos_renyi,fig:graph-geometric},
with the consensus error on the left and the controllability index on the right.}


\begin{figure}
	\centering
	\begin{minipage}[l]{.5\linewidth}
		\centering
		\includegraphics[width=\linewidth]{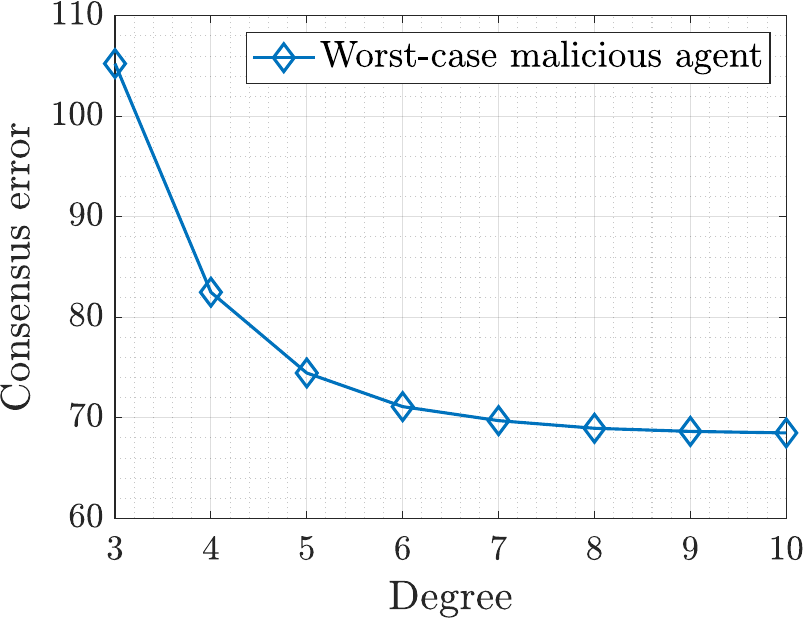}
	\end{minipage}%
	\begin{minipage}[r]{.5\linewidth}
		\centering
		\includegraphics[width=\linewidth]{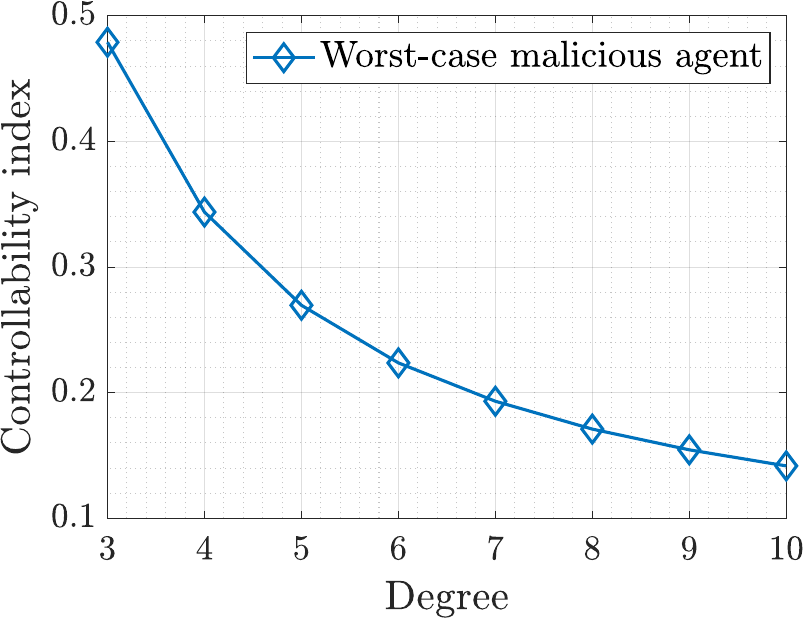}
	\end{minipage}\\
	\begin{minipage}[l]{.5\linewidth}
		\centering
		\includegraphics[width=\linewidth]{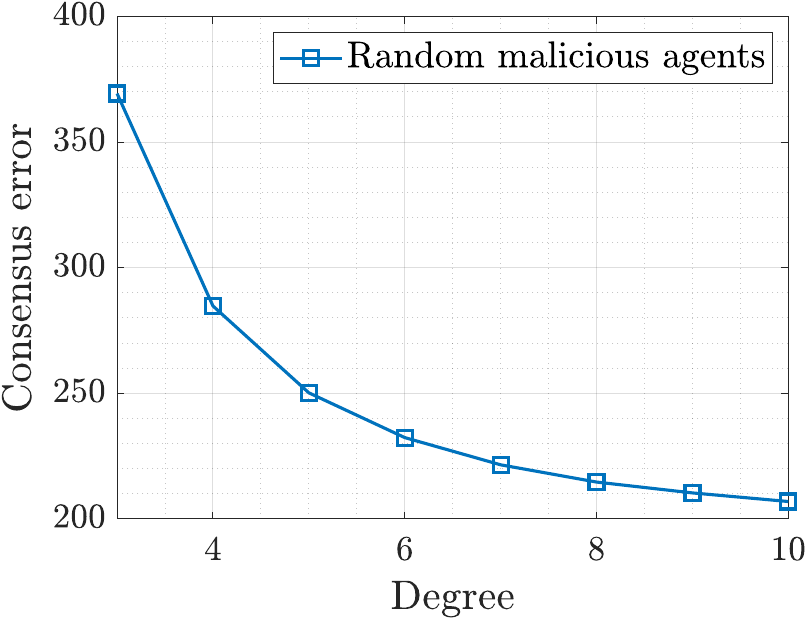}
	\end{minipage}%
	\begin{minipage}[r]{.5\linewidth}
		\centering
		\includegraphics[width=\linewidth]{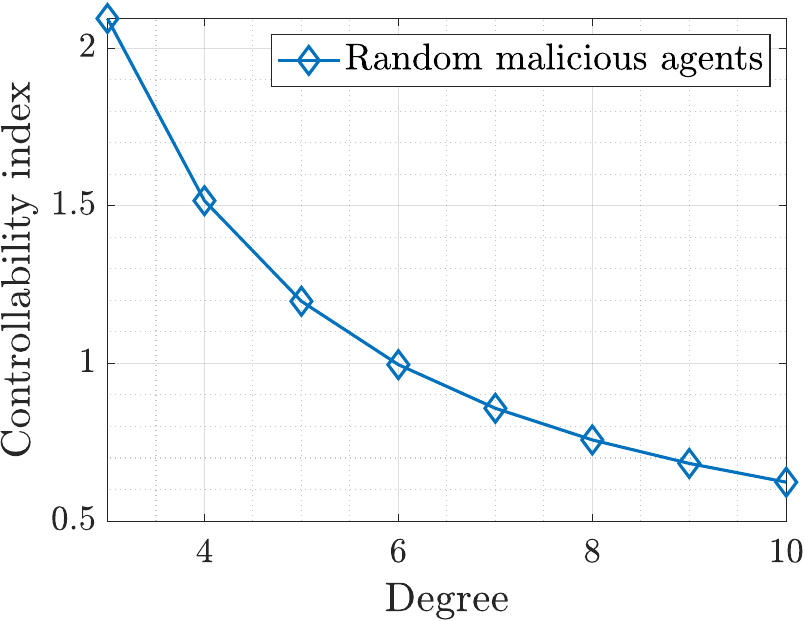}
	\end{minipage}
	\caption{\review{Average performance metrics for 
		regular graphs.} 
	}
	\label{fig:graph-regular}
\end{figure}

\begin{figure}
	\centering
	\begin{minipage}[l]{.5\linewidth}
		\centering
		\includegraphics[width=\linewidth]{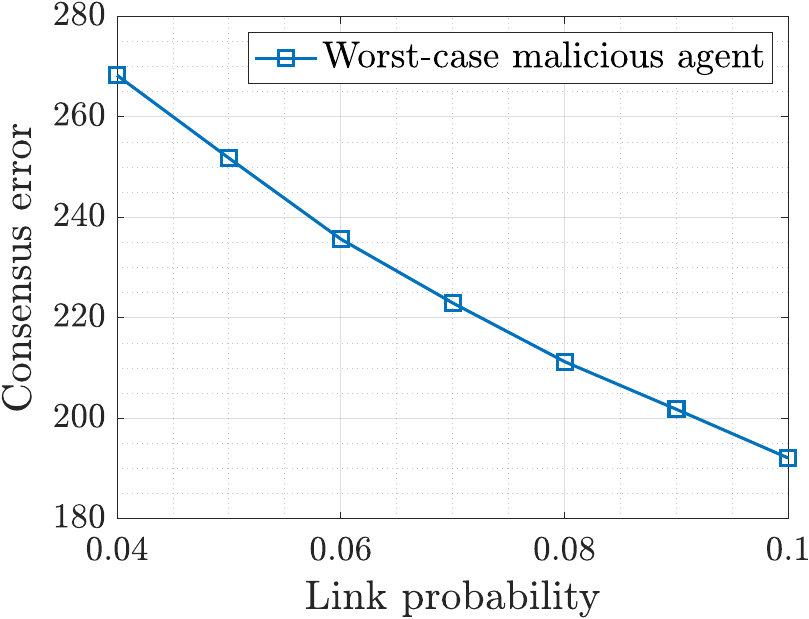}
	\end{minipage}%
	\begin{minipage}[r]{.5\linewidth}
		\centering
		\includegraphics[width=\linewidth]{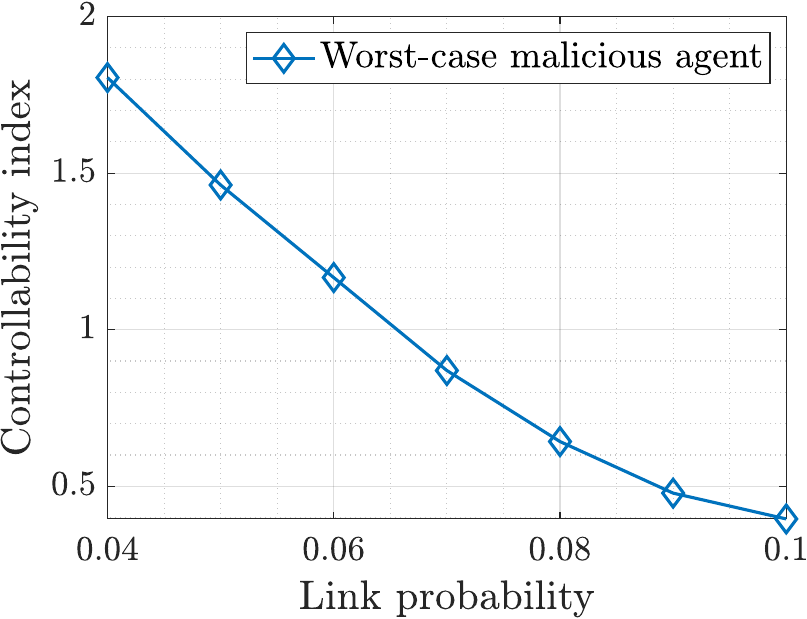}
	\end{minipage}\\
	\begin{minipage}[l]{.5\linewidth}
		\centering
		\includegraphics[width=\linewidth]{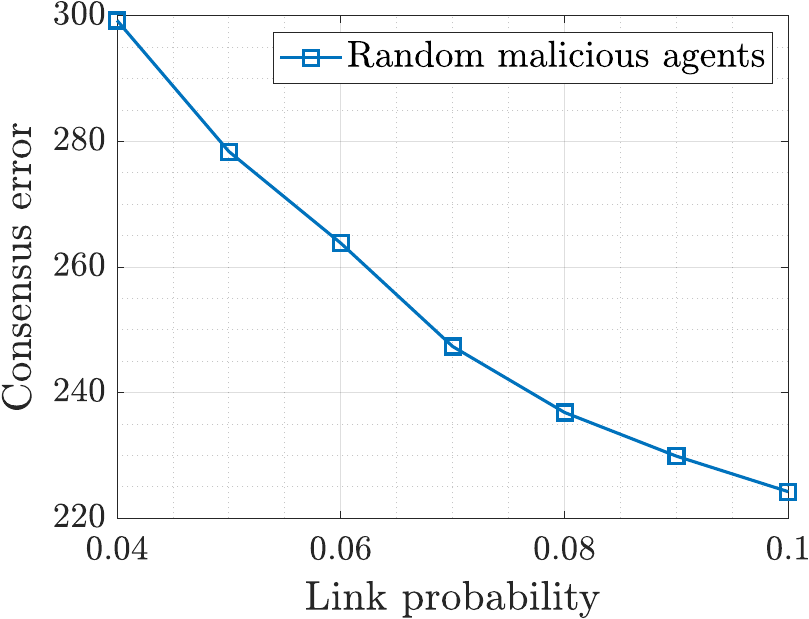}
	\end{minipage}%
	\begin{minipage}[r]{.5\linewidth}
		\centering
		\includegraphics[width=\linewidth]{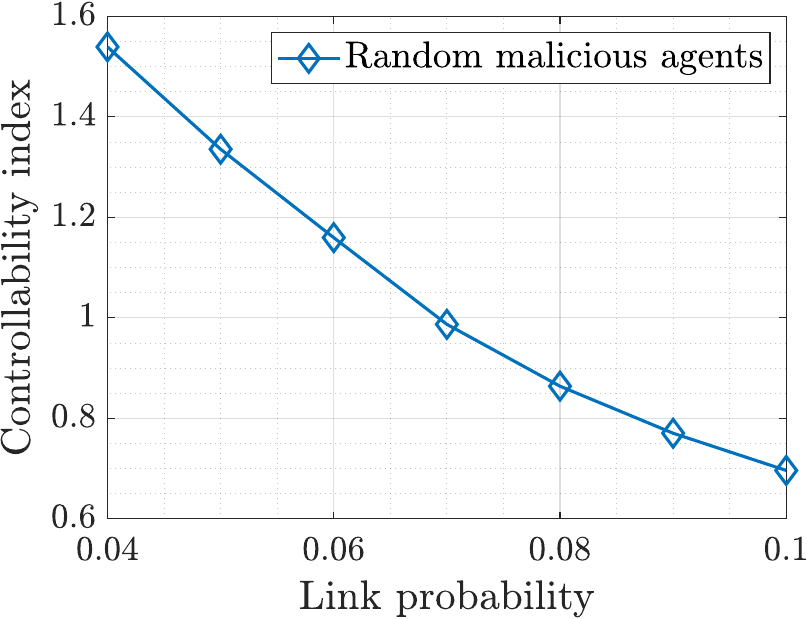}
	\end{minipage}
	\caption{\review{Average performance metrics for 
		Erd\"os-R\'enyi random graphs.} 
	}
	\label{fig:graph-erdos_renyi}
\end{figure}

\begin{figure}
	\centering
	\begin{minipage}[l]{.5\linewidth}
		\centering
		\includegraphics[width=\linewidth]{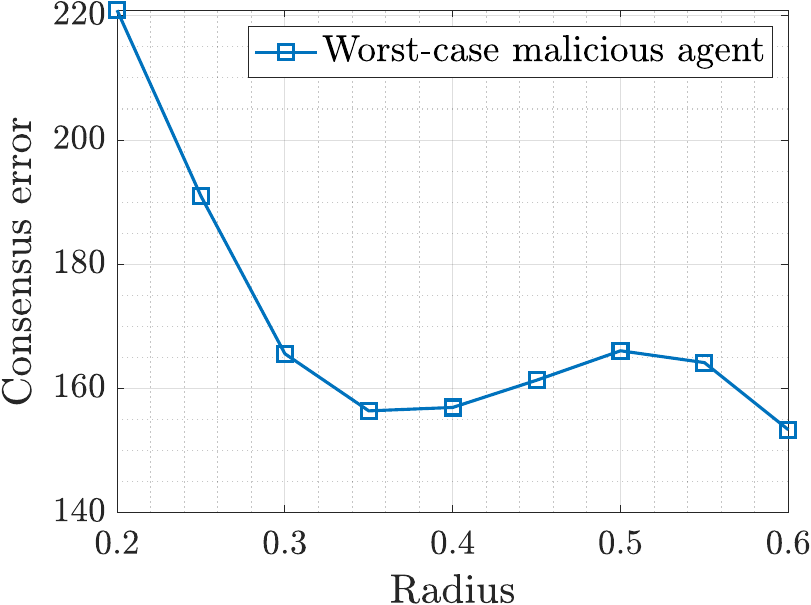}
	\end{minipage}%
	\begin{minipage}[r]{.5\linewidth}
		\centering
		\includegraphics[width=\linewidth]{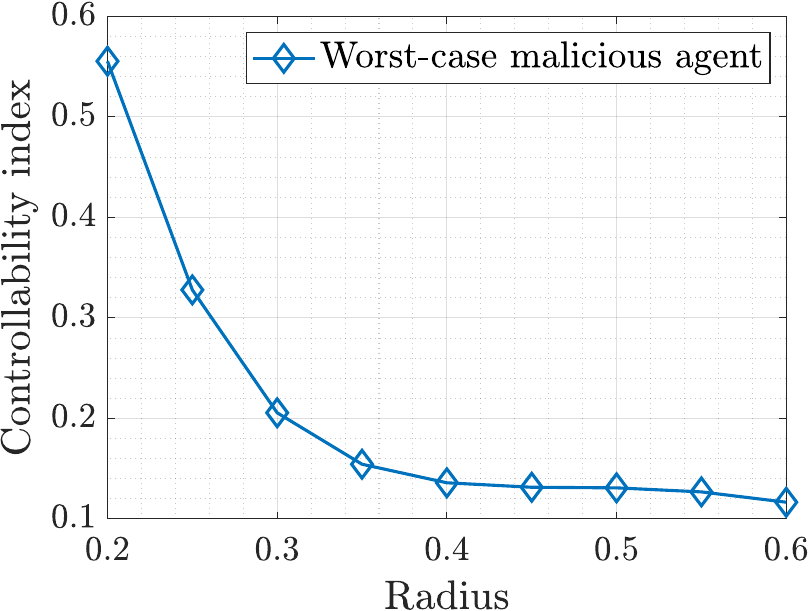}
	\end{minipage}\\
	\begin{minipage}[l]{.5\linewidth}
		\centering
		\includegraphics[width=\linewidth]{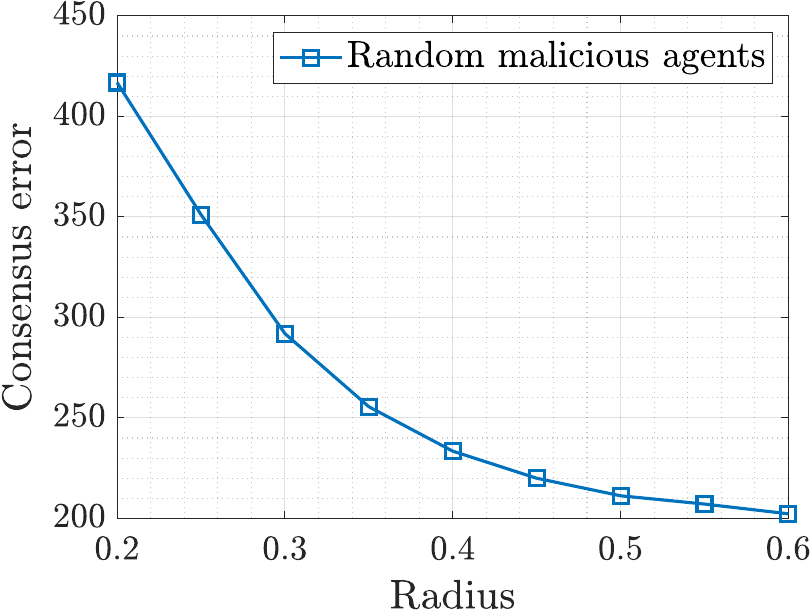}
	\end{minipage}%
	\begin{minipage}[r]{.5\linewidth}
		\centering
		\includegraphics[width=\linewidth]{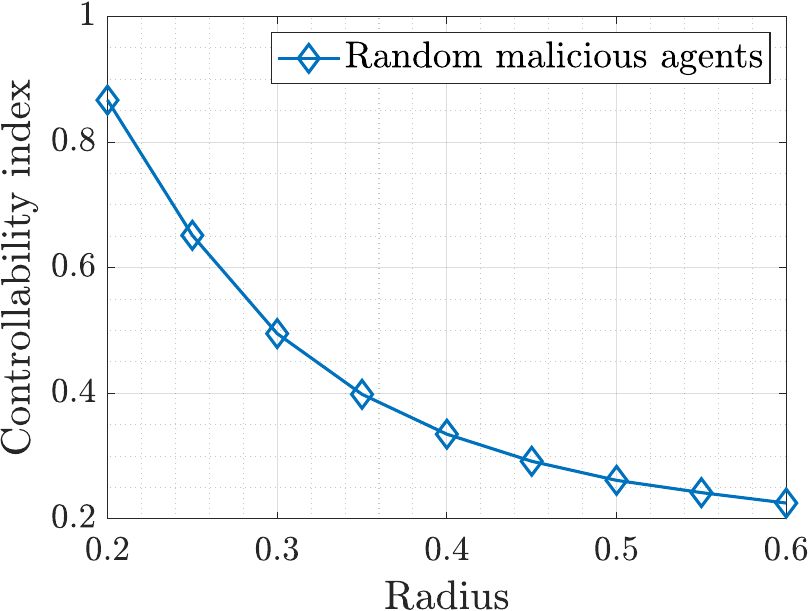}
	\end{minipage}
	\caption{\review{Average performance metrics for 
		random geometric graphs.} 
	}
	\label{fig:graph-geometric}
\end{figure}

%

The main insight is that,
on average,
increasing the graph connectivity mitigates attacks
with respect to both metrics.
Intuitively, this is because 
high degrees mean many interactions among regular \nodes
that the misbehaving \node cannot control directly.
\review{The only remarkable difference is noted in random geometric graphs
with the worst-case misbehaving \node
(top-left box in~\autoref{fig:graph-geometric}),
for which increasing the radius from $0.35$ to $0.5$ also increases the consensus error.
This might be due to the formation of hubs,
that is,
densely connected areas that emerge and become denser as the radius increases,
which an adversary can exploit to quickly spread damage to a large portion of the network.
Notably,
this phenomenon is absent both for the same class of graphs with
random selection of misbehaving \nodes (bottom-left box of~\autoref{fig:graph-geometric})
and in the case of Erd\"os-R\'enyi random graphs (\autoref{fig:graph-erdos_renyi}),
which also typically feature some dense areas
-- even though not with the small world structure typical of random geometric graphs,
see~\cref{fig:erdosrenyi_p4overN__graph,fig:geom_rad025_mal20_graph}.
A deeper study of this phenomenon is an interesting direction of future research.}

\review{Besides density and number of links,
an aspect that also seems to play a role in resiliency is degree balance among nodes.
This can be somehow deduced by the plots referred to the same selection strategy of misbehaving \node:
for example,
with worst-case misbehaving \nodes,
regular graphs exhibit the smallest costs,
random geometric graphs
-- where usually nodes have similar number of neighbors --
yield worse performance,
and Erd\"os-R\'enyi random graphs
-- where both highly connected and almost isolated nodes coexist --
have the largest costs.}

\revision{\review{To more carefully investigate how performance varies with degree balance},
we consider \emph{almost-regular} graphs,
namely, where nodes have degree either $ \Delta $ or $ \Delta - 1 $ for some $ \Delta $.
This corresponds to \textquotedblleft middle-ways" between $ \Delta $- and $ (\Delta-1) $-regular graphs,
which could be ideally placed between two consecutive ticks (degrees) $ \Delta $ and $ \Delta - 1 $ on the $ x $-axis of~\autoref{fig:graph-regular}.

\review{More specifically, 
starting from a $ \Delta $-regular graph, 
we iteratively remove one edge at a time so as to minimize performance degradation
while selecting the worst-case misbehaving \node at each time.
This amounts to removing the edge $e$ that solves}
\begin{align}
	&\min_{e\in\mathcal{E}}\max_{m\in\sensSet} \;\; \ereg(\mathcal{E}\setminus\{e\}), \label{eq:optimization-network-err-remove-one-edge}\\
	&\min_{e\in\mathcal{E}}\max_{m\in\sensSet} \;\; \tr{\Gram{K}(\mathcal{E}\setminus\{e\})}, \label{eq:optimization-network-contr-remove-one-edge}
\end{align}
\review{where $ \mathcal{E} $ is the set of edges (nonzero elements of $ W $)
and we set $ W $ with uniform weights after each removal.}
To get almost-regular graphs,
we remove at most one edge per node.}

\begin{figure}
	\centering
	\begin{minipage}[l]{.5\linewidth}
		\centering
		\includegraphics[width=\linewidth]{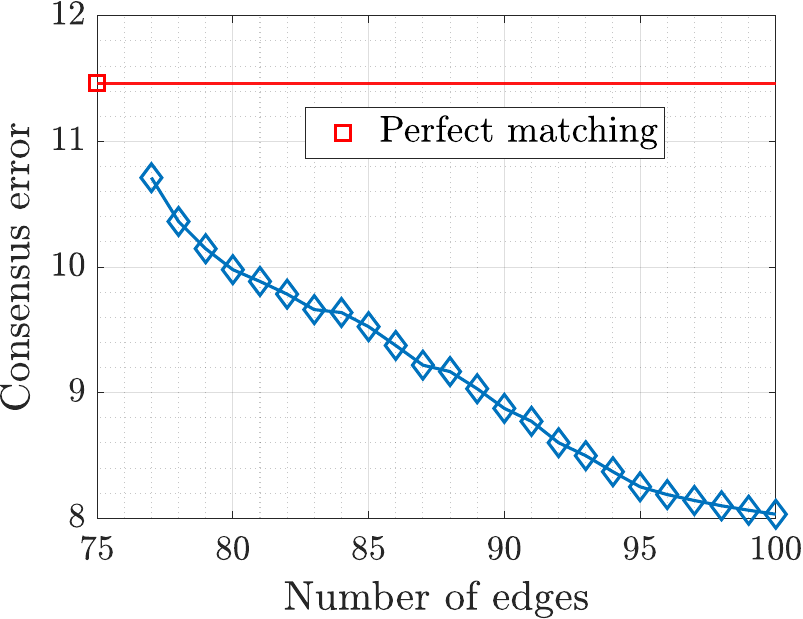}
	\end{minipage}%
	\begin{minipage}[r]{.5\linewidth}
		\centering
		\includegraphics[width=\linewidth]{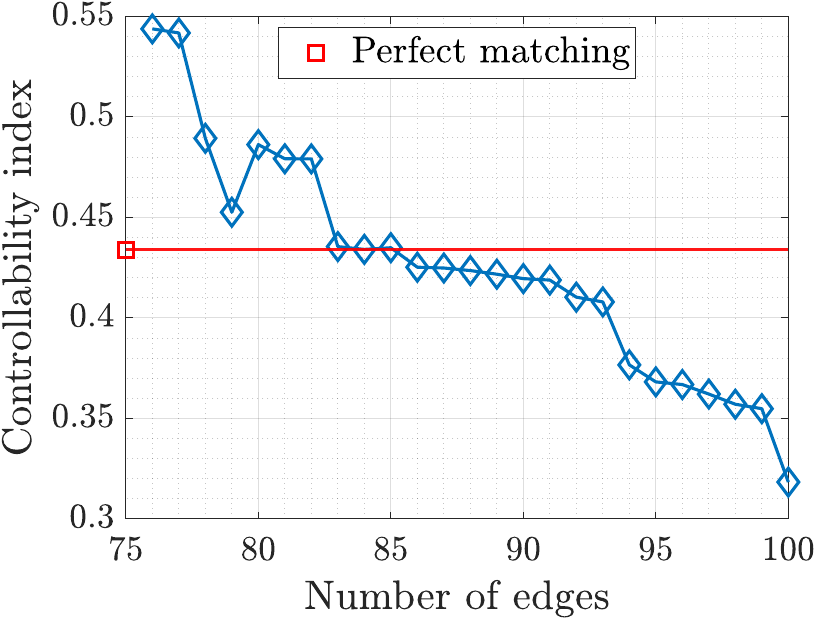}
	\end{minipage}
	\caption{Consensus error (left) and controllability index (right) for almost-regular graphs
		starting from a $ 4 $-regular graph with $ \lam = 0.2 $.
		Edge removal proceeds from right (initially, all $ 100 $ edges are present)
		towards left.
		At each iteration,
		one edge is removed so as to minimize performance degradation
		according to~\eqref{eq:optimization-network-err-remove-one-edge}--\eqref{eq:optimization-network-contr-remove-one-edge}
		while enforcing that each node has degree either three or four.
		At the last iteration (leftmost diamonds),
		most or all nodes have degree three,
		with possibly a few nodes left with degree four.
		The red squares show the performance metrics for a $ 3 $-regular graph
		obtained by removing a perfect matching (set of edges) from the initial $ 4 $-regular graph.
	}
	\label{fig:net-opt-non-regular-lam-0.2}
\end{figure}
\begin{figure}
	\centering
	\begin{minipage}[l]{.5\linewidth}
		\centering
		\includegraphics[width=\linewidth]{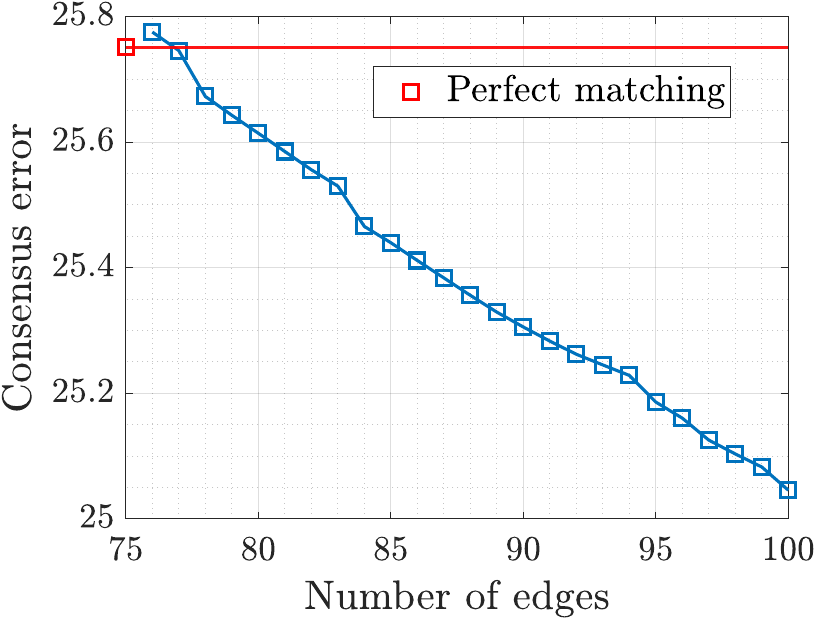}
	\end{minipage}%
	\begin{minipage}[r]{.5\linewidth}
		\centering
		\includegraphics[width=\linewidth]{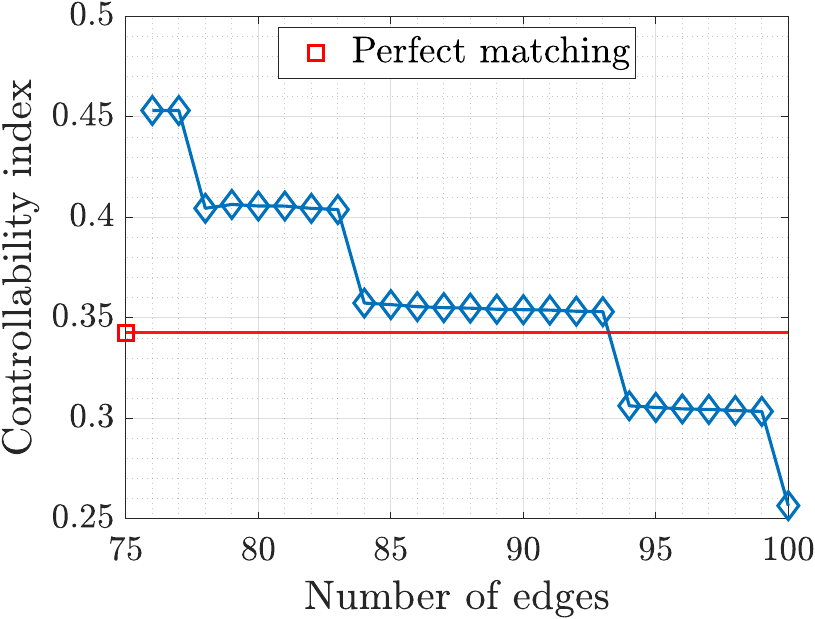}
	\end{minipage}
	\caption{Consensus error (left) and controllability index (right) for almost-regular graphs
		starting from a $ 4 $-regular graph with $ \lam = 0.7 $.}
	\label{fig:net-opt-non-regular-lam-0.7}
\end{figure}

Figures~\ref{fig:net-opt-non-regular-lam-0.2}--\ref{fig:net-opt-non-regular-lam-0.7}
show the performance obtained starting from a $ 4 $-regular graph with $ 50 $ nodes
($ 100 $ edges in total, 
corresponding to the rightmost point in the plots) and gradually pruning edges according to~\eqref{eq:optimization-network-err-remove-one-edge}--\ref{eq:optimization-network-contr-remove-one-edge}
(proceeding leftwards on the $ x $-axis).
Also, performance with a $ 3 $-regular graphs
obtained by removing perfect matchings from the initial $ 4 $-regular graphs
are shown for comparison.\footnote{
	A matching is a set of edges that do not share nodes.
	A maximum matching is a matching of maximal cardinality, and a perfect matching is a maximum matching such that each node is incident to one edge (\emph{total coverage}). 
	Note that our edge removal strategy need not remove exactly one edge for each node in the graph,
	because we constrain the resulting graphs to be almost regular.
	For example,
	the iterative removal corresponding to~\autoref{fig:net-opt-non-regular-lam-0.7}
	stops before reaching a $3$-regular graph.
}
Remarkably, performance degrades (almost) monotonically for both performance metrics as edges are removed. 
This may be explained by a combination of lower connectivity
and degree unbalance,
which 
allows the adversary to exploit highly connected \nodes to make more effective damage against low-connected regular \nodes.

\review{Interestingly, while the consensus error increases smoothly as edges are removed,
the controllability index exhibits \textquotedblleft jumps".
This is evident with large $ \lam $,
as~\autoref{fig:net-opt-non-regular-lam-0.7} shows.
Such a behavior suggests the presence of critical subsets of edges
and might give indication about critical links to be kept or removed.}

Further, in almost all tests (not shown here in the interest of space),
the $ 3 $-regular graph obtained by removing a perfect matching
yielded better performance compared to the last edge removal
(leftmost marker on the blue curve).
\review{This suggests that increasing connectivity may not be beneficial
if it entails less degree balance:
in~\autoref{fig:net-opt-non-regular-lam-0.7},
the $ 3 $-regular graph 
reduces both the consensus error and the controllability index w.r.t. the last graphs obtained by pruning edges
(leftmost markers),
which have one node with degree $ 4 $ and all others with degree $ 3 $.
In particular,
the latter metric is reduced by $ 22\% $ and is comparable to graphs having most nodes with degree $ 4 $.}
However, as shown in~\autoref{fig:net-opt-non-regular-lam-0.2},
a regular graph of degree $ \Delta-1 $ obtained by
removing a perfect matching (not related to performance metrics)
from a $ \Delta $-regular graph
may yield worse performance than almost-regular graphs.
This gives further insight: \review{an arbitrary edge selection
may perform substantially worse compared to a task-related strategy}.

\section{Comparison with Existing Literature}\label{sec:literature-comparison}

In this section, we test our proposed protocol 
and compare its performance with other approaches in the literature.

Many techniques have been proposed 
to mitigate misbehaving \nodes.
However,
they usually focus on reaching a generic consensus,
possibly while keeping the states of regular \nodes within a safe region (usually defined by initial conditions),
and do not consider \textit{\review{performance of}} \emph{average consensus},
which here is key to the distributed optimization task,
as argued in~\autoref{sec:setup}. 
Indeed,
most resilient consensus strategies aim to make the regular \nodes agree on, 
\eg a common location (such as in robot gathering) in the face of misleading interactions, 
but need not relate the consensus value to the initial locations.

We compare two strategies: Weighted Mean Subsequence Reduced (W-MSR)~\cite{6481629}
and Secure Accepting and Broadcasting Algorithm (SABA)~\cite{8814959}.
As noted in~\autoref{sec:intro-related-literature},
many resilient algorithms 
adapt W-MSR to specific applications
and enjoy the same guarantees. 
W-MSR suffers from two main limitations 
related to $ r $-robustness, 
which is the cornerstone of all theoretical analysis.
First, while sufficient conditions for resilient consensus are clear,
there is little clue about necessary conditions.
This translates into an unknown behavior of the system
if $r$-robustness does not hold.
While $ r $-robustness has proved a good characterization for update rules based on W-MSR,
it raises practical limitations.
On the one hand, the communication network 
may be fixed but not robust enough.
On the other hand,
checking $ r $-robustness 
is computationally intractable for large-scale networks~\cite{7447011}.
Thus, in some cases, for example with a sparse structure,
a more conservative behavior with provable performance bounds may be preferred.
Also, W-MSR requires to estimate the number of misbehaving \nodes
affecting the network. 
This may be an issue:
if the estimate is too low, 
regular \nodes may be deceived and average consensus disrupted,
whereas,
if it is too high,
the updates may be too conservative,
possibly preventing convergence. 
Further, misbehaviors could happen in a time-varying fashion and make the $r$-robustness fail at times,
yielding poor performance overall.
SABA does not estimate the number of misbehaving \nodes,
but stores all received values in a buffer
and processes them with a voting strategy.
However, 
this design may impose impractical memory requirements, 
and the convergence of SABA is still ensured under a minimal $ r $-robustness.

In the next simulations,
we consider $N=100$ \nodes interacting through sparse communication networks,
whose low connectivity hampers W-MSR and SABA,
and matrices $ W^o $ with homogeneous weights.
As performance metric,
we computed the objective cost of the distributed optimization task~\eqref{eq:network-cost-regular}, 
which equals $ \ereg $ up to additive constants,
cf.~\autoref{sec:performance-metric}.
The observations are drawn as $ \priornode{}\sim\gauss(0,0.1I)$
and each misbehaving \node $ m $ {is assigned a deception bias $ v_m\in[2,6] $}.
For each scenario,
we chose the parameter $ \lam $
by selecting the minimizer of the theoretical error $ \ereg(\lambda) $ with ${V = 5I_M}$.

\extended{}{\begin{figure}
		\centering
		\begin{subfigure}{0.5\linewidth}
			\centering
			\includegraphics[height=.78\linewidth]{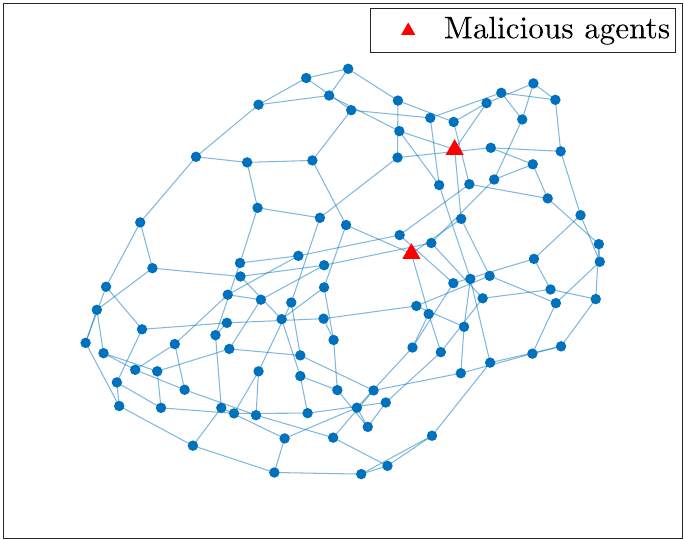}
			\caption{Communication network.}
			\label{fig:reg(3)_graph}
		\end{subfigure}%
		\hfill
		\begin{subfigure}{0.5\linewidth}
			\centering
			\includegraphics[height=.78\linewidth]{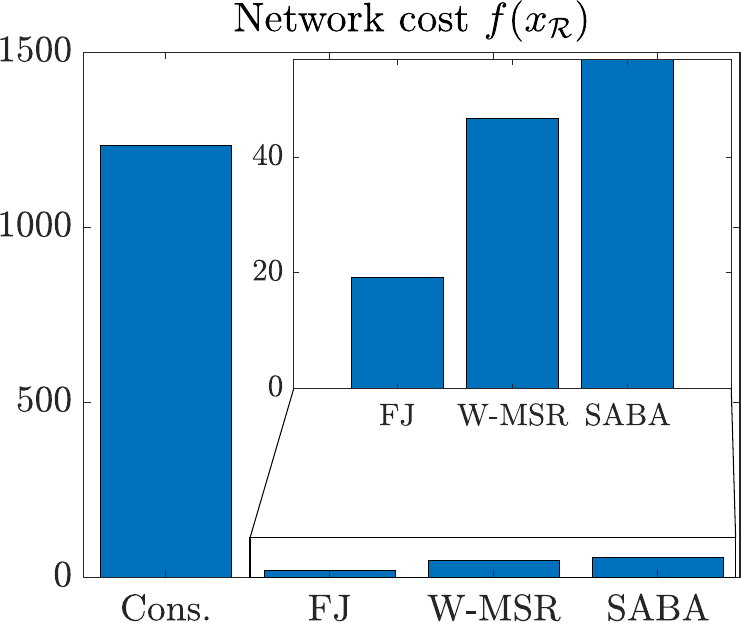}
			\caption{Cost~\eqref{eq:network-cost-regular} of regular \nodes.}
			\label{fig:reg(3)_net_cost}
		\end{subfigure}
		\caption{Comparison among consensus,
			FJ,
			W-MSR~\cite{6481629},
			and SABA~\cite{8814959} with $ 3 $-regular graph and two misbehaving \nodes.
		}
		\label{fig:reg(3)}
	\end{figure}
	
	Figure~\ref{fig:reg(3)} illustrates a network 
	where \nodes interact on a $ 3 $-regular graph (\autoref{fig:reg(3)_graph})
	with two misbehaving \nodes (red triangles). 
	Importantly,
	$ 3 $-regular graphs are not $ r $-robust enough to tolerate misbehaving \nodes,
	and therefore theoretical guarantees of MSR-based approaches do not hold. 
	We implement W-MSR assuming that each regular \node has at most one misbehaving neighbor,
	because larger values make updates trivial,
	\ie $ \xnode{i}{k} \equiv \xnode{i}{0} $. 
	Such limitations allow dynamics~\eqref{eq:FJ-dynamics} to outperform both W-MSR and SABA,
	as shown in~\autoref{fig:reg(3)_net_cost}.}

\begin{figure}
	\centering
	\begin{subfigure}{0.5\linewidth}
		\centering
		\includegraphics[height=.78\linewidth]{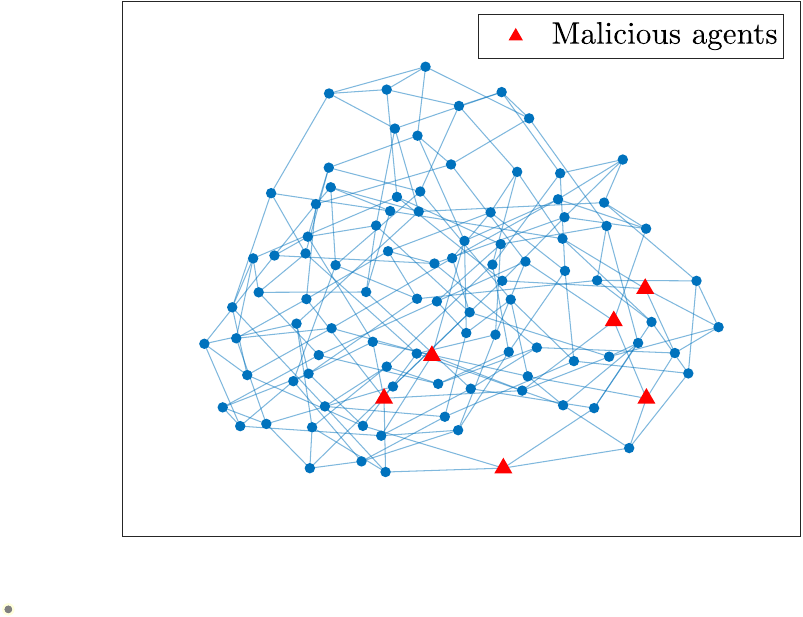}
		\caption{Communication network.}
		\label{fig:reg(4)_graph}
	\end{subfigure}%
	\hfill
	\begin{subfigure}{0.5\linewidth}
		\centering
		\includegraphics[height=.78\linewidth]{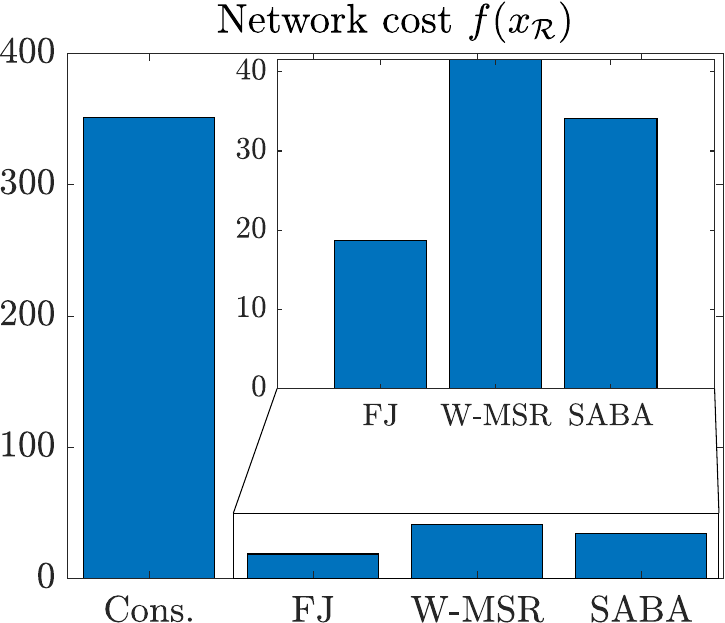}
		\caption{Cost~\eqref{eq:network-cost-regular} of regular \nodes.}
		\label{fig:reg(4)_net_cost}
	\end{subfigure}
	\caption{Comparison among consensus,
		FJ,
		W-MSR~\cite{6481629},
		and SABA~\cite{8814959} with $ 4 $-regular graph and six misbehaving \nodes.
	}
	\label{fig:reg(4)}
\end{figure}

\extended{
	Figure~\ref{fig:reg(4)} illustrates a network 
	where the agents use a
}{In our second experiment,
	we use a denser,}
regular graph with degree $\Delta=4$ as communication network 
with six misbehaving \nodes (\autoref{fig:reg(4)_graph}).
\extended{We implement W-MSR assuming that each regular \node has (at most) one misbehaving neighbor
	because larger values make updates trivial,
	\ie $ \xnode{i}{k} \equiv \xnode{i}{0} $.}{}
However,
some misbehaving \nodes communicate with the same regular \nodes
(\eg the two in the bottom-right portion of the graph),
making this scenario challenging for W-MSR and SABA
whose $r$-robustness requirement suffers the sparse communication graph.
While both SABA and W-MSR perform poorly (\autoref{fig:reg(4)_net_cost}),
our approach mitigates the attacks by setting $ \lam $ at a suitably large value.

\extended{}{\begin{figure}
		\centering
		\begin{subfigure}{0.5\linewidth}
			\centering
			\includegraphics[height=.78\linewidth]{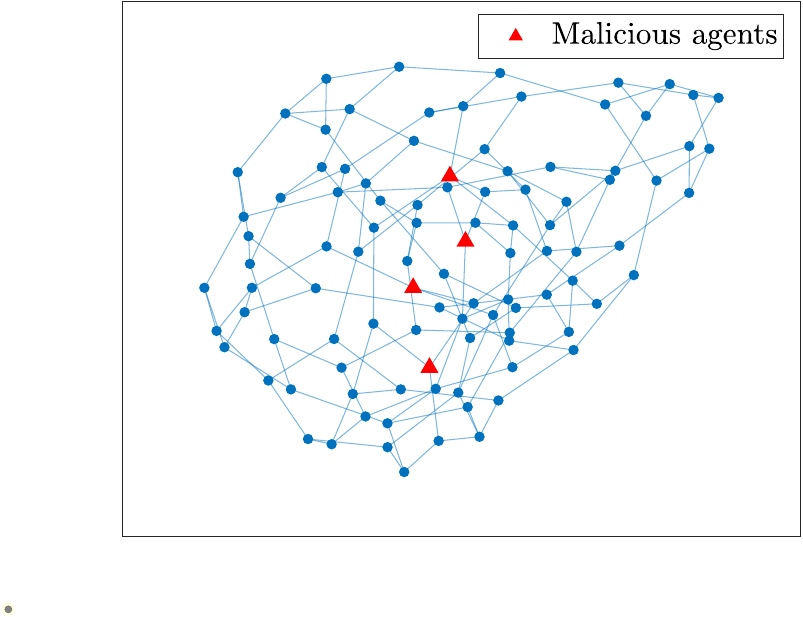}
			\caption{Communication network.}
			\label{fig:reg(3,4)_graph}
		\end{subfigure}%
		\hfill
		\begin{subfigure}{0.5\linewidth}
			\centering
			\includegraphics[height=.78\linewidth]{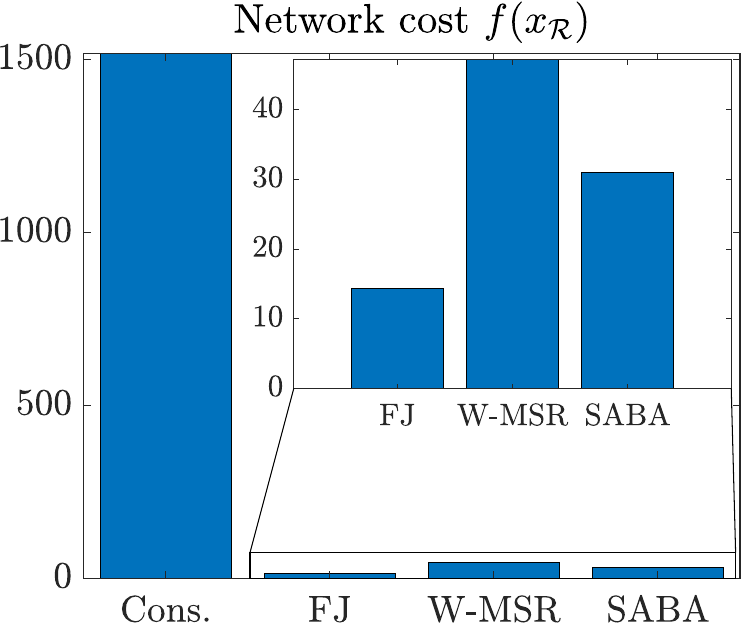}
			\caption{Cost~\eqref{eq:network-cost-regular} of regular \nodes.}
			\label{fig:reg(3,4)_net_cost}
		\end{subfigure}
		\caption{Comparison among consensus,
			FJ,
			W-MSR~\cite{6481629},
			and SABA~\cite{8814959} with \review{almost-regular} graph \review{with degrees $(3,4)$} and four misbehaving \nodes.
		}
		\label{fig:reg(3,4)}
	\end{figure}
	In~\autoref{fig:reg(3,4)},
	we consider a network where nodes have degree three or four (\autoref{fig:reg(3,4)_graph})
	and $ W $ is row stochastic. 
	In this case, one may question whether a doubly-stochastic matrix
	could improve performance of the standard consensus protocol,
	in light of its optimality under nominal conditions.
	However, in the presence of misbehaving \nodes,
	standard consensus always converges to the average of the misbehaving states
	regardless of weights in $ W $ (cf.~\cref{ass:mal-node-dynamics} and~\eqref{eq:cons-err-mal} in Appendix).
	Conversely,~\autoref{fig:reg(3,4)_net_cost} shows that dynamics~\eqref{eq:FJ-dynamics} 
	is robust against misbehaving \nodes
	even though it cannot retrieve the optimal solution under nominal conditions.}

\extended{}{\begin{figure}
		\centering
		\begin{subfigure}{0.5\linewidth}
			\centering
			\includegraphics[height=.78\linewidth]{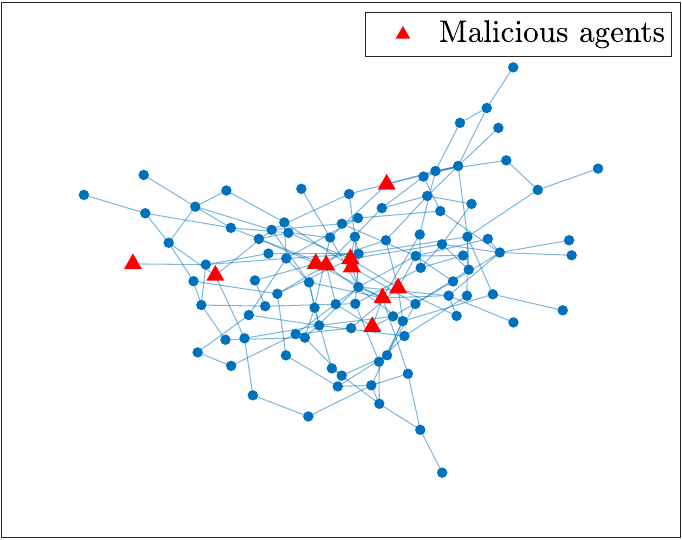}
			\caption{Communication network.}
			\label{fig:erdosrenyi_p3overN__graph}
		\end{subfigure}%
		\begin{subfigure}{0.5\linewidth}
			\centering
			\includegraphics[height=.78\linewidth]{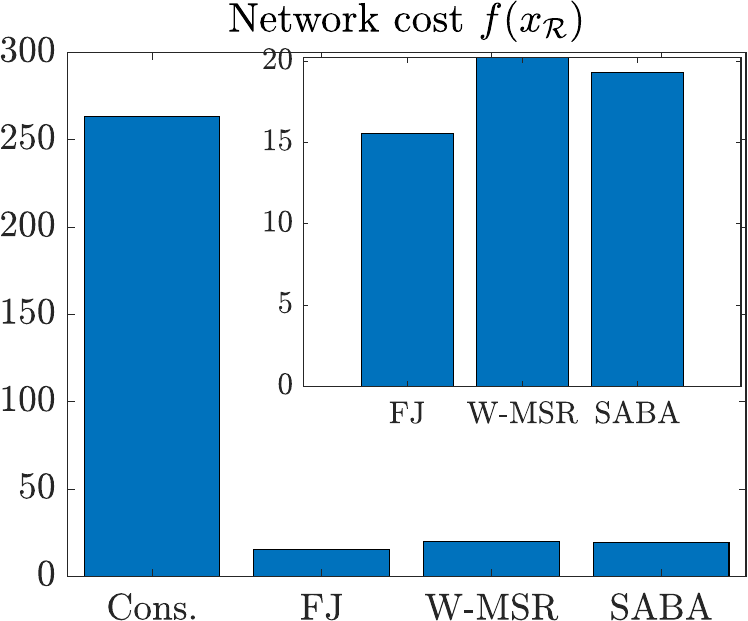}
			\caption{Cost~\eqref{eq:network-cost-regular} of regular \nodes.}
			\label{fig:erdosrenyi_p3overN__net_cost}
		\end{subfigure}
		\caption{\review{Comparison among consensus,
				FJ,
				W-MSR~\cite{6481629},
				and SABA~\cite{8814959} with Erd\"os-R\'enyi random graph with $p=\nicefrac{3}{N}$ and ten misbehaving \nodes.}
		}
		\label{fig:erdosrenyi_p3overN}
\end{figure}}

\begin{figure}
	\centering
	\begin{subfigure}{0.5\linewidth}
		\centering
		\includegraphics[height=.78\linewidth]{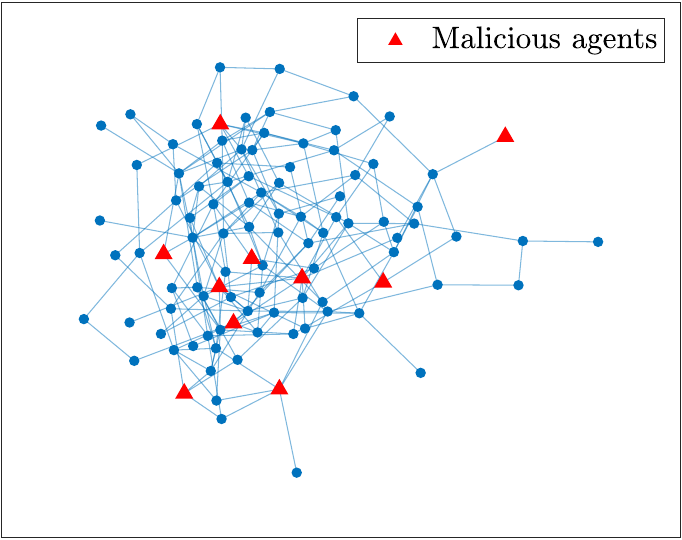}
		\caption{Communication network.}
		\label{fig:erdosrenyi_p4overN__graph}
	\end{subfigure}%
	\begin{subfigure}{0.5\linewidth}
		\centering
		\includegraphics[height=.78\linewidth]{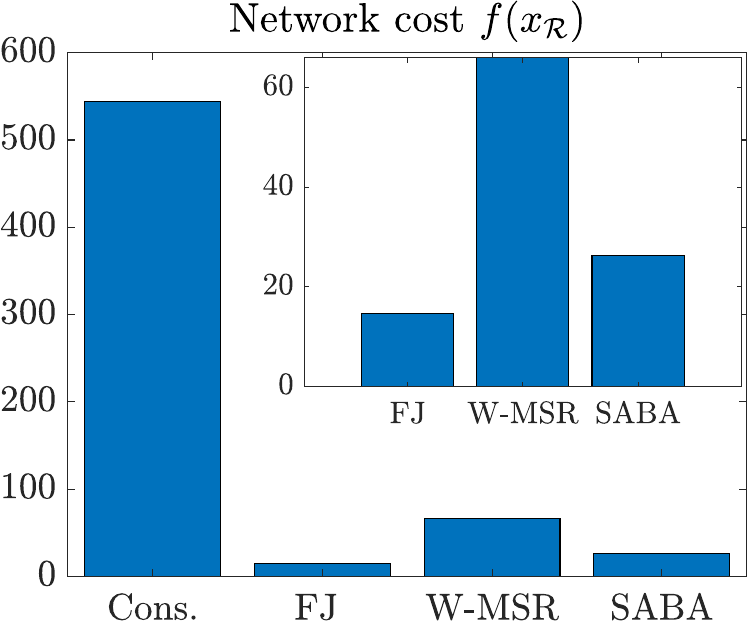}
		\caption{Cost~\eqref{eq:network-cost-regular} of regular \nodes.}
		\label{fig:erdosrenyi_p4overN__net_cost}
	\end{subfigure}
	\caption{\review{Comparison among consensus,
			FJ,
			W-MSR~\cite{6481629},
			and SABA~\cite{8814959} with Erd\"os-R\'enyi random graph with $p=\nicefrac{4}{N}$ and ten misbehaving \nodes.}
	}
	\label{fig:erdosrenyi_p4overN}
\end{figure}

\review{
	\extended{In~\cref{fig:erdosrenyi_p4overN},}{In~\cref{fig:erdosrenyi_p3overN,fig:erdosrenyi_p4overN},}
	we simulate the protocols over \extended{an}{two} Erd\"os-R\'enyi random \extended{graph}{graphs}
	with link probability \extended{}{$p=\nicefrac{3}{N}$ and} $p=\nicefrac{4}{N}$\extended{.}{,
		respectively}
	(hence,
	each agent has $(N-1)p$ neighbors on average),
	and ten misbehaving \nodes ($10\%$ of the total number of \nodes).
	Note that the matrix $W$ is row stochastic.
	\extended{Also in this case,}{In both cases,}
	the dynamics~\eqref{eq:FJ-dynamics} tames the numerous attacks
	better than the confronted approaches.
}

\extended{}{\begin{figure}
		\centering
		\begin{subfigure}{0.5\linewidth}
			\centering
			\includegraphics[height=.78\linewidth]{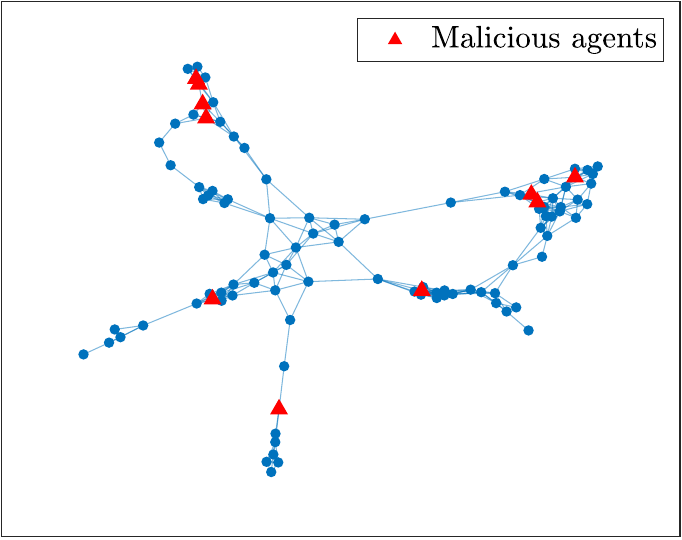}
			\caption{Communication network.}
			\label{fig:geom_rad015_mal10_graph}
		\end{subfigure}%
		\begin{subfigure}{0.5\linewidth}
			\centering
			\includegraphics[height=.78\linewidth]{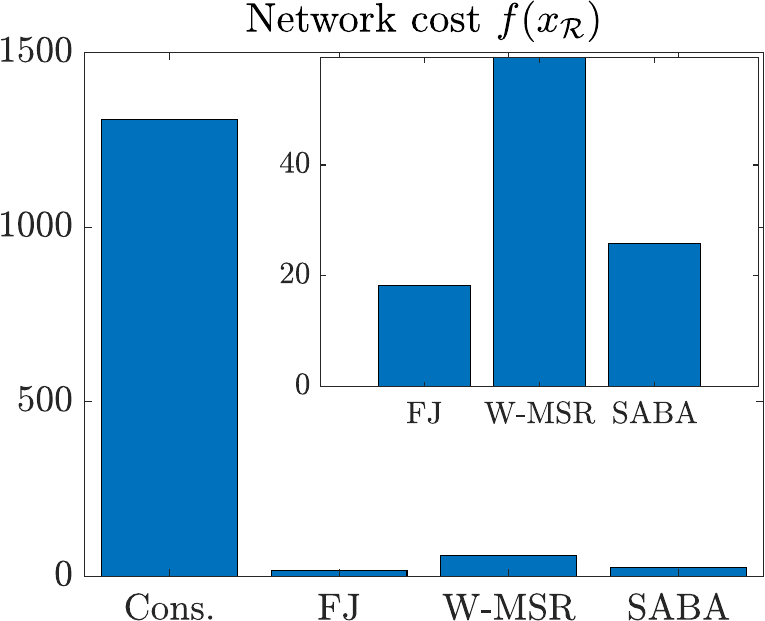}
			\caption{Cost~\eqref{eq:network-cost-regular} of regular \nodes.}
			\label{fig:geom_rad015_mal10_net_cost}
		\end{subfigure}
		\caption{\review{Comparison among consensus,
				FJ,
				W-MSR~\cite{6481629},
				and SABA~\cite{8814959} with random geometric graph with $\rho=0.15$ and ten misbehaving \nodes.}
		}
		\label{fig:geom_rad015_mal10}
	\end{figure}
	
	\begin{figure}
		\centering
		\begin{subfigure}{0.5\linewidth}
			\centering
			\includegraphics[height=.78\linewidth]{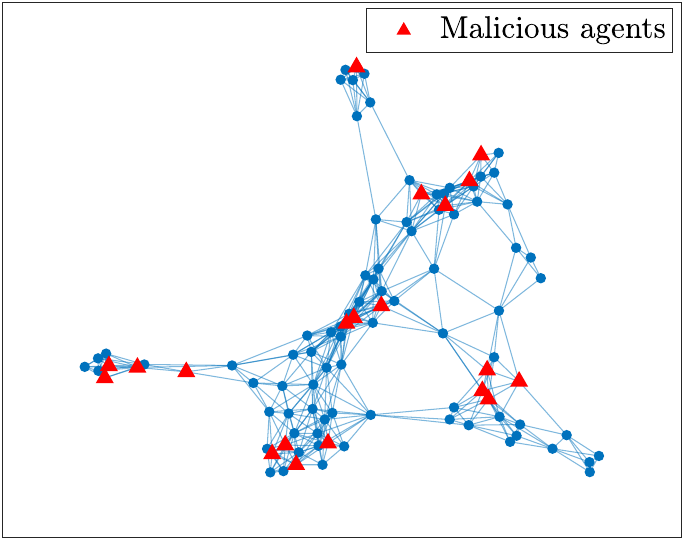}
			\caption{Communication network.}
			\label{fig:geom_rad02_mal20_graph}
		\end{subfigure}%
		\begin{subfigure}{0.5\linewidth}
			\centering
			\includegraphics[height=.78\linewidth]{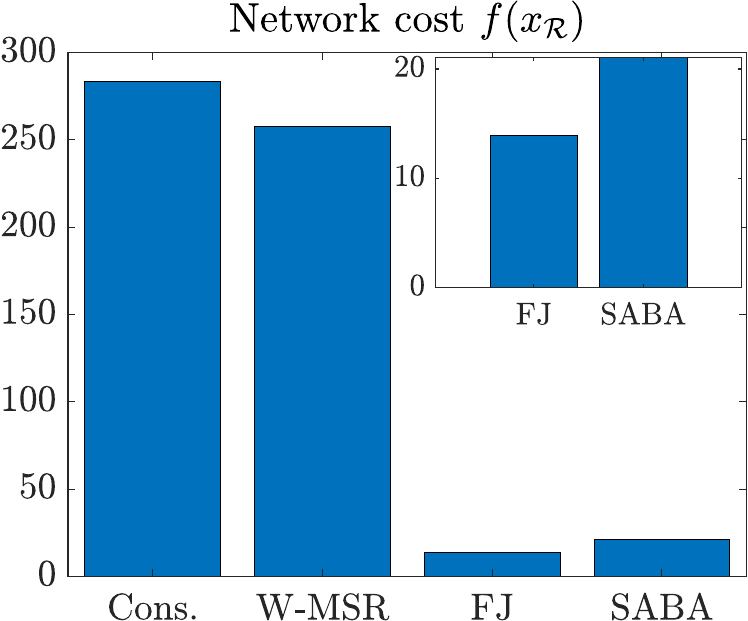}
			\caption{Cost~\eqref{eq:network-cost-regular} of regular \nodes.}
			\label{fig:geom_rad02_mal20_net_cost}
		\end{subfigure}
		\caption{\review{Comparison among consensus,
				FJ,
				W-MSR~\cite{6481629},
				and SABA~\cite{8814959} with random geometric graph with $\rho=0.2$ and twenty misbehaving \nodes.}
		}
		\label{fig:geom_rad02_mal20}
\end{figure}}

\begin{figure}
	\centering
	\begin{subfigure}{0.5\linewidth}
		\centering
		\includegraphics[height=.78\linewidth]{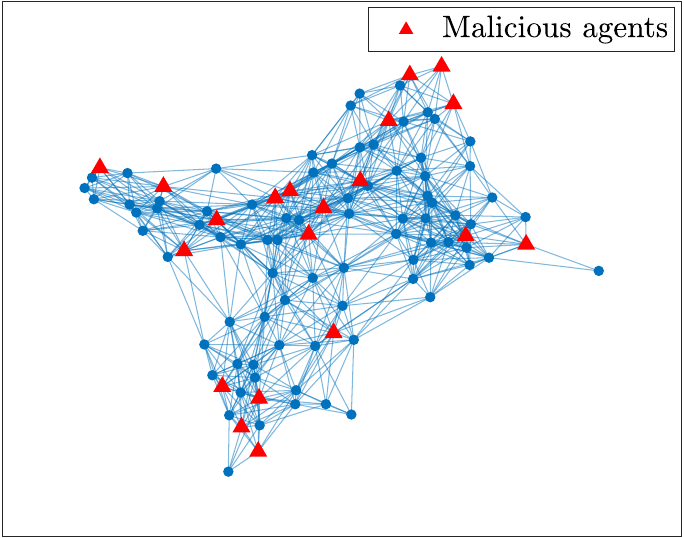}
		\caption{Communication network.}
		\label{fig:geom_rad025_mal20_graph}
	\end{subfigure}%
	\begin{subfigure}{0.5\linewidth}
		\centering
		\includegraphics[height=.78\linewidth]{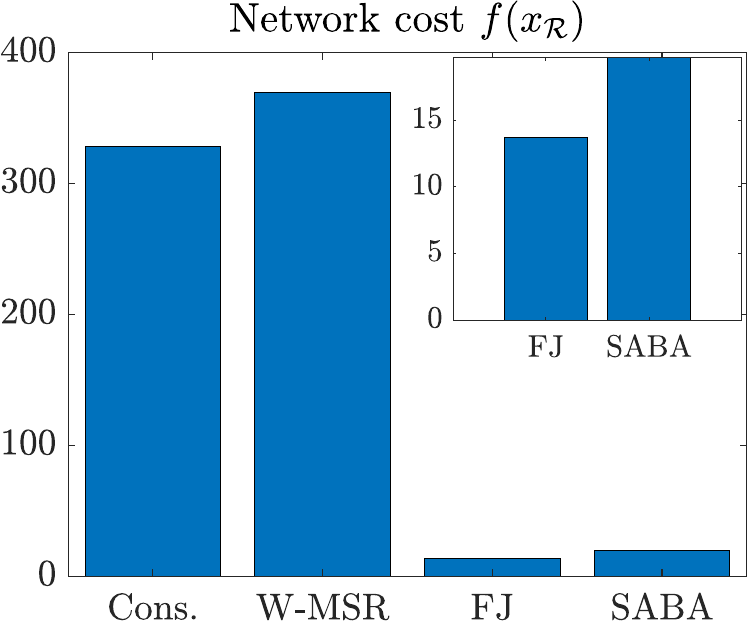}
		\caption{Cost~\eqref{eq:network-cost-regular} of regular \nodes.}
		\label{fig:geom_rad025_mal20_net_cost}
	\end{subfigure}
	\caption{\review{Comparison among consensus,
			FJ,
			W-MSR~\cite{6481629},
			and SABA~\cite{8814959} with random geometric graph with $\rho=0.25$ and twenty misbehaving \nodes.}
	}
	\label{fig:geom_rad025_mal20}
\end{figure}

\review{
	Finally,
	we address \extended{a}{} random geometric \extended{graph}{graphs}
	\extended{with radius $\rho=0.25$
		and twenty misbehaving \nodes in~\autoref{fig:geom_rad025_mal20}.
	}{with several radii ($0.15$ in~\autoref{fig:geom_rad015_mal10},
		$0.20$ in~\autoref{fig:geom_rad02_mal20},
		and $0.25$ in~\autoref{fig:geom_rad025_mal20})
		and increasing amounts of misbehaving \nodes
		to overcome the higher density of the network.}
	Also in this case,
	the matrix $W$ is row stochastic.
	Interestingly,
	W-MSR is rather challenged by this class of graphs,
	yielding \extended{a}{} large \extended{cost}{costs}.
	On the other hand,
	the dynamics~\eqref{eq:FJ-dynamics} again manages to keep the error small
	compared to the other algorithms.
}

\extended{\review{Many other simulations that validated our approach are not reported here in the interest of space
		and can be found in~\cite{arxiv}.}
}{}

\begin{rem}[Advantages of FJ dynamics]
	The experiments above highlight some advantages of the proposed approach.
	Firstly, the tunable parameter $\lam$
	makes the algorithm flexible,
	because it can smoothly adapt to a different attack intensity 
	while still providing decent performance.
	Further, while the optimal parameterization
	requires exact knowledge of the adversary,
	which may not be reasonably assumed,
	yet our proposed approach proves pretty robust to the choice of a specific $ \lam $,
	\review{as shown in Figs.~\ref{fig:err_FJ_reg(3)_d10-100_1mal_Varn_10^(-.2)}--\ref{fig:err_FJ_reg(3,4)_d10_mal1-4_Varn_10^(-.2)}
		where the error is kept small around $\lam^*$.}
	This also holds with row-stochastic matrices,
	enabling simple weighing rules to be locally implemented.
	In contrast, in other approaches the cost function may be highly sensitive to
	some design parameters,
	\eg the estimated number of misbehaving \nodes in W-MSR.
	Further, most works in the literature do not describe the
	system behavior when resilient consensus is not guaranteed.
	In fact, they usually either ensure that the states of the agents remain
	inside the convex hull of the initial conditions
	(which may be equivalent to setting $ \lam  = 1 $ in~\eqref{eq:FJ-dynamics}), or let agents
	reach consensus but potentially be steered far away from
	initial conditions~\cite{8798516}.
	Finally, computational complexity and memory requirements are minimal,
	which is typically desired for resource-constrained devices.
\end{rem}

\section{\titlecap{conclusion and future work}}\label{sec:conclusion}

In this article,
we have proposed a competition-based protocol based on the Friedkin-Johnsen dynamics
to mitigate a class of misbehavior that disrupts a quadratic distributed optimization task.
We have presented formal results and numerical experiments on performance and optimal parametrization,
and showed that our approach can outperform state-of-the-art algorithms. 
Further, 
we have discussed the competition-collaboration trade-off with analytical arguments
that are insightful towards a deeper understanding of the fundamental properties of the system
in the presence of misbehaviors.
Finally, we have addressed network design and
\review{explored how resilience relates to graph connectivity},
looking both at the optimization performance and at the energy spend to misbehave.

This approach opens several avenues for future research.
Firstly,
it is desirable to address an effective design
of parameters $ \lam_i $'s in the realistic case where knowledge about the attack is scarce.
This may also involve online reweighing of protocol parameters,
for example in the realm of recent work
where weights are updated
via trust information or evidence theory~\cite{8798516,9468419}.

Secondly, the more general and challenging scenario of distributed optimization
should be addressed.
In this case, a common approach is to alternate local descent steps 
to consensus updates to steer all \nodes towards a common point\revision{~\cite{9241497,7134753}}.
Here, the task-tailored descent steps may critically impact performance
even if consensus steps are made resilient.

A third research avenue involves zero-sum games
to model interactions among agents~\cite{8788623,8567999}.
In particular, in asymmetric zero-sum games,
one player has more knowledge than the other,
which is a suitable model for worst-case attacks.
In this case,
a relevant challenge is determining the optimal strategies for both players,
to ultimately derive effective resilient algorithms in the presence of intelligent adversaries.

Finally,
it is interesting to deeply investigate the design of the communication network.
While graph robustness to node or edge failures has been extensively addressed~\cite{1431045,SOHOUENOU2020100353,8885721,NEURIPS2019_e2f374c3},
the novel element given by the dynamics~\eqref{eq:FJ-dynamics} calls for a tailored investigation
as heuristically motivated in~\autoref{sec:network-optimization}.
Also, in the spirit of a graph-theoretic approach,
a comparison between classical centrality measures and worst-case attacks
may be useful to get insight about \nodes that deserve higher attention.
	

	\numberwithin{lemma}{subsection}
	\numberwithin{cor}{subsection}
	\numberwithin{equation}{subsection}
	
	\appendix

\subsection{\titlecap{Useful lemmas}}\label{app:lemmas}

In this Appendix,
we report some standard facts in linear algebra that will be used in the following proofs.
\begin{lemma}\label{lem:derivative-trace}
	Let $\alpha\in\Real{}$ and $ A,B\in\Real{n\times n} $ differentiable functions of $\alpha$,
	then the derivative of $ \tr{A^\top B} $ is
	\begin{equation}\label{eq:derivative-trace}
		\dfrac{d\tr{A^\top B}}{d\alpha} = \tr{\dfrac{dA^\top}{d\alpha}B} + \tr{A^\top\dfrac{dB}{d\alpha}}.
	\end{equation}
\end{lemma}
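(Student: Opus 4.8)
The plan is to reduce the matrix identity to the ordinary scalar Leibniz rule by expanding the trace in coordinates. First I would write $\tr{A^\top B}$ as the Frobenius inner product $\sum_{i=1}^{n}\sum_{j=1}^{n} A_{ij}B_{ij}$; this follows because the $(i,i)$ entry of $A^\top B$ equals $\sum_{j} (A^\top)_{ij} B_{ji} = \sum_{j} A_{ji} B_{ji}$, and summing over $i$ and relabelling the dummy indices yields the symmetric double sum. Since each $A_{ij}$ and $B_{ij}$ is a scalar differentiable function of $\alpha$ and the sum is finite, I would differentiate term by term, applying the scalar product rule to each summand, $\frac{d}{d\alpha}(A_{ij}B_{ij}) = \frac{dA_{ij}}{d\alpha} B_{ij} + A_{ij}\frac{dB_{ij}}{d\alpha}$.

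Then I would reassemble the two resulting groups of terms. The first group, $\sum_{i,j}\frac{dA_{ij}}{d\alpha} B_{ij}$, is precisely $\tr{\frac{dA^\top}{d\alpha}B}$ once one observes that differentiation acts entrywise and hence commutes with transposition, so $\frac{dA^\top}{d\alpha} = \bigl(\frac{dA}{d\alpha}\bigr)^\top$; similarly the second group equals $\tr{A^\top\frac{dB}{d\alpha}}$. Adding the two gives exactly~\eqref{eq:derivative-trace}.

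The only point requiring any care — and it is a mild one — is the interchange of the derivative with the finite summation, which is immediate, together with the index bookkeeping that places the transposes correctly; there is no genuine obstacle, the statement being simply the Leibniz rule for the bilinear pairing $(A,B)\mapsto\tr{A^\top B}$.
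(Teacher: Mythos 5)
Your proof is correct and complete: expanding $\tr{A^\top B}$ as the Frobenius inner product $\sum_{i,j}A_{ij}B_{ij}$ and applying the scalar Leibniz rule termwise to the finite sum, then reassembling using the fact that entrywise differentiation commutes with transposition, establishes the identity rigorously. The paper itself states this lemma without proof, listing it among ``standard facts in linear algebra,'' so there is no in-paper argument to compare against; your derivation is the standard one and nothing is missing.
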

\begin{lemma}\label{lem:derivative-inverse}
	Let $\alpha\in\Real{}$ and $ A\in\Real{n\times n} $ invertible and differentiable function of $\alpha$,
	then the derivative of $ A\inv $ is 
	\begin{equation}\label{eq:derivative-inverse}
		\dfrac{dA\inv}{d\alpha} = -A\inv\dfrac{dA}{d\alpha}A\inv.
	\end{equation}
\end{lemma}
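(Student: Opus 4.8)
The plan is to differentiate the defining identity $A(\alpha)\,A\inv(\alpha)=I$ and solve for the unknown derivative. First I would establish that $A\inv$ is itself a differentiable function of $\alpha$, which is not literally contained in the hypothesis: by Cramer's rule $A\inv=(\det A)\inv\,\mathrm{adj}(A)$, the entries of the adjugate $\mathrm{adj}(A)$ and of the determinant $\det A$ are polynomials in the entries of $A$ --- which are differentiable in $\alpha$ by assumption --- and $\det A\neq 0$ because $A$ is invertible; hence each entry of $A\inv$ is a quotient of differentiable functions with nonvanishing denominator, and is therefore differentiable.

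With differentiability in hand, I would apply $\frac{d}{d\alpha}$ to both sides of $A\,A\inv=I$. The right-hand side is constant, so its derivative vanishes; the left-hand side is handled by the product rule for matrix-valued functions (entrywise this is just the ordinary product rule summed over the contracted index, exactly as in \cref{lem:derivative-trace}), which gives
\[
\frac{dA}{d\alpha}\,A\inv + A\,\frac{dA\inv}{d\alpha} = 0.
\]
Left-multiplying by $A\inv$ and rearranging yields $\dfrac{dA\inv}{d\alpha}=-A\inv\dfrac{dA}{d\alpha}A\inv$, which is the claimed formula.

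The only genuine subtlety is the differentiability step, and even that is mild; once it is settled the computation is a one-line application of the product rule, so I do not expect a real obstacle. The single point requiring care is the non-commutativity of matrix multiplication: the three factors in the final expression must appear in the stated order, with $\dfrac{dA}{d\alpha}$ sandwiched between the two copies of $A\inv$, rather than collected on one side.
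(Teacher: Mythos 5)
Your proof is correct and is the standard derivation of this identity (differentiate $AA\inv=I$ and solve), which is precisely the argument the paper implicitly relies on when listing this as a known fact without proof in its appendix of useful lemmas. Your extra care in justifying differentiability of $A\inv$ via Cramer's rule and in preserving the order of the non-commuting factors is appropriate and closes the only gaps one could raise.
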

\begin{lemma}\label{lem:eigendecomposition-inverse}
	Let $ A\in\Real{n\times n} $ invertible with eigenpair $ (\lambda,v) $,
	then $ A^{-1} $ has eigenpair  $ (\lambda^{-1},v) $.
\end{lemma}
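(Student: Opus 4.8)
The plan is to verify the claim directly from the definition of an eigenpair, using only invertibility of $A$. First I would record that an invertible matrix cannot have $0$ as an eigenvalue: if we had $Av = 0$ with $v \neq 0$, then $v$ would lie in the kernel of $A$, contradicting invertibility. Hence the given eigenvalue $\lambda$ satisfies $\lambda \neq 0$, so $\lambda^{-1}$ is well defined; this is the only point in the argument that genuinely uses the hypothesis on $A$.

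Next I would start from the eigenvalue equation $Av = \lambda v$ and left-multiply both sides by $A^{-1}$, which gives $v = \lambda\, A^{-1} v$. Since $\lambda \neq 0$ I may divide by it to obtain $A^{-1} v = \lambda^{-1} v$, and because $v \neq 0$ this is precisely the statement that $(\lambda^{-1}, v)$ is an eigenpair of $A^{-1}$, as claimed.

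There is no real obstacle here: the result is a one-line manipulation once the nonvanishing of $\lambda$ is noted. If desired, I would remark that the argument is symmetric in $A$ and $A^{-1}$, so it in fact shows that the eigenspaces of $A$ at $\lambda$ and of $A^{-1}$ at $\lambda^{-1}$ coincide, not merely that the particular vector $v$ is carried over; this is the form in which the lemma will be applied (e.g.\ to the matrix $I-(1-\lambda)W$ and its inverse appearing in~\eqref{eq:L}).
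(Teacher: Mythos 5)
Your proof is correct and is the standard argument: the paper states this lemma without proof as a known linear-algebra fact, and your derivation (noting $\lambda\neq 0$ from invertibility, then left-multiplying $Av=\lambda v$ by $A^{-1}$ and dividing by $\lambda$) is exactly the intended one-line justification.
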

\begin{cor}\label{cor:A-inv-diagonalizable}
	If $ A\in\Real{n\times n} $ is diagonalizable,
	then $ A $ and $ A\inv $ are simultaneously diagonalizable.
\end{cor}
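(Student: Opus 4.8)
The plan is to exhibit a single invertible matrix $P$ that diagonalizes both $A$ and $A\inv$ simultaneously, which is exactly what the claim demands. First I would record a standing observation: for $A\inv$ to make sense $A$ must be invertible, and since $A$ is diagonalizable its eigenvalues coincide with the diagonal entries of its diagonal form, so invertibility is equivalent to all eigenvalues being nonzero. I would fix this fact before proceeding, as it is what guarantees the reciprocal eigenvalues are well defined.

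Next, using diagonalizability of $A$, I would write $A = P D P\inv$ with $P$ invertible and $D = \mathrm{diag}(\lambda_1,\dots,\lambda_n)$, where each $\lambda_i\neq 0$ by the previous step. The columns of $P$ form a basis of eigenvectors of $A$. The key move is to observe that the \emph{same} $P$ works for $A\inv$: inverting both sides yields $A\inv = (P D P\inv)\inv = P D\inv P\inv$, and since $D$ is diagonal with nonzero entries, $D\inv = \mathrm{diag}(\lambda_1\inv,\dots,\lambda_n\inv)$ is again diagonal. Hence $P\inv A\inv P = D\inv$ is diagonal, so $P$ diagonalizes $A$ and $A\inv$ simultaneously.

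Alternatively, and more in the spirit of the preceding lemma, I would argue at the level of eigenvectors: \cref{lem:eigendecomposition-inverse} shows that every eigenpair $(\lambda_i, v_i)$ of $A$ gives an eigenpair $(\lambda_i\inv, v_i)$ of $A\inv$, so the eigenbasis $\{v_i\}$ that diagonalizes $A$ is also an eigenbasis for $A\inv$; collecting these vectors as the columns of $P$ gives the same conclusion. Either route closes the argument.

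There is essentially no hard step here — the corollary is a direct consequence of \cref{lem:eigendecomposition-inverse} together with the definition of diagonalizability. The only point requiring a moment of care is the implicit invertibility hypothesis, namely confirming that $D\inv$ is well defined because no eigenvalue of a diagonalizable invertible matrix can vanish; once this is noted, the identity $A\inv = P D\inv P\inv$ finishes the proof immediately.
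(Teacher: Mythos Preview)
Your argument is correct. The paper states this corollary as a standard linear-algebra fact without proof, so there is no paper proof to compare against; your route via $A = PDP\inv \Rightarrow A\inv = PD\inv P\inv$ (equivalently, via \cref{lem:eigendecomposition-inverse} applied to an eigenbasis) is precisely the expected one-line justification, and your remark that invertibility of $A$ is implicit in the statement is the only point worth flagging.
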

\begin{lemma}\label{lem:eigendecomposition-(I-A)}
	Let $ A\in\Real{n\times n} $ have eigenpair $ (\lambda,v) $,
	then $ (I-\alpha A) $ has eigenpair $ \left((1-\alpha\lambda),v\right) $.
\end{lemma}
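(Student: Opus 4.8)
The plan is to verify the claimed eigenvalue relation by a single direct computation from the definition of an eigenpair. Recall that $(\lambda,v)$ being an eigenpair of $A$ means $v\neq 0$ and $Av=\lambda v$. I would start from the vector $(I-\alpha A)v$, use linearity of matrix--vector multiplication to split it, and then substitute $Av=\lambda v$. This elementary fact is the building block that, together with \cref{lem:eigendecomposition-inverse}, delivers the spectrum of resolvent-type matrices such as $\bigl(I-(1-\lam)W\bigr)^{-1}$ appearing in the steady-state map~\eqref{eq:L}.

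Concretely, the computation is
\begin{equation*}
	(I-\alpha A)v = v - \alpha A v = v - \alpha\lambda v = (1-\alpha\lambda)v .
\end{equation*}
Since $v\neq 0$, this exhibits $v$ as an eigenvector of $I-\alpha A$ associated with the eigenvalue $1-\alpha\lambda$, which is exactly the claimed conclusion.

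I do not expect any obstacle: the statement follows solely from linearity and the eigenpair identity, and it requires neither invertibility of $A$ nor diagonalizability (in contrast to \cref{lem:eigendecomposition-inverse,cor:A-inv-diagonalizable}), so no case distinction on $\alpha$ or on the algebraic structure of $A$ is needed. Should a converse ever be required --- namely, that for $\alpha\neq 0$ every eigenpair of $I-\alpha A$ arises in this way --- the same identity rearranged as $Av=\alpha^{-1}\bigl(v-(I-\alpha A)v\bigr)$ supplies it; but the lemma as stated uses only the forward implication, which is what I would record.
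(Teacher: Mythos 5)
Your proof is correct: the one-line computation $(I-\alpha A)v = v-\alpha\lambda v = (1-\alpha\lambda)v$ together with $v\neq 0$ is exactly the canonical argument, and the paper itself records this lemma as a standard fact without writing out a proof. Your added remarks on not needing invertibility or diagonalizability, and on the converse for $\alpha\neq 0$, are accurate but not required.
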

	\review{

\subsection{Proof of~\cref{lem:P-monotonic}}\label{app:proof-P}

We use the implicit function theorem to prove that each diagonal element $P_{ii}$ of $P$ is strictly decreasing with $\lam$.
Let 
\begin{equation}\label{eq:gii}
	g_i(\lam,P_{ii}) \doteq P_{ii} - (1-\lam)^2\left[\Wreg P\Wreg^\top\right]_{ii} - (1-\lam)^2\tilde{Q}_{ii}
\end{equation}
where $\tilde{Q}=\Wmal Q\Wmal^\top$.
The implicit function theorem holds for the solutions of $g_i(\lam,P_{ii}) = 0$
with $\lam\in(0,1)$:
for $i\in\regSet$,
\begin{equation}\label{eq:g-partial-P}
	\dfrac{\partial g_i(\lam,P_{ii})}{\partial P_{ii}} = 1 - (1-\lam)^2W_{ii}^2 \neq 0 \quad \forall \lam \in (0,1).
\end{equation}
Making dependence on $\lam$ explicit and for $\lam\in(0,1)$,
we get
\begin{equation}\label{eq:P-partial-lam}
	\begin{aligned}
		\dfrac{\d{P_{ii}}(\lam)}{\d{\lam}} 	&= -\dfrac{\partial g_i(\lam,P_{ii}(\lam))}{\partial \lam}\left(\dfrac{\partial g_i(\lam,P_{ii}(\lam))}{\partial P_{ii}(\lam)}\right)^{-1}\\
											&= -\dfrac{2(1-\lam)\left(\left[\Wreg P\Wreg^\top\right]_{ii}+\tilde{Q}_{ii}\right)}{1 - (1-\lam)^2W_{ii}^2} < 0.
	\end{aligned}
\end{equation}
Finally,
from $\eregn(\lam) = \sum_{i\in\regSet}P_{ii}(\lam)$
and linearity of the derivative,
it follows that $\eregn(\lam)$ is decreasing.

For $\lam=1$,
we trivially get $P=0$ and thus $\eregn(1) = 0$.}

\subsection{Proof of~\cref{prop:cons-err-mal}}\label{app:proof-full-comp-vs-full-coll}

\review{In this and all following proofs,
the constant $\kappa$ in~\eqref{eq:cons-error-explicit-1} is neglected for the sake of simplicity.}

We first compute the consensus error with $ \lam = 1 $:
\begin{equation}\label{eq:FJ-err-mal-lamda-1}
	\begin{aligned}
		\review{\ereg(1)} &= \review{\eregv(1)} 
		= \tr{\Varn_{11}(I_R-\Creg)}\\
		&= \dfrac{R-1}{R}\sumreg{i}\varnode{i} - \dfrac{1}{R}\sumreg{i}\sum_{\substack{j\in\regSet\\j\neq i}}\varnode[i]{j}.
	\end{aligned}
\end{equation}

With $\lam=0$,
the average steady-state consensus value is determined by the biased observations of malicious \nodes,
\ie $ \xregbar=\thbarmal \doteq \frac{1}{M}\sum_{m\in\malSet}(\priornode{m} + v_m) $.
We have:
\begingroup
\allowdisplaybreaks
\begin{align*}\label{eq:cons-err-mal}
\review{	\eregv(0)} &= \Enorm{\onereg\thbarmal - \onereg\thbarreg}\\ 
	&= \dfrac{R}{M^2}\summal{m}d_m + \dfrac{R}{M^2}\summal{m}\left(\varnode{m}+\sum_{\substack{n\in\malSet\\n\neq m}}\varnode[m]{n}\right)\\
	&\hspace{5mm} +\dfrac{1}{R}\sumreg{i}\left(\varnode{i}+\sum_{\substack{j\in\regSet\\j\neq i}}\varnode[i]{j}\right) - \dfrac{2}{M}\sumreg{i}\summal{m}\varnode[i]{m}\\
	\review{\eregn(0)} &= \review{\tr{P(0)}}.\stepcounter{equation}\tag{\theequation}
\end{align*}
\endgroup
By comparing the expressions in~\eqref{eq:FJ-err-mal-lamda-1} and~\eqref{eq:cons-err-mal},
it follows that $\ereg(0)>\ereg(1)$ is equivalent to
\begin{multline}\label{eq:full-comp-better-full-comp-condition-proof}
	\review{\dfrac{M^2}{R}\tr{P(0)} +} \sum_{m\in\malSet} d_m > -\summal{m}\left(\varnode{m}+\sum_{\substack{n\in\malSet\\n\neq m}}\varnode[m]{n}\right)\\
	+ \dfrac{M^2}{R}\sumreg{i}\varnode{i}
	-\dfrac{2M^2}{R^2}\sumreg{i}\!\!\left(\!\!\varnode{i}+\sum_{\substack{j\in\regSet\\j\neq i}}\varnode[i]{j}\!\!\right)\! + \dfrac{2M}{R}\sumreg{i}\summal{m}\varnode[i]{m}
	,
\end{multline}
which leads to condition~\eqref{eq:full-comp-better-full-coll-condition}.

\subsection{Proof of~\cref{prop:opt-lam-nontrivial}}\label{app:proof-opt-lam-nontrivial}

From~\eqref{eq:cons-error-explicit-1},
we get
\begin{equation}\label{eq:cons-error-regular-derivative}
	\begin{aligned}
		\dfrac{\d{\ereg}(\lam)}{\d{\lam}} &= \review{\dfrac{\d{\eregv}(\lam)}{\d{\lam}} + \dfrac{\d{\eregn}(\lam)}{\d{\lam}}}\\
		&= \review{\kappa}\dfrac{1}{\lam}\tr{\Var L^\top\left(I-W^\top L^\top\right)\selreg^\top E} \review{+ \dfrac{\d{\eregn}(\lam)}{\d{\lam}}},
	\end{aligned}
\end{equation}
where~\cref{lem:derivative-trace,lem:derivative-inverse} were used \review{and $\kappa>0$}.

\subsubsection{Part one: $ {\lam^* < 1} $}
\review{From~\eqref{eq:P-partial-lam},
	$\frac{\d{\eregn(\lam)}}{\d{\lambda}}\big|_{\lam=1}=0$}
	and
\begin{equation}\label{eq:cons-error-regular-derivative-at-1}
	\dfrac{\d{\ereg}(\lam)}{\d{\lam}}\Big|_{\lam=1} = \kappa\tr{\Var \left(I-W^\top\right)\selreg^\top\left(\selreg - \Creg\selreg\right)}.
\end{equation}
The argument of the trace in~\eqref{eq:cons-error-regular-derivative-at-1} has expression
\begin{equation}\label{eq:cons-error-regular-derivative-at-1-arg}
	\left[\begin{array}{ c | c }
		A & 0 \\ 
		\hline
		\star & 0
	\end{array}\right], \qquad A\in\Real{R\times R}.
\end{equation}
\begin{description}[leftmargin=*]
	\item[\review{Condition C1.}] \review{If $ \Sigma $ is diagonal,
	the $ i $th diagonal element of $ A $ is
	\begin{equation}\label{eq:cons-error-regular-derivative-at-1-trace-elem-diagonal-sigma}
		A_{ii} = \varnode{i} \left(1 - \dfrac{1}{R} + \dfrac{1}{R}\sum_{j\in\regSet}W_{ji}\right) > 0.
	\end{equation}}
	 \item[\review{Condition C2.}] If $ W^o $ symmetric,
	the $ i $th diagonal element of $ A $ is
	\begin{multline}\label{eq:cons-error-regular-derivative-at-1-trace-elem}
		\hspace{-3mm}A_{ii} = \varnode{i} + \dfrac{1}{R}\sum_{m\in\malSet}\!\varnode[i]{m}\!\left(\!1-\hspace{-3mm}\sum_{m'\in\malSet\setminus\{m\}}\!\!\! W_{m'm}^o\!\right)
		- \dfrac{1}{R}\varnode{i}\hspace{-1mm}\sum_{m\in\malSet}\!W_{mi}^o \\
		- \sum_{\substack{j\in\regSet\\j\neq i}}\varnode[i]{j}W_{ij}
		- \dfrac{1}{R}\sum_{\substack{j\in\regSet\\j\neq i}}\varnode[i]{j}\sum_{m\in\malSet}W_{mj}^o - \sum_{m\in\malSet}\varnode[i]{m}W_{im}.
	\end{multline}
	It holds
	\begin{multline}\label{eq:cons-error-regular-derivative-at-1-upper-bound}
		\dfrac{1}{R}\varnode{i}\sum_{m\in\malSet}W_{mi}^o + \sum_{\substack{j\in\regSet\\j\neq i}}\varnode[i]{j}W_{ij}
		+\dfrac{1}{R}\sum_{\substack{j\in\regSet\\j\neq i}}\varnode[i]{j}\sum_{m\in\malSet}W_{mj}^o \\
		+ \sum_{m\in\malSet}\varnode[i]{m}W_{im} \le \max\left\lbrace\dfrac{1}{R}\varnode{i}+\varnode[i]{m^*},\varnode[i]{j^*}\right\rbrace,
	\end{multline}
	where $ j^*\doteq\argmax_{\substack{j\in\regSet\setminus\{i\}}}\varnode[i]{j} $ and $ m^*\doteq\argmax_{m\in\malSet}\varnode[i]{m} $.
	Inequality~\eqref{eq:cons-error-regular-derivative-at-1-upper-bound} can be split into the following two cases.
	\begin{description}
		\item[Case \boldmath$ \frac{1}{R}\varnode{i}+\varnode{im^*} \ge \varnode{ij^*} $:]
		\begin{multline}\label{eq:cons-error-regular-derivative-at-1-upper-bound-case-1}
			A_{ii} \ge \varnode{i} + \dfrac{1}{R}\varnode[i]{m^*} - \dfrac{1}{R}\varnode{i} - \varnode[i]{m^*} \\
			= \left(\varnode{i}-\varnode[i]{m^*}\right)\left(1-\dfrac{1}{R}\right) > 0.
		\end{multline}
		\item[Case \boldmath$ \frac{1}{R}\varnode{i}+\varnode{im^*} < \varnode{ij^*} $:]
		\begin{equation}\label{eq:cons-error-regular-derivative-at-1-upper-bound-case-2}
			\hspace{-3mm}A_{ii} \ge \varnode{i} - \varnode[i]{j^*} + \dfrac{1}{R}\sum_{m\in\malSet}\varnode[i]{m}\left(1-\!\!\sum_{m'\in\malSet\setminus\{m\}}W_{m'm}^o\right) > 0.
		\end{equation}
	\end{description}
	\review{The final inequalities in~\eqref{eq:cons-error-regular-derivative-at-1-upper-bound-case-1}--\eqref{eq:cons-error-regular-derivative-at-1-upper-bound-case-2}
	follow from $\Varn\succ0$ and the Gershgorin circle theorem
	that imply $\varnode{i}>\varnode[i]{j} \ \forall i,j\in\sensSet$.}
\end{description}
It follows that the derivative~\eqref{eq:cons-error-regular-derivative-at-1} is positive
and the consensus error~\eqref{eq:cons-error-regular} is increasing 
in a left neighborhood of $ 1 $.
By continuity of~\eqref{eq:cons-error-regular-derivative},
the minimum points satisfy $ \lam^* < 1 $. 

\subsubsection{Part two: $ {\lam^* > 0} $}
\review{From~\cref{lem:P-monotonic},
the error term $\eregn(\lam)$ has negative right derivative at $\lam=0$.}
By continuity of the derivative of \review{$\eregv(\lam)$},
we can compute the following limit: 
\begin{equation}\label{eq:cons-error-regular-derivative-limit}
	\begin{aligned}
		\lim_{\lam\rightarrow0^+}\review{\dfrac{\d{\eregv}(\lam)}{\d{\lam}}} 
		&= \tr{\Var\lim_{\lam\rightarrow0^+}\dfrac{\d{L}}{\d{\lam}}^\top\selreg^\top \lim_{\lam\rightarrow0^+}E}\\
		&= \tr{\Var\Gamma^\top\selreg^\top\left(\selreg\overline{W}-\Creg\selreg\right)}\\
		&= \tr{\Var\Gamma^\top\selreg^\top\big[\!-\Creg \,| \,C_{RM}\big]}\\
		&= \tr{-\Varn_{11}\Gamma_{1}^\top\Creg-\Varn_{12}\Gamma_{2}^\top\Creg+\right.\\
		&\hspace{3mm} \left.\Varn_{12}^\top\Gamma_{1}^\top C_{RM}+(\Varn_{22}+V)\Gamma_{2}^\top C_{RM}},
	\end{aligned}
\end{equation}
where the steady-state consensus matrix $ \overline{W} \doteq \lim_{\lam\rightarrow0^+}L $ 
%
has block partition (cf.~\cref{ass:mal-node-dynamics} for the value of $ \overline{W} $)
\begin{equation}\label{eq:Wbar-partition}
	\overline{W} = \left[\begin{array}{ c | c }
		0 & C_{RM} \\
		\hline
		0 & I_M
	\end{array}\right].
\end{equation}
Matrix $ \Gamma $ can be computed from the spectral decomposition of $ W $.
In particular, 
its elements are finite, 
$ \Gamma_{1} $ is nonnegative, 
and $ \Gamma_{2} $ is nonpositive (details in~\cref{app:L-derivative-limit}).
Putting together~\eqref{eq:cons-error-regular-derivative-limit} and~\cref{lem:P-monotonic},
the right derivative of $\ereg(\lam)$ at $\lam=0$ is negative if and only if
the following inequality holds,
\begin{multline}\label{eq:opt-lam-greater-zero-condition-proof}
	\review{-\dfrac{\d{\eregn}(0)}{\d{\lam}}} - \tr{V\Gamma_{2}^\top C_{RM}} > \tr{-\Varn_{11}\Gamma_{1}^\top\Creg-\right.\\
	\left.\Varn_{12}\Gamma_{2}^\top\Creg+\Varn_{12}^\top\Gamma_{1}^\top C_{RM}+\Varn_{22}\Gamma_{2}^\top C_{RM}},
\end{multline}
which coincides with~\eqref{eq:opt-lam-greater-zero-condition}.
If~\eqref{eq:opt-lam-greater-zero-condition-proof} holds,
$ \ereg(\lam) $ is strictly decreasing in a right neighborhood of $ \lam = 0 $
and $ \lam^* > 0 $.
\review{If $\Varn$ is diagonal,
then $\Varn_{12}=0$ and~\eqref{eq:opt-lam-greater-zero-condition-proof} is always satisfied.}

\subsection{\titlecap{computation of matrix $ \Gamma $}}\label{app:L-derivative-limit}

We now show how to derive $ \Gamma $ from $ W $
and discuss the sign of its elements.
For the sake of simplicity,
we assume that the nominal weight matrix $ W^o $
is symmetric,
which implies that both $W^o$ and $ W $ are diagonalizable. 
If $ W $ is not diagonalizable,
a similar derivation
(with more tedious but conceptually identical calculations)
can be carried out by considering the Jordan canonical form.
This is because a straightforward extension of~\cref{lem:eigendecomposition-inverse} shows that
$ W $ and $\Gamma$ share the same (chain of) generalized eigenvectors.


\myParagraph{Computation of \boldmath $ \Gamma $}
The derivative of $ L $ is (\cref{lem:derivative-inverse})
\begin{equation}\label{eq:L-inv-derivative}
	\dfrac{dL}{d\lam} = \tilde{L} -\lam \tilde{L}\dfrac{d\tilde{L}\inv}{d\lam}\tilde{L} = \tilde{L}-\lam \tilde{L}W\tilde{L},
\end{equation}
where $ \tilde{L}\doteq\left(I-(1-\lam)W\right)\inv $.
Let $ \lambda_W $ and $ v_W $ an eigenvalue of $ W $ and its associated eigenvector, respectively,
from Lemmas~\ref{lem:eigendecomposition-inverse}--\ref{lem:eigendecomposition-(I-A)} it follows that
$ \tilde{L} $ has eigenvalue $ \left(1-(1-\lam)\lambda_W\right)\inv $ with associated eigenvector $ v_W $.
Hence, straightforward computations yield
\begin{equation}\label{eq:eigendecomposition-derivative-L}
	\dfrac{dL}{d\lam} v_W 
						= \dfrac{1 - \left(1-(1-\lam)\lambda_W\right)\inv\lam\lambda_W}{\left(1-(1-\lam)\lambda_W\right)}v_W.
\end{equation}
In particular, the dominant eigenvector $ v_W = \one $ (associated with $ \lambda_W=1 $)
is in the kernel of $ \nicefrac{dL}{d\lam} $ for any $ \lam $.
As for the other eigenvectors, by letting $ \lam $ go to zero in~\eqref{eq:eigendecomposition-derivative-L}, one gets
\begin{equation}\label{eq:eigendecomposition-derivative-L-lam-0}
	\Gamma v_W = \left(1-\lambda_W\right)\inv v_W.
\end{equation}
Finally, the eigendecomposition of $ \Gamma $
is obtained from eigenvectors $ v_W $
and eigenvalues $ \left(1-\lambda_W\right)\inv $,
plus the kernel. 

\myParagraph{Sign of $ \Gamma_{1} $ and $ \Gamma_{2} $}
As regards $ \Gamma_{1} $,
note that the upper-left block in $ \overline{W} $ is identically zero,
and that $ L $ is a stochastic matrix for any value of $ \lam $:
hence, as $ \lam $ becomes larger than zero,
(some) elements in $ L_1 $ become positive,
and thus their derivative at $ \lam = 0^+ $ is also positive.

As for $ \Gamma_{2} $, 
define the following block partitions,
\begin{equation}\label{eq:block-matrix-L}
	\quad L = \left[\begin{array}{ c | c }
		L_{1} & L_{2} \\ 
		\hline
		0 & I_M
	\end{array}\right],
\end{equation}
with $ W_1,L_{1}\in\Real{R\times R} $ and $ W_2,L_{2}\in\Real{R\times M} $.
Then, it holds
\begin{equation}\label{eq:L_im-decreasing-1}
	\dfrac{dL}{d\lam} = \dfrac{1}{\lam}L\left(I-WL\right) = 
	\left[\begin{array}{ c | c }
		\star & -L_1W_{1} L_{2} - L_{1} \\ 
		\hline
		0 & 0
	\end{array}\right],
\end{equation}
which implies, for any $ \lam\in(0,1) $,
\begin{equation}\label{eq:L_im-decreasing-2}
	\dfrac{dL_{im}}{d\lam} \le 0, \, i\in\regSet, m\in\malSet.
\end{equation}
In particular, the limit of the derivative of element $ L_{im} $ at $ \lam = 0^+ $ is nonpositive 
in virtue of the theorem of sign permanence.

\subsection{Proof of~\cref{prop:error-increases-with-d}}\label{app:proof-error-vs-d}

\begin{description}[leftmargin=*]
	\item[\review{Dependence on \boldmath $V$.}] Note that $\eregn$ is independent of \review{$V$}.
	From~\eqref{eq:cons-error-explicit-1},
	\review{we highlight the contribution of $v$ to the error $ \ereg $ as follows}: 
	\begin{equation}\label{eq:cons-error-regular-partial-d}
		\review{\eregv = \tr{L_{2}VL_{2}^\top} + \kappa,}
	\end{equation}
	\review{where $\kappa$ does not depend on $V$ and we use the block partition
	\begin{equation}
		L = \left[\begin{array}{ c | c }
			L_1 & L_2 \\ 
			\hline
			0 & I
		\end{array}\right].
	\end{equation}
	The matrix $L_2$
	is positive,
	see~\cite{7577815} and discussion in~\autoref{sec:trade-off}).
	Then,
	if $V_1\succ V_2$,
	it follows that $L_{2}V_1L_{2}^\top \succ L_{2}V_2L_{2}^\top$
	and hence the trace in~\eqref{eq:cons-error-regular-partial-d} is strictly increasing with $V$.
	}
	\review{\item[Dependence on \boldmath $Q$.] Note that $\eregv$ is independent of $Q$.
	Let $P_1$ and $P_2$ denote the solutions of~\eqref{eq:P-lyapunov}
	with $Q=Q_1$ and $Q=Q_2$,
	respectively.
	If $Q_1 \succ Q_2$,
	then $\tilde{Q}_1 \succ \tilde{Q}_2$ and
	it is known that $P_1\succ P_2$,
	from which the claim follows.}
\end{description}

\subsection{Proof of~\cref{prop:opt-lam-vs-noise}}\label{app:proof-opt-lam-vs-noise}

\begin{description}[leftmargin=*]
	\item[\review{Dependence on \boldmath $V$.}]
	\review{In the following,
	we make the dependence of the error $\ereg$ on $d_m$ explicit.} 
	Let us compute the partial derivative of the error
	first w.r.t. $ \lam $ and then w.r.t. $ d_m $:
	\begin{equation}\label{eq:cons-error-derivative-lam-d}
		\hspace{-1mm}\dfrac{\partial^2\ereg(\lam,d_m)}{\partial d_m\partial\lam} = 
		\dfrac{1}{\lam}\tr{L\dfrac{\d{\Var(d_m)}}{\d{d_m}} L^\top\left(I-W^\top L^\top\right)\selreg^\top\selreg}.
	\end{equation}
	It holds
	\begin{gather}
		L\dfrac{\d{\Var(d_m)}}{\d{d_m}} = \left[\begin{array}{ c | c }
			0 & L_{2}S_m \\ 
			\hline
			0 & S_m
		\end{array}\right] \label{eq:other-block-matrices}\\
		M \doteq I-W^\top L^\top = \left[\begin{array}{ c | c }
			I_R - W_{1}^\top L_{1}^\top & 0 \\ 
			\hline
			-W_{2}^\top L_{1}^\top - L_{2}^\top & 0
		\end{array}\right] \label{eq:M-block-matrix} \\
		L^\top M\selreg^\top\selreg = \left[\begin{array}{ c | c }
			\star & 0 \\ 
			\hline
			-L_{2}^\top W_{1}^\top L_{1}^\top - W_{2}^\top L_{1}^\top & 0
		\end{array}\right]
	\end{gather}
	and the argument of the trace in~\eqref{eq:cons-error-derivative-lam-d} is
	\begin{equation}\label{eq:cons-error-derivative-lam-d-argument}
		\left[\begin{array}{ c | c }
			-L_{2}S_mL_{2}^\top W_{1}^\top L_{1}^\top - L_{2}S_mW_{2}^\top L_{1}^\top & 0 \\ 
			\hline
			\star & 0
		\end{array}\right]
	\end{equation}
	whose upper-left block is a negative matrix for all $ \lam\in(0,1) $,
	and is the zero matrix for $ \lam = 1 $.
	Hence, the derivative of the consensus error w.r.t. $ \lam $~\eqref{eq:cons-error-regular-derivative}
	is strictly decreasing with $ d_m $ for any $ \lam\in(0,1) $.
	By continuity of~\eqref{eq:cons-error-regular-derivative},
	the minimum points of $ \ereg(\lam) $ are strictly increasing with $ d_m $.
	\review{\item[Dependence on \boldmath $Q$.]
		Note that $\eregv$ is independent of $Q$.
		We consider the derivative of $\eregn$ w.r.t. $ \lam $:
		\begin{equation}\label{eq:der-eregn}
			\dfrac{\d{\eregn}(\lam)}{\d{\lam}} = -\sum_{i\in\regSet}\dfrac{2(1-\lam)\left([\Wreg P\Wreg^\top]_{ii}+\tilde{Q}_{ii}\right)}{1 - (1-\lam)^2W_{ii}^2}.
		\end{equation}
	Let $Q_1\succ Q_2$,
	then it holds $P_1 \succ P_2$,
	which implies $[\Wreg P_1\Wreg^\top]_{ii}> [\Wreg P_2\Wreg^\top]_{ii}$ $\forall i\in\regSet$.
	Further,
	it holds $\tilde{Q}_1 \succ \tilde{Q}_2$ and $[\tilde{Q}_1]_{ii} > [\tilde{Q}_2]_{ii}$ $\forall i\in\regSet$.
	By combining such two facts,
	we conclude that~\eqref{eq:der-eregn} is strictly decreasing.
	The statement follows by the same argument of the case above.}
\end{description}

\subsection{Proof of~\cref{prop:opt-lam-noise-inf}}\label{app:proof-opt-lam-noise-inf}

\begin{description}[leftmargin=*]
	\item[\review{Dependence on \boldmath $V$.}]
	We expand~\eqref{eq:cons-error-regular-derivative} to highlight $ d_m $:
	\begin{equation}\label{eq:cons-error-regular-derivative-dm}
		\begin{aligned}
			\dfrac{\d{\ereg}(\lam)}{\d{\lam}} = \dfrac{N_{mm}}{\lam}d_{m} + \review{\kappa},
		\end{aligned}
	\end{equation}
	where $ N $ is the nonpositive matrix given by
	\begin{equation}\label{eq:N}
		N = -\left(L_{2}^\top W_{1}^\top + W_{2}^\top \right) L_{1}^\top L_{2}
	\end{equation}
	\review{and $ \kappa $ does not depend on $ d_m $}.
	Note that $ N_{mm} \neq 0 $ because the opposite implies that the $ m $th malicious \node has no interactions with regular \nodes.
	It follows that for any $ \lam < 1 $ there exists $ d_m \ge 0 $ such that~\eqref{eq:cons-error-regular-derivative-dm} is negative,
	which is given by
	the following inequality:
	\begin{equation}\label{eq:cons-error-regular-derivative-dm-negative}
		d_m > -\dfrac{\kappa\lam}{N_{mm}} - \sum_{\substack{m'\in\malSet\\m'\neq m}}\dfrac{N_{m'm'}}{N_{mm}}d_{m'}.
	\end{equation}
	The claim follows by combining~\eqref{eq:cons-error-regular-derivative-dm-negative} with~\cref{prop:opt-lam-vs-noise}.
	\review{\item[Dependence on \boldmath $Q$.]
		We use~\eqref{eq:P-partial-lam} to highlight $ q_{m} $ in~\eqref{eq:cons-error-regular-derivative}:
		\begin{equation}\label{eq:cons-error-regular-derivative-Q}
			\dfrac{\d{\ereg}(\lam)}{\d{\lam}} = -\sum_{i\in\regSet}\dfrac{2(1-\lam)\left([\Wreg P\Wreg^\top]_{ii}^2+q_{m}\tilde{Q}_{im}^2\right)}{1 - (1-\lam)^2W_{ii}^2}  + \kappa,
		\end{equation}
		where $\kappa$ does not depend on $q_{m}$.
		Also,
		$P$ is increasing with $q_{m}$ and thus $[\Wreg P_1\Wreg^\top]_{ii}$ also is.
		Hence,
		for any $\lam<1$,
		there exists $q_{m}\ge0$ such that~\eqref{eq:cons-error-regular-derivative-Q} is negative,
		which is given by the following inequality:
		\begin{equation}\label{eq:cons-error-regular-derivative-Q-negative}
			\sum_{i\in\regSet}\dfrac{2(1-\lam)\left([\Wreg P\Wreg^\top]_{ii}^2+q_{m}\tilde{Q}_{im}^2\right)}{1 - (1-\lam)^2W_{ii}^2} > \kappa.
		\end{equation}
		The claim follows by combining~\eqref{eq:cons-error-regular-derivative-Q-negative} with~\cref{prop:opt-lam-vs-noise}.
	}
\end{description}
	

\begin{IEEEbiography}[{\includegraphics[width=1in,height=1.25in,clip,keepaspectratio]{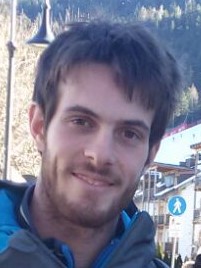}}]{Luca Ballotta}
	received the Master's degree in Automation Engineering and the Ph.D. degree in Information Engineering from the University of Padova, Italy, in 2019 and 2023, respectively.
He is currently a research fellow at the University of Padova, Department of Information Engineering.
He was Visiting Student at the Massachusetts Institute of Technology in 2020 and 2022.
He was awarded with the Young Author Prize at the 2020 IFAC World Congress.
His research interests include multi-agent systems and networked control systems under resource constraints, 
resilient distributed optimization,
and learning-based safe control.

\end{IEEEbiography}

\begin{IEEEbiography}[{\includegraphics[width=1in,height=1.25in,clip,keepaspectratio]{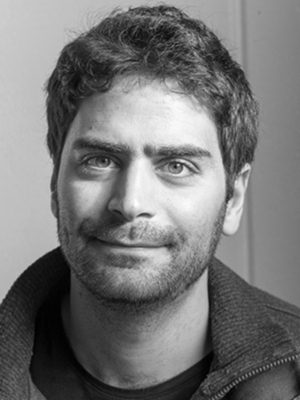}}]{Giacomo Como}
	(Member, IEEE) received the B.Sc., M.S., and Ph.D. degrees in 
applied mathematics from Politecnico di Torino, Turin, Italy, in 2002, 2004, and 2008,
respectively. He is currently a Professor with the Department of Mathematical Sciences, Politecnico 
di Torino. He is also a Senior Lecturer with the Automatic Control Department, Lund University, Lund, Sweden. He was a Visiting Assistant in research with Yale University, New
Haven, CT, USA, in 2006–2007 and a Postdoctoral Associate with the Laboratory for
Information and Decision Systems, Massachusetts Institute of Technology, Cambridge, 
MA, USA, from 2008 to 2011. His research interests include dynamics, information, and 
control in network systems with applications to cyberphysical systems, infrastructure 
networks, and social and economic networks. Dr. Como is currently a Senior Editor for  IEEE TRANSACTIONS ON 
CONTROL OF NETWORK SYSTEMS, an Associate Editor for Automatica and the Chair of the IEEE-CSS Technical Committee on Networks and Communications. He was an Associate
Editor for IEEE TRANSACTIONS ON NETWORK SCIENCE AND ENGINEERING (2015–2021) and IEEE TRANSACTIONS ON 
CONTROL OF NETWORK SYSTEMS (2016–2022). He was the IPC Chair of the IFAC Workshop NecSys’15 and a
Semiplenary Speaker at the International Symposium MTNS’16. He was the recipient of the 2015 George S. Axelby Outstanding Paper Award. 
\end{IEEEbiography}

\begin{IEEEbiography}[{\includegraphics[width=1in,height=1.25in,clip,keepaspectratio]{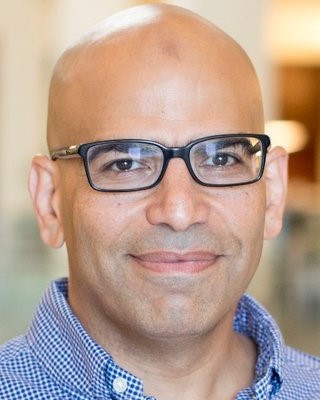}}]{Jeff S. Shamma}
	is the Department Head \& Dobrovolny Chair, Industrial and Enterprise
Systems Engineering, University of Illinois Urbana-Champaign. He is the former Director of the Center
of Excellence for NEOM Research and a Professor
of Electrical Engineering at King Abdullah University of Science and Technology (KAUST). Before
joining KAUST, Shamma was Julian T. Hightower
Chair in Systems \& Control in the School of Electrical and Computer Engineering at Georgia Tech,
and has held faculty positions at the University of
Minnesota, The University of Texas at Austin, and the University of California,
Los Angeles. Shamma received a Ph.D. in systems science and engineering
from MIT in 1988. He is the recipient of an NSF Young Investigator Award,
the American Automatic Control Council Donald P. Eckman Award, and the
Mohammed Dahleh Award, and he is a Fellow of the IEEE and of the IFAC
(International Federation of Automatic Control). 
Shamma is the Editor-in-Chief for the IEEE Transactions on Control of Network Systems

\end{IEEEbiography}

\begin{IEEEbiography}[{\includegraphics[width=1in,height=1.25in,clip,keepaspectratio]{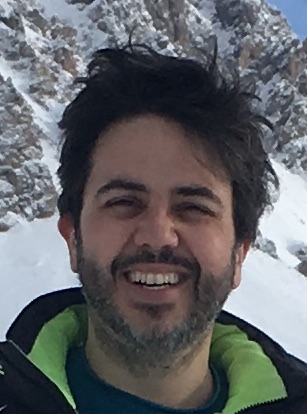}}]{Luca Schenato}
	received the Dr. Eng. degree in electrical engineering from the University of Padova in 1999 and the Ph.D. degree in Electrical Engineering and Computer Sciences from the UC Berkeley, in 2003. He held a post-doctoral position in 2004 and a visiting professor position in 2013-2014 at U.C. Berkeley. Currently he is Full Professor at the Information Engineering Department at the University of Padova. His interests include networked control systems, multi-agent systems, wireless sensor networks, smart grids and cooperative robotics. Luca Schenato has been awarded the 2004 Researchers Mobility Fellowship by the Italian Ministry of Education, University and Research (MIUR), the 2006 Eli Jury Award in U.C. Berkeley and the EUCA European Control Award in 2014, and IEEE Fellow in 2017. He served as Associate Editor for IEEE Trans. on Automatic Control from 2010 to 2014 and he is he is currently Senior Editor for IEEE Trans. on Control of Network Systems and Associate Editor for Automatica.

\end{IEEEbiography}
\end{document}